\DeclareMathOperator*{\esssup}{ess\,sup}
\DeclareMathOperator*{\essinf}{ess\,inf}
\DeclareMathOperator*{\argmax}{arg\,max}
\DeclareMathOperator*{\argmin}{arg\,min}
\DeclareMathOperator*{\Var}{Var}
\DeclareMathOperator*{\Cov}{Cov}
\crefname{equation}{\hspace{-0.4em}}{\hspace{-0.4em}}
\newtheorem{theorem}{Theorem}[section]
\newtheorem{lemma}[theorem]{Lemma}
\newtheorem{proposition}[theorem]{Proposition}
\newtheorem{remark}[theorem]{Remark}
\newtheorem{definition}[theorem]{Definition}
\newtheorem{assumption}[theorem]{Assumption}
\newtheorem{example}[theorem]{Example}
\renewenvironment{proof}{\noindent {\bf Proof.}}{\hfill $\Box$}
\title{Strictly monotone mean-variance preferences with applications to portfolio selection}
\author[1]{Yike Wang}
\author[2]{Yusha Chen}
\author[3]{Jingzhen Liu}
\author[4]{Zhenyu Cui}
\affil[1]{School of Finance, Chongqing Technology and Business University, Chongqing 400067, China.}
\affil[2]{School of Finance, Southwestern University of Finance and Economics, Chengdu 611130, China.}
\affil[3]{China Institute of Actuarial Science, Central University of Finance and Economics, Beijing 100081, China.}
\affil[4]{School of Business, Stevens Institute of Technology, Hoboken, NJ 07030, USA}
\begin{document}
\maketitle

\begin{abstract}
The monotone mean-variance (MMV) preference proposed by Maccheroni, et al. (Math. Finance 19(3): 487-521, 2009) fails to differentiate strictly dominant payoffs, which may cause inconsistency in portfolio decision-making.
This paper introduces a broader class of strictly monotone mean-variance (SMMV) preferences and demonstrates its applications to portfolio selection problems.
For the single-period portfolio problem under the SMMV preference, we derive the gradient condition for the optimal strategy, and investigate its association with the optimal mean-variance (MV) static strategy. 
We reduce the problem to solving a set of linear equations by analyzing the saddle point of some minimax problem. 
And results show that the optimal SMMV, MMV and MV strategies differ significantly in the single-period problem.
Furthermore, we conduct numerical experiments and compare our results with those of Maccheroni, et al. (Math. Finance 19(3): 487-521, 2009). 
The findings indicate that our SMMV preferences provide a more rational basis for assessing given prospects.
For the continuous-time portfolio problem under the SMMV preference, we consider continuous price processes with random coefficients, 
and establish a novel approach based on a general convex duality analysis to derive the optimal strategy.
Interestingly, we find that the optimal strategies for SMMV, MMV and MV preferences coincide under a certain condition, and provide a classical microeconomic interpretation for this condition.
We also characterize the optimal SMMV portfolio strategies relying on stochastic control techniques to facilitate potential extensions and refinements in future research.
\end{abstract}

\noindent {\bf Keywords:} monotone mean-variance preference, portfolio selection, Fenchel conjugate, stochastic Hamilton-Jacobi-Bellman-Isaacs equation, martingale convex duality method
\vspace{3mm}

\noindent {\bf AMS2010 classification:} Primary: 91G10, 49N10; Secondary: 91B05, 49N90
\vspace{3mm}


\section{Introduction}


Modern mean-variance (MV) analysis pioneered by \cite{Markowitz-1952} has been widely applied across various areas of mathematical finance theory and practice for decades due to their analytical tractability and intuitive appeal.
However, MV preferences suffer from a critical flaw---the lack of monotonicity. 
An investor may prefer a dominated payoff simply because the increase in variance outweighs the increase in mean, violating basic rationality principles. 
To illustrate this drawback, \cite{Maccheroni-Marinacci-Rustichini-Taboga-2009} gives a simple example, see \Cref{tab: example given by Maccheroni et al}. 
That is, although $g$ is evidently the preferable choice for any rational agent, as it clearly offers a statewise dominant payoff compared with $f$; an agent with MV preferences will choose the prospect $f$ instead of $g$. 
To address the issue caused by the lack of monotonicity, \cite{Maccheroni-Marinacci-Rustichini-Taboga-2009} proposes monotone mean-variance (MMV) preferences, which modify MV preferences via variational representation. 
As a result, using the MMV preference to evaluate a random variable $f$ is equivalent to using the corresponding MV preference to evaluate some truncated random variable $f \wedge {\lambda }_{f}$. 
And in recent years, MMV preferences have attracted considerable attention and have been used as the objective functional in dynamic portfolio selection, e.g., 
\cite{Trybula-Zawisza-2019,Cerny-2020,Strub-Li-2020,Shen-Zou-2022,Li-Guo-Tian-2023,Li-Guo-2021,Li-Liang-Pang-2022,Li-Liang-Pang-2023,Hu-Shi-Xu-2023,Cerny-Ruf-Schweizer-2025}.

\begin{table}[htbp!]
    \centering
    \begin{tabular}{lcccc}
    \hline
    State of nature & ${s}_{1}$ & ${s}_{2}$ & ${s}_{3}$ & ${s}_{4}$ \\
    \hline
    Probabilities & $0.25$ & $0.25$ & $0.25$ & $0.25$ \\
    Payoff of $f$ & $1$ & $2$ & $3$ & $4$ \\
    Payoff of $g$ & $1$ & $2$ & $3$ & $5$ \\
    \hline
    \end{tabular}
    \caption{Example given by \protect\cite{Maccheroni-Marinacci-Rustichini-Taboga-2009} with $\mathbb{E} [f] - \Var [f] > \mathbb{E} [g] - \Var [g]$.}
    \label{tab: example given by Maccheroni et al}
\end{table}

While an MMV preference restores monotonicity in a minimal sense, it still fails to distinguish between strictly dominant payoffs. 
Consequently, the MMV preference remains non-strictly monotone: two distinct prospects can be regarded as equally desirable even when one strictly dominates the other in every state. This subtle deficiency, though often overlooked, 
undermines the consistency of decision-making in portfolio selection problem.
Let us return to the aforementioned example given by \cite{Maccheroni-Marinacci-Rustichini-Taboga-2009}; see also \Cref{tab: example given by Maccheroni et al}.
Applying \cite[Theorem B.1]{Maccheroni-Marinacci-Rustichini-Taboga-2009} that gives the truncation level $( {\lambda }_{f}, {\lambda }_{g} )$ corresponding to $( f, g )$, 
one can find that the selection problem is reduced to choosing between $f \wedge 2.5$ and $g \wedge 2.5$ with a MV preference.
Obviously, $f \wedge 2.5 = g \wedge 2.5$ for every state of nature, and agents with the given MMV preferences could freely choose either of the two.   
However, if there are plenty of rational agents faced with this problem, or a rational agent is supposed to address this problem many times, the consistent choice will always be $g$! 
In other words, MMV preferences do not stand up repeated tests.
Building upon our example in \Cref{exp: non-monotonicity of MV}, we demonstrate in \Cref{exp: MMV is not strictly monotonic} that outcomes exhibiting statewise dominance may still be indistinguishable to agents with MMV preferences. 

To deal with this concern related to MMV preferences, we propose an analytically manageable modification that extends them to a class of strictly monotone mean-variance (SMMV) preferences, drawing inspiration from the celebrated envelope theorem. 
This modification includes the MMV preferences as a special case. 
It preserves tractability while reconciling the MV framework with the axioms of rational choice.
As shown in \Cref{thm: strict monotonicity of SMMV}, our SMMV preferences yield significantly different evaluations for any pair of non-identical prospects where one statewise dominates the other. 
In the example of \cite{Maccheroni-Marinacci-Rustichini-Taboga-2009} (see also \Cref{tab: example given by Maccheroni et al}), our SMMV preferences indicate that $g$ is clearly favored over $f$ (see \Cref{exp: SMMV}).
Moreover, in the continuous-time portfolio problem with our SMMV preferences, the optimal dynamic portfolio strategies for SMMV, MMV and MV are not necessarily the same, 
even when the underlying dynamic model is conventional and simple (i.e., without discontinuities).
The consistency among the optimal SMMV, MMV and MV portfolio strategies depends on a random variable, denoted by $\zeta$, which parameterizes the degree of strict monotonicity in the preference.

The main contributions of this paper are as follows.
\emph{Firstly}, we propose a class of preferences, formulated through a variational representation, and name them ``SMMV preferences'' 
(see \cref{eq: definition of SMMV :eq} for the definition and \Cref{thm: strict monotonicity of SMMV} for the strict monotonicity). 
And then we analyze their properties to facilitate comparison with \cite{Maccheroni-Marinacci-Rustichini-Taboga-2009} and establish several equivalent expressions.
In particular, a SMMV preference can be represented in terms of a truncated quadratic utility functional (see \Cref{thm: Gateaux differentiability and equivalent expressions of SMMV}), 
or alternatively, as the minimum/maximum of some class of MV utility functionals (see \Cref{thm: minimality and maximality of SMMV}).
Notably, this maximality implies that our SMMV preferences also extend the alternative construction of MMV preference reported in \cite{Cerny-2020}, where the monotone hull of MV preference is employed.
We provide a commutative diagram analogous to \cite[eqn.(7)]{Cerny-2020} to graphically show this extension (see \Cref{rem: Supremal convolution characterization of SMMV}).

\emph{Secondly}, we demonstrate that investment decisions under SMMV preferences can differ meaningfully from those under MMV and MV preferences in the single-period static portfolio problem. 
The discrepancy between them can be succinctly characterized by the Lagrange multiplier.
Moreover, we reduce the SMMV portfolio problem to solving a set of linear equations, 
offering a significantly simpler algorithm compared to the approach that considers the first-order derivative conditions used in \cite{Maccheroni-Marinacci-Rustichini-Taboga-2009}.
Leveraging this simplification, we carry out some numerical experiments to illustrate the single-period static portfolio selection under SMMV preferences.

\emph{Thirdly}, in continuous times, we apply a general convex duality framework (see \Cref{app: Convex duality analysis to the MV problem}) to derive the optimal strategy,
and find that the optimal strategies for SMMV and MV preferences coincide under \Cref{ass: SMMV-MV comparison}.
This reproduces and generalizes the identity of the optimal MMV and MV preferences as shown in, e.g., \cite{Trybula-Zawisza-2019,Strub-Li-2020}.
To facilitate comparison with existing studies, we also use classical methods to characterize the optimal SMMV portfolio strategies, 
including the dynamic programming for differential games, the martingale convex duality analysis for a minimax relationship, and the embedding method (proposed by \cite{Li-Ng-2000}) for multi-stage optimization.

The rest of this paper is organized as follows.
In \Cref{sec: Strictly monotone mean-variance preference}, we introduce the definition and key properties of SMMV preferences.  
To showcase their application, we study the single-period portfolio selection problem under SMMV preferences in \Cref{sec: Static portfolio selection}, 
followed by the continuous-time portfolio selection problem in \Cref{sec: Continuous-time dynamic portfolio management}.
Finally, \Cref{sec: Concluding remark} presents a brief concluding remark. 
A convex duality analysis to the MV problem is given in \Cref{app: Convex duality analysis to the MV problem}. 
And the proofs of lemmas, theorems and propositions for this work are collected in \Cref{app: Proof}.

\section{Strictly monotone mean-variance preferences}
\label{sec: Strictly monotone mean-variance preference}

Let $( \Omega, \mathcal{F}, \mathbb{P} )$ be a complete probability space, where $\mathcal{F}$ is a non-trivial $\sigma $-algebra (i.e. $\{ \varnothing, \Omega \} \subsetneq \mathcal{F}$).
$\mathbb{E} [ \cdot ]$ and $\Var [ \cdot ]$ represent the expectation and variance operator under $\mathbb{P}$, respectively.
And $\mathbb{L}^{2} ( \Omega )$ is the Banach space of all $\mathcal{F}$-measurable square-integrable random variables on $( \Omega, \mathcal{F}, \mathbb{P} )$,
equipped with the norm $\| \cdot \|_{ \mathbb{L}^{2} ( \Omega ) } := ( \mathbb{E} [ | \cdot |^{2} ] )^{ \frac{1}{2} }$.
In addition, we define the subsets $\mathbb{L}^{2}_{+} ( \Omega ) = \{ f \in \mathbb{L}^{2} ( \Omega ): f \ge 0, ~ \mathbb{P}-a.s. \}$, 
$\mathbb{L}^{2}_{ \zeta + } ( \Omega ) := \{ f \in \mathbb{L}^{2} ( \Omega ): f \ge \zeta, ~ \mathbb{P}-a.s. \}$ for $\zeta \in \mathbb{L}^{2}_{+} ( \Omega )$, and
\begin{equation*}
\mathbb{L}^{2}_{0} ( \Omega ) = \{ f \in \mathbb{L}^{2} ( \Omega ): 0 \le f \le 1, ~ \mathbb{P}-a.s.,~ \mathbb{E} [f] < 1 \}.
\end{equation*}

\subsection{MMV preference revisited: lack of strict monotonicity}
\label{subsec: MMV preference revisited}

For the conventional MV objective functional ${U}_{\theta }: \mathbb{L}^{2} ( \Omega ) \to \mathbb{R}$ given by ${U}_{\theta } (f) := \mathbb{E} [f] - \frac{\theta }{2} \Var [f]$,
where the preassigned constant $\theta > 0$ stands for the risk aversion to variance, we have the following estimate:
\begin{equation*}
\big| {U}_{\theta } ( f + g ) - {U}_{\theta } (f) - \mathbb{E} \big[ \big( 1 - \theta ( f - \mathbb{E} [f] ) \big) g \big] \big| = \frac{\theta }{2} \Var [g] \le o ( \| g \|_{ \mathbb{L}^{2} ( \Omega ) } ),
\quad \forall f,g \in \mathbb{L}^{2} ( \Omega ),
\end{equation*} 
which implies that ${U}_{\theta }$ is Fr\'echet differentiable and hence G\^ateaux differentiable.
Hereafter, with a slight abuse of notation, we denote by $d {U}_{\theta } (f) := 1 - \theta ( f - \mathbb{E} [f] )$ the G\^ateaux derivative of $d {U}_{\theta }$ at $f$, given that
\begin{equation*}
d {U}_{\theta } (f) (g) := \lim_{ \varepsilon \downarrow 0 } \frac{ {U}_{\theta } ( f + \varepsilon g ) - {U}_{\theta } (f) }{\varepsilon } = \mathbb{E} \big[ \big( 1 - \theta ( f - \mathbb{E} [f] ) \big) g \big]
\end{equation*}
provides a continuous linear functional $d {U}_{\theta } (f) ( \cdot )$ on $\mathbb{L}^{2} ( \Omega )$.
Moreover, ${U}_{\theta }$ is G\^ateaux differentiable at $f$ if and only if the superdifferential 
$\partial {U}_{\theta } (f) := \{ Y \in \mathbb{L}^{2} ( \Omega ): {U}_{\theta } (g) \le {U}_{\theta } (f) + \mathbb{E} [ Y ( g-f ) ], ~ \forall g \in \mathbb{L}^{2} ( \Omega ) \}$ is a singleton,
according to \cite[Proposition 1.8, p. 5]{Phelps-1993}.
Then, $\partial {U}_{\theta } (f) = \{ d {U}_{\theta } (f) \}$ follows from the concavity of ${U}_{\theta }$.

In \cite{Maccheroni-Marinacci-Rustichini-Taboga-2009}, the convex closed subset of $\mathbb{L}^{2} ( \Omega )$ denoted by
\begin{align*}
\mathcal{G}_{\theta } & := \{ f \in \mathbb{L}^{2} ( \Omega ): \partial {U}_{\theta } (f) \cap \mathbb{L}^{2}_{+} ( \Omega ) \ne \varnothing \} \\
                      &  = \{ f \in \mathbb{L}^{2} ( \Omega ): d {U}_{\theta } (f) \in \mathbb{L}^{2}_{+} ( \Omega ) \} \\
                      &  = \Big\{ f \in \mathbb{L}^{2} ( \Omega ): f - \mathbb{E} [f] \le \frac{1}{\theta }, \ \mathbb{P}-a.s. \Big\}
\end{align*}
is named ``the domain of monotonicity'' of ${U}_{\theta }$.
The word ``monotonicity'' arises from the phenomenon that ${U}_{\theta } (f) \le {U}_{\theta } (g)$ for any $f,g \in \mathcal{G}_{\theta }$ with $f \le g$, $\mathbb{P}$-a.s.
In comparison, for any $f \notin \mathcal{G}_{\theta }$, there exists $g \in \mathbb{L}^{2} ( \Omega )$ that is $\varepsilon $-close to $f$ such that $g > f$ but ${U}_{\theta } (f) > {U}_{\theta } (g)$;
see \cite[Lemma 2.1]{Maccheroni-Marinacci-Rustichini-Taboga-2009}.
This exactly shows the drawback of conventional MV objective functionals. In particular, the conventional MV preferences contradict the statewise dominance in such occasions.
In addition, the following outcomes of a single fair coin toss with 50/50 chance of heads or tails, can intuitively illustrate the limitation of MV preferences with respect to monotonicity.

\begin{example}\label{exp: non-monotonicity of MV}
We firstly consider $h$ with $\mathbb{E} [h] = 0$ on the boundary of $\mathcal{G}_{\theta }$.
A positive (resp. negative) perturbation added to $h$ produces $g \notin \mathcal{G}_{\theta }$ (resp. $f \in \mathcal{G}_{\theta }$).
Obviously, for any $\varepsilon > 0$, one obtains ${U}_{\theta } (g) < {U}_{\theta } (f)$. However, $g \notin \mathcal{G}_{\theta }$ statewise dominates $f \in \mathcal{G}_{\theta }$, indicating that the MV criterion is irrational in this case.

\begin{table}[htbp!]
    \centering
    \renewcommand{\arraystretch}{1.3}
    \begin{tabular}{ccccc}
    \hline
             & \multicolumn{2}{c}{Payoffs} & Positive deviation & MV functional values \\
               \cmidrule(r){2-3} 
    Prospect & Heads & Tails & $\esssup \{ \cdot \} - \mathbb{E} [ \cdot ]$ 
    & ${U}_{\theta } ( \cdot ) = \mathbb{E} [\cdot] - \frac{\theta }{2} \Var [\cdot]$ \\
    \hline
    $h$ & $\frac{1}{\theta }$ & $- \frac{1}{\theta }$ 
    & $\frac{1}{\theta } $ 
    & $- \frac{1}{ 2 \theta }$ \\
    $g$ & $\frac{1}{\theta } + \varepsilon $ & $ - \frac{1}{\theta }$ 
    & $\frac{1}{\theta } + \frac{\varepsilon }{2}$ 
    & $- \frac{1}{ 2 \theta } - \frac{\theta }{8} {\varepsilon }^{2}$ \\
    $f$ & $\frac{1}{\theta } - \frac{\theta }{9} {\varepsilon }^{2}$ & $- \frac{1}{\theta } - \frac{\theta }{9} {\varepsilon }^{2}$ 
    & $\frac{1}{\theta }$ 
    & $- \frac{1}{ 2 \theta } - \frac{\theta }{9} {\varepsilon }^{2}$ \\
    \hline
    \end{tabular}
    \caption{The outcomes of a single fair coin toss (50/50 heads or tails).}
    \label{tab: strictly statewise dominance violates general MV preference}
\end{table} 
\end{example}

To tackle the monotonicity problem of MV preferences, the Fenchel conjugate stands out as a key tool.
Let us proceed with the Fenchel conjugate of ${U}_{\theta }$, defined as follows:
\begin{equation}\label{eq: Fenchel conjugate of MV :eq}
{U}_{\theta }^{*} (Y) := \inf_{ f \in \mathbb{L}^{2} ( \Omega ) } \{ \mathbb{E} [ Y f ] - {U}_{\theta } (f) \}, \quad Y \in \mathbb{L}^{2} ( \Omega ).
\end{equation}
If there exists $c \in \mathbb{R}$ such that $f = c$, $\mathbb{P}$-a.s., then $\mathbb{E} [ Y f ] - {U}_{\theta } (f) = c ( \mathbb{E} [Y] - 1 )$, and hence
\begin{equation*}
{U}_{\theta }^{*} (Y) \le \inf_{ c \in \mathbb{R} } c ( \mathbb{E} [Y] - 1 )
= \left\{ \begin{aligned} & 0, && if \ \mathbb{E} [Y] = 1; \\
                          & - \infty, && otherwise. \end{aligned} \right.
\end{equation*}
This implies that ${U}_{\theta }^{*} (Y) = - \infty $ if $\mathbb{E} [Y] \ne 1$.
And if $\mathbb{E} [Y] = 1$, then the minimizer $\hat{f}$ for the right-hand side of \cref{eq: Fenchel conjugate of MV :eq} fulfills the G\^ateaux derivative optimality condition 
$Y = d {U}_{\theta } ( \hat{f} ) \equiv 1 - \theta ( \hat{f} - \mathbb{E} [ \hat{f} ] )$, $\mathbb{P}$-a.s.,
implying 
\begin{equation}\label{eq: Fenchel conjugate of MV on effective domain :eq}
  {U}_{\theta }^{*} (Y) 
= \mathbb{E} [ Y \hat{f} ] - \mathbb{E} [ \hat{f} ] + \frac{\theta }{2} \mathbb{E} \big[ ( \hat{f} - \mathbb{E} [ \hat{f} ] )^{2} \big]
= - \frac{1}{2 \theta } ( \mathbb{E} [ {Y}^{2} ] - 1 ), ~ \forall Y \in \mathbb{L}^{2} ( \Omega ), ~ \mathbb{E} [Y] = 1.
\end{equation} 
Furthermore, due to Fenchel-Moreau theorem (for Hilbert spaces, see \cite[Theorem 13.37]{Bauschke-Combettes-2017}), 
the concave functional ${U}_{\theta }$ must be equal to the Fenchel conjugate of ${U}_{\theta }^{*}$,
which is also known as the variational representation of MV preference (see \cite{Maccheroni-Marinacci-Rustichini-2006}). 
That is,
\begin{equation}\label{eq: biconjugate form of MV :eq}
{U}_{\theta } (f) = \inf_{ Y \in \mathbb{L}^{2} ( \Omega ) } \{ \mathbb{E} [ Y f ] - {U}_{\theta }^{*} (Y) \}
             \equiv \inf_{ Y \in \mathbb{L}^{2} ( \Omega ), \mathbb{E} [Y] = 1 } \Big\{ \mathbb{E} [ Y f ] + \frac{1}{2 \theta } ( \mathbb{E} [ {Y}^{2} ] - 1 ) \Big\};
\end{equation}
while the envelope theorem (see, e.g., \cite{Milgrom-Segal-2002}) implies that $d {U}_{\theta } (f)$ realizes the minimum.

In \cite{Maccheroni-Marinacci-Rustichini-Taboga-2009}, a minor modification to the constraint in \cref{eq: biconjugate form of MV :eq} leads to the MMV preference:
\begin{equation}\label{eq: biconjugate form of MMV :eq}
{V}_{\theta } (f) := \inf_{ Y \in \mathbb{L}^{2}_{+} ( \Omega ) } \{ \mathbb{E} [ Y f ] - {U}_{\theta }^{*} (Y) \}.
\end{equation}
As a point-wise infimum of some affine functionals of $f$, ${V}_{\theta }$ is concave.
Moreover, ${V}_{\theta }$ is G\^ateaux differentiable, and $d {V}_{\theta } (f)$ is the minimizer for the right-hand side of \cref{eq: biconjugate form of MMV :eq}.
Consequently,
\begin{equation*}
{V}_{\theta } (g) \le \mathbb{E} [ g d {V}_{\theta } (f) ] - {U}_{\theta }^{*} ( d {V}_{\theta } (f) ) = {V}_{\theta } (f) + \mathbb{E} [ ( g - f ) d {V}_{\theta } (f) ] \le {V}_{\theta } (f)
\end{equation*}
for any $f,g \in \mathbb{L}^{2} ( \Omega )$ with $g \le f$, $\mathbb{P}$-a.s., which demonstrates the monotonicity of ${V}_{\theta }$ over the entire domain $\mathbb{L}^{2} ( \Omega )$.
However, if $d {V}_{\theta } (f)$ vanishes on some $A \in \mathcal{F}$,
then ${V}_{\theta } ( f + \varepsilon {1}_{A} ) \le {V}_{\theta } (f) + \varepsilon \mathbb{E} [ {1}_{A} d {V}_{\theta } (f) ] = {V}_{\theta } (f)$ for any $\varepsilon > 0$, implying that ${V}_{\theta } ( f + \varepsilon {1}_{A} ) = {V}_{\theta } (f)$. 
Thus, ${V}_{\theta }$ is not strictly monotonic.

\begin{example}\label{exp: MMV is not strictly monotonic}
Let us use the MMV preference ${V}_{\theta }$ to evaluate the prospects $h \in \mathcal{G}_{\theta }$ and $g \notin \mathcal{G}_{\theta }$ in \Cref{exp: non-monotonicity of MV}.
According to \cite[Theorem 2.2]{Maccheroni-Marinacci-Rustichini-Taboga-2009}, one obtains ${V}_{\theta } (g) = {U}_{\theta } ( g \wedge \frac{1}{\theta } ) = {V}_{\theta } (h)$.
Thus, $( h,g )$ are indistinguishable under ${V}_{\theta }$.
However, $g$ statewise dominates $h$; that is, a rational choice should be $g$ rather than $h$.
\end{example}

\subsection{SMMV preferences}

To address the aforementioned limitation of MMV preferences---the lack of strict monotonicity---we propose a generalized class of preferences and refer to them as ``SMMV preferences''.

\begin{definition}[SMMV preferences]\label{def: definition of SMMV :eq}
Given a mapping $\zeta: \Omega \to \mathbb{R}$ with $\zeta \in \mathbb{L}^{2}_{0} ( \Omega )$ and $\mathbb{P} ( \zeta > 0 ) = 1$,
a SMMV preference corresponding to the MV and MMV preferences $( {U}_{\theta }, {V}_{\theta } )$ with $\theta \in \mathbb{R}_{+}$ is formulated by
\begin{equation}\label{eq: definition of SMMV :eq}
{V}_{ \theta, \zeta } (f) := \inf_{ Y \in \mathbb{L}^{2}_{ \zeta + } ( \Omega ) } \{ \mathbb{E} [ Y f ] - {U}_{\theta }^{*} (Y) \}, \quad \forall f \in \mathbb{L}^{2} ( \Omega )
\end{equation}
\end{definition}

To avoid misunderstanding, we emphasize again that it is a mapping $\zeta: \Omega \to \mathbb{R}$ that a decision maker is supposed to actively choose in advance, rather than passively taking a random outcome of $\zeta $ after the fact.
Just as the primitively chosen parameter $\theta$ in the MV preference ${U}_{\theta }$ (or $\gamma$ in the power utility function $U(x) = \frac{1}{1-\gamma} x^{1-\gamma}$) captures risk aversion, 
$\zeta$ plays a key role in evaluating the monotonicity requirement.
Intuitively, if an atom $\omega \in \Omega $ is regarded as a direction,
then $\zeta ( \omega ) \times \mathbb{P} ( \{ \omega \} )$ formulates a lower bound on the partial derivative $\partial {V}_{ \theta, \zeta } (f) / \partial f ( \omega )$, 
which acts as the marginal utility on the ``good/endowment'' $f ( \omega )$. 
See also the upcoming \Cref{exp: illustrate MV MMV SMMV}.

Furthermore, the upcoming \Cref{thm: strict monotonicity of SMMV} shows that ${V}_{ \theta, \zeta }$ defined in \Cref{def: definition of SMMV :eq} is strictly monotonic,
since ${V}_{ \theta, \zeta } (g) < {V}_{ \theta, \zeta } (f)$ for any $f,g \in \mathbb{L}^{2} ( \Omega )$ with $\| f - g \|_{ \mathbb{L}^{2} ( \Omega ) } > 0$ and $g \le f$.
It is $\zeta > 0$ $\mathbb{P}$-a.s. that contributes to distinguishing the performance of $f$ and $g$.  
In addition, setting $\zeta \equiv 0$ produces ${V}_{ \theta, 0 } = {V}_{\theta }$.
Therefore, unless otherwise stated, hereafter we take $\zeta \in \mathbb{L}^{2}_{0} ( \Omega )$ to include both the classical MMV preference and our SMMV preferences,
and we still refer to ${V}_{ \theta, \zeta }$ as the SMMV preference, in order to distinguish it from the classical MMV preference. 

\begin{theorem}\label{thm: strict monotonicity of SMMV}
${V}_{ \theta, \zeta } (g) \le {V}_{ \theta, \zeta } (f) - \mathbb{E} [ ( f - g ) \zeta ]$ for $\zeta \in \mathbb{L}^{2}_{0} ( \Omega )$ and any $f,g \in \mathbb{L}^{2} ( \Omega )$ with $g \le f$.
\end{theorem}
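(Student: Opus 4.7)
The plan is to attack both inequalities directly from the definition \cref{eq: definition of SMMV :eq} of ${V}_{\theta, \zeta}$, without needing to invoke the more refined representation theorems that appear later in the paper. The role of the strict positivity of $\zeta$ (i.e.\ $\zeta \in \mathbb{L}^{2}_{++}(\mathbb{P})$) is what upgrades the classical monotonicity of ${V}_{\theta}$ to strict monotonicity, so the whole argument hinges on exploiting the constraint $Y \ge \zeta$ pointwise inside the infimum.

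For the first (weak) inequality, I would fix an arbitrary $Y \in \mathbb{L}^{2}_{\zeta+}(\mathbb{P})$ and decompose
\begin{equation*}
\mathbb{E}[Yg] - {U}_{\theta}^{*}(Y) = \mathbb{E}[Yf] - {U}_{\theta}^{*}(Y) - \mathbb{E}[Y(f-g)].
\end{equation*}
Since $f - g \ge 0$ $\mathbb{P}$-a.s.\ by hypothesis and $Y \ge \zeta$ $\mathbb{P}$-a.s.\ by the feasibility constraint, we have $Y(f-g) \ge \zeta(f-g) \ge 0$ $\mathbb{P}$-a.s., and hence $\mathbb{E}[Y(f-g)] \ge \mathbb{E}[\zeta(f-g)]$. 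Plugging this bound in and taking the infimum over $Y \in \mathbb{L}^{2}_{\zeta+}(\mathbb{P})$ on the left yields ${V}_{\theta,\zeta}(g) \le {V}_{\theta,\zeta}(f) - \mathbb{E}[(f-g)\zeta]$, as the quantity $\mathbb{E}[(f-g)\zeta]$ is a constant independent of $Y$.

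For the strict inequality, I would argue that because $f$ and $g$ are non-identical with $g \le f$ $\mathbb{P}$-a.s., the set $\{f > g\}$ has strictly positive $\mathbb{P}$-measure. Combined with $\zeta > 0$ $\mathbb{P}$-a.s., this forces $\zeta(f-g) > 0$ on a set of positive measure while remaining non-negative everywhere, so $\mathbb{E}[(f-g)\zeta] > 0$. Subtracting this strictly positive quantity from ${V}_{\theta,\zeta}(f)$ gives the strict inequality ${V}_{\theta,\zeta}(f) - \mathbb{E}[(f-g)\zeta] < {V}_{\theta,\zeta}(f)$.

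There is no real obstacle here: the proof is essentially a two-line manipulation of the definition. The only subtle point worth flagging in writing is to make sure the first inequality is genuinely a pointwise estimate inside the expectation (so that it survives taking the infimum uniformly in $Y$), and to note explicitly that ${V}_{\theta,\zeta}(f)$ is finite so that subtracting a positive number makes sense — this is guaranteed by the upper bound ${V}_{\theta,\zeta}(f) \le \mathbb{E}[\zeta f]/\mathbb{E}[\zeta] - {U}_{\theta}^{*}(\zeta/\mathbb{E}[\zeta])$ recorded right after \cref{eq: definition of SMMV :eq}.
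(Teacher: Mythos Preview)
Your proof is correct and follows essentially the same idea as the paper's --- exploiting the pointwise bound $Y \ge \zeta$ together with $f - g \ge 0$ to extract the term $\mathbb{E}[(f-g)\zeta]$. The one genuine (if minor) difference is that the paper first invokes the existence of the minimizer $\hat{Y}_f$ for the infimum defining $V_{\theta,\zeta}(f)$ (deferring the justification to \Cref{pf-thm: Gateaux differentiability and equivalent expressions of SMMV}), and then plugs this single feasible point into the infimum for $g$. Your argument instead works with an arbitrary $Y$ and takes the infimum at the end, so it never needs to know that the infimum is attained. This makes your version more self-contained at this stage of the paper; the paper's version, on the other hand, foreshadows the explicit identification of $\hat{Y}_f$ that drives the later representation results.
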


\begin{proof}
Given the quadratic functional \cref{eq: Fenchel conjugate of MV on effective domain :eq}, 
the minimum on the right-hand side of \cref{eq: definition of SMMV :eq} can be attained, and the minimizer (denoted by $\hat{Y}_{f}$) is unique.
The precise results can be found in \Cref{pf-thm: Gateaux differentiability and equivalent expressions of SMMV}.
Since $\hat{Y}_{f} \in \mathbb{L}_{ \zeta + }^{2} ( \Omega )$ and $\zeta \in \mathbb{L}^{2}_{0} ( \Omega )$, one obtains
\begin{equation*}
{V}_{ \theta, \zeta } (g) \le \mathbb{E} [ g \hat{Y}_{f} ] - {U}_{\theta }^{*} ( \hat{Y}_{f} ) = {V}_{ \theta, \zeta } (f) - \mathbb{E} [ ( f - g ) \hat{Y}_{f} ] \le {V}_{ \theta, \zeta } (f) - \mathbb{E} [ ( f - g ) \zeta ]
\end{equation*}
for any $f,g \in \mathbb{L}^{2} ( \Omega )$ with $g \le f$. Thus, the proof is completed.
\end{proof}

\begin{remark}[reasons for $\zeta \in \mathbb{L}^{2}_{0} ( \Omega )$]
If $\mathbb{E} [ \zeta ] = 1$, then $Y \in \mathbb{L}^{2}_{ \zeta + } ( \Omega )$ and ${U}_{\theta }^{*} (Y) \ne - \infty $ lead to $Y = \zeta $, $\mathbb{P}$-a.s.
This implies that ${V}_{ \theta, \zeta } (f) = \mathbb{E} [ f \zeta ] - {U}_{\theta }^{*} ( \zeta )$, which is an affine functional of $f$.
If $\mathbb{E} [ \zeta ] > 1$, then ${U}_{\theta }^{*} (Y) = - \infty $ for any $Y \in \mathbb{L}^{2}_{ \zeta + } ( \Omega )$, which leads to an improper ${V}_{ \theta, \zeta }$. 
These considerations motivate the assumption that $\mathbb{E} [ \zeta ] < 1$ in \cref{eq: definition of SMMV :eq}.
In addition, a straightforward calculation shows that 
for all constant $c \in \mathbb{R}$, we have ${V}_{ \theta, \zeta } (c) = c$ if and only if $1 \in \mathbb{L}^{2}_{ \zeta + } ( \Omega )$.
This is why we impose the condition $\zeta \le 1$, $\mathbb{P}$-a.s.
\end{remark}

\begin{example}\label{exp: SMMV}
For the example given by \Cref{tab: example given by Maccheroni et al}, once $\zeta > 0$ under the state ${s}_{4}$, 
\Cref{thm: strict monotonicity of SMMV} gives ${V}_{ \theta, \zeta } (f) \le {V}_{ \theta, \zeta } (g) - 0.25 \zeta ( {s}_{4} ) < {V}_{ \theta, \zeta } (g)$.
That is, under the SMMV preferences, $g$ is significantly favored over $f$.
In terms of \Cref{exp: non-monotonicity of MV}, we let $\zeta $ be a fixed constant in $( 0,1 )$, and obtain
\begin{equation*}
{V}_{ \theta, \zeta } (f) \le {V}_{ \theta, \zeta } (h) - \frac{\theta }{9} {\varepsilon }^{2} \zeta < {V}_{ \theta, \zeta } (h) \le {V}_{ \theta, \zeta } (g) - \frac{1}{2} \varepsilon \zeta < {V}_{ \theta, \zeta } (g).
\end{equation*}
In contrast to the indistinguishability of $( h,g )$ under the MMV preference (see \Cref{exp: MMV is not strictly monotonic}), $g$ is significantly favored over $h$ under the SMMV preferences.
\end{example}

\begin{example}\label{exp: illustrate MV MMV SMMV} 
To compare MV, MMV and SMMV preferences, we still consider the outcomes of a single fair coin toss with equal probabilities of heads and tails.
Suppose that $\theta = 2$, $\zeta \equiv \frac{1}{2}$, ${X}_{x} ( head ) = x$ and ${X}_{x} ( tail ) = 0$, implying that ${X}_{x}$ statewise dominates ${X}_{y}$ for any $x \ge y$. Then, we have
\begin{align*}
\text{MV: } 
& {U}_{2} ( {X}_{x} ) = \frac{1}{2} x - \frac{1}{4} {x}^{2}, \\
\text{MMV: } 
& {V}_{2} ( {X}_{x} ) = \begin{cases} x + \frac{1}{4} > {U}_{2} ( {X}_{x} ), & if ~ x \in ( - \infty, - 1 ); \\ 
                                    \frac{1}{2} x - \frac{1}{4} {x}^{2} = {U}_{2} ( {X}_{x} ), & if ~ x \in [ -1,1 ]; \\ 
                                    \frac{1}{4} = {U}_{2} ( {X}_{1} ) > {U}_{2} ( {X}_{x} ), & otherwise; \end{cases} \\
\text{and SMMV: }
& {V}_{ 2, \frac{1}{2} } ( {X}_{x} ) = \begin{cases} \frac{3}{4} x + \frac{1}{16} > {V}_{2} ( {X}_{x} ), & if ~ x \in ( - \infty, - \frac{1}{2} ); \\ 
                                                   \frac{1}{2} x - \frac{1}{4} {x}^{2} = {V}_{2} ( {X}_{x} ) = {U}_{2} ( {X}_{x} ), & if ~ x \in [ - \frac{1}{2}, \frac{1}{2} ]; \\ 
                                                   \frac{1}{4} x + \frac{1}{16} > {V}_{2} ( {X}_{x} ), & otherwise. \end{cases}
\end{align*}

As ${U}_{2} ( {X}_{x} )$ is strictly decreasing in $x$ on $( 1, + \infty )$, the lack of monotonicity in the MV preference (${U}_{2}$) becomes apparent.
Although the MMV preference (${V}_{2}$) fixes this monotonicity issue, it fails to differentiate among the ${X}_{x}$ for $x \in ( 1, + \infty )$, rendering them indistinguishable under ${V}_{2}$.
Our SMMV preference ${V}_{ 2, \frac{1}{2} }$ is not only strictly increasing in $x$ on $\mathbb{R}$, but also satisfies the inequality
${V}_{ 2, \frac{1}{2} } ( {X}_{x} ) - {V}_{ 2, \frac{1}{2} } ( {X}_{y} ) \ge \frac{1}{4} ( x - y ) = \mathbb{E} [ ( {X}_{x} - {X}_{y} ) \zeta ]$, as shown in \Cref{thm: strict monotonicity of SMMV}.
Thus, every unit increase in $x$ yields a marginal utility greater than $\zeta ( head ) \times \mathbb{P} ( head ) = \frac{1}{4}$. 
Following the setup of this example, \Cref{fig: illustrate MV MMV SMMV} gives an intuitive illustration of the MV, MMV and our proposed SMMV preferences.

\begin{figure}[H]
  \centering
  \includegraphics[width=8cm]{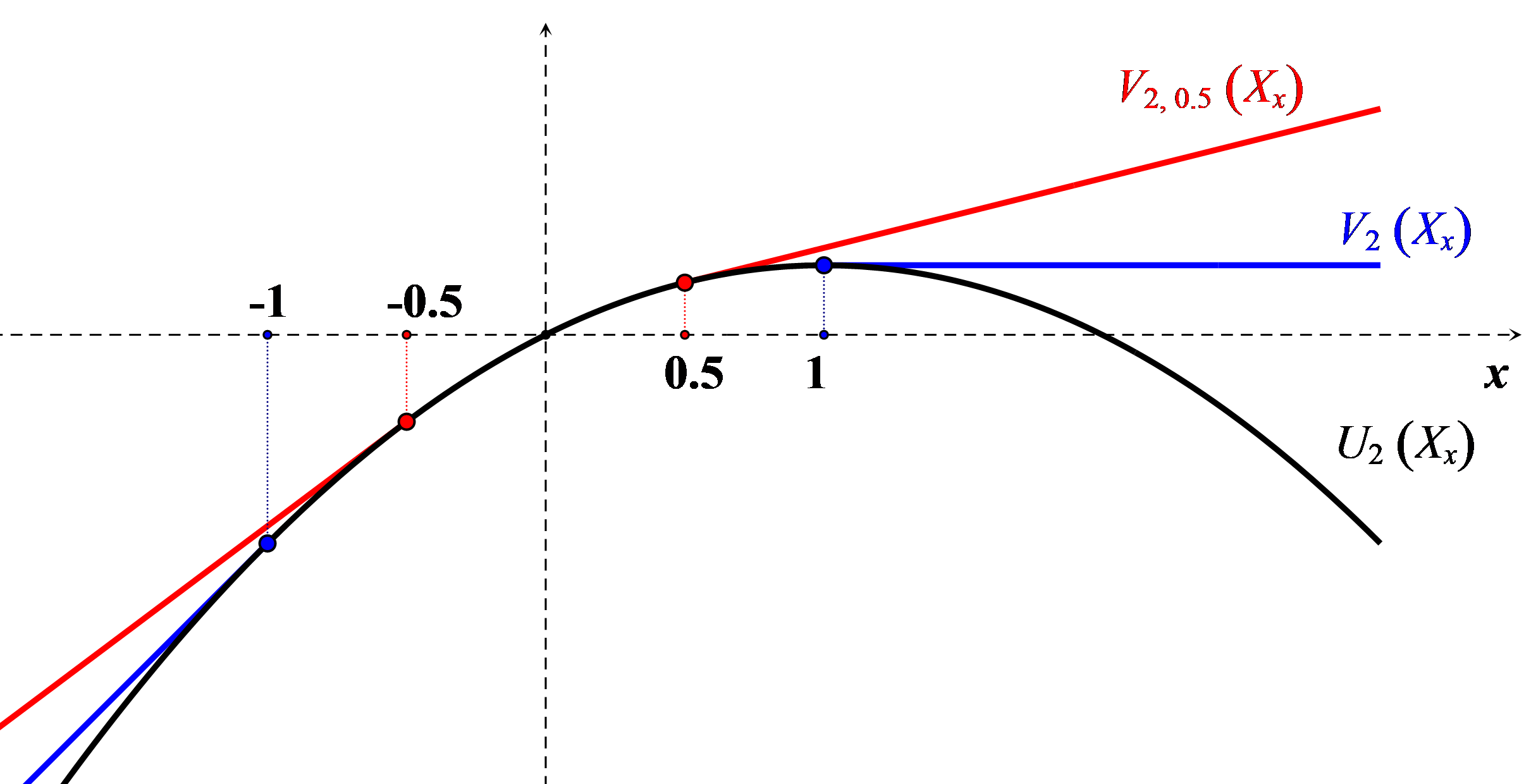}
  \caption{\centering Comparison of MV, MMV, and SMMV preferences: \\ ${U}_{2} ( {X}_{x} )$, ${V}_{2} ( {X}_{x} )$ and ${V}_{ 2, \frac{1}{2} } ( {X}_{x} )$ based on the setting described in \Cref{exp: illustrate MV MMV SMMV}.}
  \label{fig: illustrate MV MMV SMMV}
\end{figure}
\end{example}

\subsection{Properties and equivalent expressions of SMMV preference}
\label{subsec: Properties of SMMV preference}

First of all, according to \Cref{def: definition of SMMV :eq}, one can easily establish the following basic properties of the SMMV preferences:
\begin{itemize}
\item ${V}_{ \theta, \zeta }$ is bounded, due to ${U}_{\theta } (f) \le {V}_{\theta } (f) \le {V}_{ \theta, \zeta } (f) \le \frac{ \mathbb{E} [ \zeta f ] }{ \mathbb{E} [ \zeta ] } - {U}_{\theta }^{*} ( \frac{\zeta }{ \mathbb{E} [ \zeta ] } )$.
\item ${V}_{ \theta, \zeta }$ is concave, since it is a point-wise infimum of some affine functionals of $f$.
\item ${V}_{ \theta, \zeta } (f)$ is decreasing in $\theta $, since \cref{eq: Fenchel conjugate of MV on effective domain :eq} implies that ${U}_{\theta }^{*}$ is increasing in $\theta $.
\item ${V}_{ \theta, \zeta } (f)$ is increasing in $\zeta $; that is, ${V}_{ \theta, \hat{\zeta } } (f) \ge {V}_{ \theta, \zeta } (f)$ follows from $\mathbb{L}^{2}_{ \hat{\zeta } + } ( \Omega ) \subseteq \mathbb{L}^{2}_{ \zeta + } ( \Omega )$, 
      provided that $\hat{\zeta } \ge \zeta $, $\mathbb{P}$-a.s. 
\end{itemize} 
Then, to facilitate comparison among the SMMV, MMV and MV preferences and deriving the explicit expressions for ${V}_{ \theta, \zeta } (f)$,
we list some related truncation properties in the following \Cref{lem: identity and truncation}. 

\begin{lemma}\label{lem: identity and truncation}
Fix $( \theta, \zeta ) \in \mathbb{R}_{+} \times \mathbb{L}^{2}_{0} ( \Omega )$. The following properties hold.
\begin{itemize}
\item ${V}_{ \theta, \zeta } (f) = {U}_{\theta } (f)$, if and only if
      \begin{align*}
      f \in
      \mathcal{G}_{ \theta, \zeta } & := \{ f \in \mathbb{L}^{2} ( \Omega ): \partial {U}_{\theta } (f) \cap \mathbb{L}^{2}_{ \zeta + } ( \Omega ) \ne \varnothing \} \\
                                    &  = \Big\{ f \in \mathbb{L}^{2} ( \Omega ): f - \mathbb{E} [f] \le \frac{ 1 - \zeta }{\theta }, ~ \mathbb{P}-a.s. \Big\}
                               \subseteq \mathcal{G}_{\theta }.
      \end{align*}
\item $f \wedge ( \lambda - \frac{\zeta }{\theta } ) \in \mathcal{G}_{ \theta, \zeta }$ for any $f \in \mathbb{L}^{2} ( \Omega )$ and $\lambda \le {\lambda }_{ f, \theta, \zeta }$, 
      where ${\lambda }_{ f, \theta, \zeta } \in ( \essinf \{ f + \frac{\zeta }{\theta } \}, \mathbb{E} [f] + \frac{1}{\theta } ]$ uniquely fulfills
      \begin{equation}\label{eq: lambda-equation :eq}
        \frac{ 1 - \mathbb{E} [ \zeta ] }{\theta } 
      = \int_{ - \infty }^{ {\lambda }_{ f, \theta, \zeta } } \mathbb{P} \Big( f + \frac{\zeta }{\theta } \le s \Big) ds
      \equiv {\lambda }_{ f, \theta, \zeta } - \mathbb{E} \Big[ \Big( f + \frac{\zeta }{\theta } \Big) \wedge {\lambda }_{ f, \theta, \zeta } \Big].
      \end{equation}
\item $f \in \mathcal{G}_{ \theta, \zeta }$, if and only if $f \in \mathbb{L}^{2} ( \Omega )$ and $f + \frac{\zeta }{\theta } \le {\lambda }_{ f, \theta, \zeta }$, $\mathbb{P}$-a.s.
      In other words, 
      \begin{equation*}
      \mathcal{G}_{ \theta, \zeta } = \bigg\{ f \in \mathbb{L}^{2} ( \Omega ): \esssup \Big\{ f + \frac{\zeta }{\theta } \Big\} \le {\lambda }_{ f, \theta, \zeta } = \mathbb{E} [f] + \frac{1}{\theta } \bigg\}.
      \end{equation*}
\item $f \in \mathcal{G}_{ \theta, \zeta }$, 
      if (resp. only if) $f \in \mathbb{L}^{2} ( \Omega )$ and $f \wedge ( \lambda - \frac{\zeta }{\theta } ) \in \mathcal{G}_{ \theta, \zeta }$ for some (resp. every) $\lambda > {\lambda }_{ f, \theta, \zeta }$.
      This implies that 
      \begin{equation*}
      \sup \Big\{ \lambda: f \wedge \Big( \lambda - \frac{\zeta }{\theta } \Big) \in \mathcal{G}_{ \theta, \zeta } \Big\}
      = \left\{ \begin{aligned}
        & {\lambda }_{ f, \theta, \zeta }, && if ~ f \in \mathbb{L}^{2} ( \Omega ) \setminus \mathcal{G}_{ \theta, \zeta }; \\
        & + \infty, && otherwise.
        \end{aligned} \right.
      \end{equation*}
\end{itemize}
\end{lemma}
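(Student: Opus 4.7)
The plan is to take the four bullets in order, since each builds on the previous.

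For the first bullet I will use that $\partial U_\theta(f) = \{1 - \theta(f - \mathbb{E}[f])\}$ is a singleton (as recorded before \Cref{rem: non-monotonicity of MV}), so the condition $\partial U_\theta(f) \cap \mathbb{L}^{2}_{\zeta+}(\mathbb{P}) \neq \varnothing$ immediately reduces to the pointwise inequality $1 - \theta(f - \mathbb{E}[f]) \geq \zeta$, giving the closed-form description of $\mathcal{G}_{\theta,\zeta}$ and the inclusion into $\mathcal{G}_{\theta}$. For the equivalence ${V}_{\theta,\zeta}(f) = {U}_{\theta}(f) \iff f \in \mathcal{G}_{\theta,\zeta}$, I combine (i) the fact that $Y^{*} := 1 - \theta(f - \mathbb{E}[f])$ is the unique unconstrained minimizer of $Y \mapsto \mathbb{E}[Yf] - U_{\theta}^{*}(Y)$ over $\mathbb{L}^{2}(\mathbb{P})$ (by strict convexity on the effective domain $\{\mathbb{E}[Y] = 1\}$), with (ii) the existence and uniqueness of a constrained minimizer $\hat{Y}_f \in \mathbb{L}^{2}_{\zeta+}(\mathbb{P})$ for \cref{eq: definition of SMMV :eq}, recorded in the proof of \Cref{thm: strict monotonicity of SMMV}. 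Equality of the two infima then forces $\hat{Y}_f = Y^{*}$, hence $Y^{*} \geq \zeta$, i.e.\ $f \in \mathcal{G}_{\theta,\zeta}$; the converse is automatic since $f \in \mathcal{G}_{\theta,\zeta}$ makes $Y^{*}$ admissible in \cref{eq: definition of SMMV :eq}.

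For the second bullet I will analyse the scalar function $h(\lambda) := \lambda - \mathbb{E}[(f + \zeta/\theta) \wedge \lambda] = \mathbb{E}[(\lambda - (f + \zeta/\theta))^{+}]$, whose right-derivative $\mathbb{P}(f + \zeta/\theta \leq \lambda)$ (obtained by dominated convergence) vanishes on $(-\infty, \essinf\{f + \zeta/\theta\}]$ and is strictly positive on $(\essinf\{f + \zeta/\theta\}, +\infty)$. Combined with the lower bound $h(\mathbb{E}[f] + 1/\theta) \geq (\mathbb{E}[f] + 1/\theta) - \mathbb{E}[f + \zeta/\theta] = (1 - \mathbb{E}[\zeta])/\theta > 0$, this gives existence and uniqueness of $\lambda_{f,\theta,\zeta}$ in the claimed interval. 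Writing $g_\lambda := f \wedge (\lambda - \zeta/\theta)$, the identities $\esssup\{g_\lambda + \zeta/\theta\} = \lambda \wedge \esssup\{f + \zeta/\theta\}$ and $\mathbb{E}[g_\lambda] + 1/\theta = \mathbb{E}[(f + \zeta/\theta) \wedge \lambda] + (1 - \mathbb{E}[\zeta])/\theta$ reduce the condition $g_\lambda \in \mathcal{G}_{\theta,\zeta}$ (i.e.\ $\esssup\{g_\lambda + \zeta/\theta\} \leq \mathbb{E}[g_\lambda] + 1/\theta$) to $h(\lambda) \leq (1 - \mathbb{E}[\zeta])/\theta$ in the regime $\lambda \leq \esssup\{f + \zeta/\theta\}$, which is exactly $\lambda \leq \lambda_{f,\theta,\zeta}$.

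The third and fourth bullets are then corollaries. For the third, $h(\mathbb{E}[f] + 1/\theta) = (1 - \mathbb{E}[\zeta])/\theta$ holds precisely when $\esssup\{f + \zeta/\theta\} \leq \mathbb{E}[f] + 1/\theta$, so $\lambda_{f,\theta,\zeta} = \mathbb{E}[f] + 1/\theta$ iff $f \in \mathcal{G}_{\theta,\zeta}$, and the first bullet's closed-form description yields the asserted reformulation. For the fourth, I split on whether $f \in \mathcal{G}_{\theta,\zeta}$. If so, then any $\lambda > \lambda_{f,\theta,\zeta} = \mathbb{E}[f] + 1/\theta \geq \esssup\{f + \zeta/\theta\}$ leaves the truncation inactive, so $g_\lambda = f$ remains in $\mathcal{G}_{\theta,\zeta}$ and the supremum is $+\infty$. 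Otherwise $\esssup\{f + \zeta/\theta\} > \mathbb{E}[f] + 1/\theta > \lambda_{f,\theta,\zeta}$; then both sub-regimes $\lambda_{f,\theta,\zeta} < \lambda \leq \esssup\{f + \zeta/\theta\}$ (where the second bullet's reduction still applies and rules out membership) and $\lambda > \esssup\{f + \zeta/\theta\}$ (where $g_\lambda = f \notin \mathcal{G}_{\theta,\zeta}$) produce $g_\lambda \notin \mathcal{G}_{\theta,\zeta}$, so $\sup = \lambda_{f,\theta,\zeta}$. The main bookkeeping hazard is precisely this double case split, since the two thresholds $\lambda_{f,\theta,\zeta}$ and $\esssup\{f + \zeta/\theta\}$ can coincide or strictly order each other according to whether $f \in \mathcal{G}_{\theta,\zeta}$; once the cases are organized, the strict monotonicity of $h$ on $(\essinf\{f + \zeta/\theta\}, +\infty)$ combined with the positivity of $(1 - \mathbb{E}[\zeta])/\theta$ does all the real work.
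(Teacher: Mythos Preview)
Your proof is correct and follows essentially the same route as the paper: both hinge on the strict convexity/uniqueness of the minimizer in \cref{eq: definition of SMMV :eq} for the first bullet, and on the monotone scalar function $h(\lambda)=\lambda-\mathbb{E}[(f+\zeta/\theta)\wedge\lambda]$ together with the reduction of $g_\lambda\in\mathcal{G}_{\theta,\zeta}$ to the inequality $h(\lambda)\le(1-\mathbb{E}[\zeta])/\theta$ for the remaining bullets. The only organizational difference is in the fourth bullet, where the paper invokes \Cref{rem: truncation by other levels} (i.e.\ $\lambda_{f\wedge(\lambda-\zeta/\theta),\theta,\zeta}=\lambda_{f,\theta,\zeta}$ for $\lambda\ge\lambda_{f,\theta,\zeta}$) while you carry out a direct case split on $\lambda$ versus $\esssup\{f+\zeta/\theta\}$; your route is slightly more self-contained, but both are immediate once bullets~1--3 are in hand. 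One small point to tidy in your second-bullet sketch: when $f\in\mathcal{G}_{\theta,\zeta}$ one can have $\esssup\{f+\zeta/\theta\}<\lambda\le\lambda_{f,\theta,\zeta}$, so your restriction to the regime $\lambda\le\esssup\{f+\zeta/\theta\}$ does not cover all $\lambda\le\lambda_{f,\theta,\zeta}$; but in that leftover regime $g_\lambda=f\in\mathcal{G}_{\theta,\zeta}$ trivially, so the conclusion still holds.
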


\begin{proof}
See \Cref{pf-lem: identity and truncation}.
\end{proof}

Regarding \Cref{lem: identity and truncation}, two points are worth mentioning. 
On the one hand, the first assertion in \Cref{lem: identity and truncation} with its proof provides the following comparison results for $( {V}_{ \theta, \zeta }, {V}_{\theta }, {U}_{\theta } )$ 
with $\zeta \in \mathbb{L}^{2}_{0} ( \Omega )$ and $\mathbb{P} ( \zeta > 0 ) = 1$:
\begin{itemize}
\item ${V}_{ \theta, \zeta } = {V}_{\theta } = {U}_{\theta }$ on $\mathcal{G}_{ \theta, \zeta }$;
\item ${V}_{ \theta, \zeta } > {V}_{\theta } = {U}_{\theta }$ on $\mathcal{G}_{\theta } \setminus \mathcal{G}_{ \theta, \zeta }$;
\item ${V}_{ \theta, \zeta } > {V}_{\theta } > {U}_{\theta }$ on $\mathbb{L}^{2} ( \Omega ) \setminus \mathcal{G}_{\theta }$.
\end{itemize}
In \Cref{exp: illustrate MV MMV SMMV}, ${X}_{x} \in \mathcal{G}_{2}$ (resp. ${X}_{x} \in \mathcal{G}_{ 2, \frac{1}{2} }$) if and only if $x \in [ -1,1 ]$ (resp. $x \in [ - \frac{1}{2}, \frac{1}{2} ]$),
and the aforementioned comparison results are obvious in \Cref{fig: illustrate MV MMV SMMV}.
Based on these comparison results, $\mathcal{G}_{ \theta, \zeta }$ can be named as ``the domain of $\zeta $-monotonicity'', in analogy to the domain of monotonicity $\mathcal{G}_{\theta }$.
In fact, only in $\mathcal{G}_{\theta }$ (resp. $\mathcal{G}_{ \theta, \zeta }$), the MV preference ${U}_{\theta }$ obeys the desired monotonicity condition
${U}_{\theta } (f) \ge {U}_{\theta } (g)$ (resp. ${U}_{\theta } (f) \ge {U}_{\theta } (g) + \mathbb{E} [ ( f - g ) \zeta ]$) for arbitrarily chosen $f \ge g$.
${V}_{ \theta, \zeta }$ fixes the failure of desired strict monotonicity in both ${U}_{\theta }$ and ${V}_{\theta }$ on $\mathbb{L}^{2} ( \Omega ) \setminus \mathcal{G}_{ \theta, \zeta }$,
just as ${V}_{\theta }$ addresses the lack of monotonicity in ${U}_{\theta }$ on $\mathbb{L}^{2} ( \Omega ) \setminus \mathcal{G}_{\theta }$.

On the other hand, the last three assertions in \Cref{lem: identity and truncation} show how to truncate $f \in \mathbb{L}^{2} ( \Omega )$ 
so that the truncation result falls into the domain $\mathcal{G}_{ \theta, \zeta }$, as listed below.
\begin{itemize}
\item $f \wedge ( \lambda - \frac{\zeta }{\theta } ) \in \mathcal{G}_{ \theta, \zeta }$ for $\lambda \le {\lambda }_{ f, \theta, \zeta }$ is necessarily true.
\item $f \wedge ( \lambda - \frac{\zeta }{\theta } ) \in \mathcal{G}_{ \theta, \zeta }$ for $\lambda > {\lambda }_{ f, \theta, \zeta }$ is equivalent to $f \in \mathcal{G}_{ \theta, \zeta }$, 
      namely, $f + \frac{\zeta }{\theta } \le {\lambda }_{ f, \theta, \zeta }$, $\mathbb{P}$-a.s.
\end{itemize}

Next, we show the G\^ateaux differentiability and some explicit expressions of ${V}_{ \theta, \zeta }$.
For the sake of brevity, hereafter we omit the statement $( \theta, \zeta ) \in \mathbb{R}_{+} \times \mathbb{L}^{2}_{0} ( \Omega )$, unless otherwise noted.

\begin{theorem}\label{thm: Gateaux differentiability and equivalent expressions of SMMV}
For any $f \in \mathbb{L}^{2} ( \Omega )$, $d {V}_{ \theta, \zeta } (f) = \zeta + \theta ( {\lambda }_{ f, \theta, \zeta } - f - \frac{\zeta }{\theta } )_{+}$, which realizes the minimum on the right-hand side of \cref{eq: definition of SMMV :eq}, and
\begin{align}\label{eq: direct calculation form of SMMV :eq}
  {V}_{ \theta, \zeta } (f) 
& = \theta \int_{ - \infty }^{ {\lambda }_{ f, \theta, \zeta } } s \mathbb{P} \Big( f + \frac{\zeta }{\theta } \le s \Big) ds
  + \mathbb{E} [ f \zeta ] + \frac{1}{ 2 \theta } \mathbb{E} [ {\zeta }^{2} ] - \frac{1}{ 2 \theta } \\
\notag
& = {U}_{\theta } \bigg( \Big( f + \frac{\zeta }{\theta } \Big) \wedge {\lambda }_{ f, \theta, \zeta } \bigg)
  + \mathbb{E} [ ( f - {\lambda }_{ f, \theta, \zeta } ) \zeta ] + \frac{1}{2 \theta } \Var [ \zeta ] \\
\label{eq: modificed MV form of SMMV :eq}
& = {U}_{\theta } \bigg( f \wedge \Big( {\lambda }_{ f, \theta, \zeta } - \frac{\zeta }{\theta } \Big) \bigg)
  + \mathbb{E} \bigg[ \bigg( f - f \wedge \Big( {\lambda }_{ f, \theta, \zeta } - \frac{\zeta }{\theta } \Big) \bigg) \zeta \bigg] \\
\notag
& = {U}_{\theta } \bigg( f \wedge \Big( {\lambda }_{ f, \theta, \zeta } - \frac{\zeta }{\theta } \Big) \bigg)
  + \mathbb{E} \Big[ \Big( f + \frac{\zeta }{\theta } - {\lambda }_{ f, \theta, \zeta } \Big)_{+} \zeta \Big].
\end{align}
\end{theorem}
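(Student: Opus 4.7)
The plan is to solve the convex minimization in \cref{eq: definition of SMMV :eq} explicitly, read off the minimizer as the claimed G\^ateaux derivative, and then derive each listed expression by direct substitution. Using the equivalent formulation in \Cref{rem: monotonicity in theta and zeta}, the task reduces to minimizing $\mathbb{E}[Y f] + \Var[Y]/(2\theta)$ over $Y \in \mathbb{L}^{2}_{\zeta+}(\mathbb{P})$ with $\mathbb{E}[Y] = 1$. For any $\lambda \in \mathbb{R}$, since $\lambda(\mathbb{E}[Y] - 1) \equiv 0$ on the feasible set, the functional equals $\mathbb{E}[Y^{2}/(2\theta) + Y(f - \lambda)] + \lambda - 1/(2\theta)$, and pointwise minimization of the convex parabola $y^{2}/(2\theta) + y(f - \lambda)$ over $y \ge \zeta$ yields the unique minimum at $\max\{\zeta, \theta(\lambda - f)\} = \zeta + \theta(\lambda - f - \zeta/\theta)_{+}$. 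Writing $\hat Y_{f}^{\lambda} := \zeta + \theta(\lambda - f - \zeta/\theta)_{+}$, the original functional is therefore pointwise bounded below by its value at $\hat Y_{f}^{\lambda}$, with equality whenever $\hat Y_{f}^{\lambda}$ is feasible.

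To guarantee feasibility I tune $\lambda$ by $\mathbb{E}[\hat Y_{f}^{\lambda}] = 1$, i.e., $\theta\,\mathbb{E}[(\lambda - f - \zeta/\theta)_{+}] = 1 - \mathbb{E}[\zeta]$. By the layer-cake identity $\mathbb{E}[(\lambda - X)_{+}] = \int_{-\infty}^{\lambda}\mathbb{P}(X \le s)\,ds$ applied to $X := f + \zeta/\theta$, this is exactly \cref{eq: lambda-equation :eq}, which by \Cref{lem: identity and truncation} admits the unique solution $\lambda_{f,\theta,\zeta}$; the corresponding $\hat Y_{f} := \zeta + \theta(\lambda_{f,\theta,\zeta} - f - \zeta/\theta)_{+}$ is therefore the unique minimizer in \cref{eq: definition of SMMV :eq}, uniqueness following from the strict-convexity argument of \Cref{rem: uniqueness of solution for LQ optimization problems}. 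Since $V_{\theta,\zeta}$ is concave as an infimum of affine functionals, the Fenchel inequality $V_{\theta,\zeta}(g) \le \mathbb{E}[g\hat Y_{f}] - U_{\theta}^{*}(\hat Y_{f}) = V_{\theta,\zeta}(f) + \mathbb{E}[(g-f)\hat Y_{f}]$ shows $\hat Y_{f} \in \partial V_{\theta,\zeta}(f)$, and the standard biconjugation argument (any $Y^{*}$ in this superdifferential must realize the infimum in \cref{eq: definition of SMMV :eq} and thus equal $\hat Y_{f}$ by uniqueness) identifies $\partial V_{\theta,\zeta}(f) = \{\hat Y_{f}\}$; G\^ateaux differentiability with $d V_{\theta,\zeta}(f) = \hat Y_{f}$ then follows from \citeA[Proposition 1.8]{Phelps-1993} already invoked earlier in the paper.

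For the explicit formulas, write $\lambda := \lambda_{f,\theta,\zeta}$ and $X := f + \zeta/\theta$, and substitute $\hat Y_{f}$ into $V_{\theta,\zeta}(f) = \mathbb{E}[\hat Y_{f} f] + (\mathbb{E}[\hat Y_{f}^{2}] - 1)/(2\theta)$. The cross terms between $\zeta$ and $(\lambda - X)_{+}$ consolidate via the elementary identity $(\lambda - X)_{+}\bigl[X + \tfrac{1}{2}(\lambda - X)_{+}\bigr] = \tfrac{1}{2}(\lambda^{2} - X^{2}){1}_{\{X \le \lambda\}}$, and Fubini on $\mathbb{E}[(\lambda^{2} - X^{2}){1}_{\{X \le \lambda\}}] = 2\int_{-\infty}^{\lambda} s\,\mathbb{P}(X \le s)\,ds$ produces the integral of \cref{eq: direct calculation form of SMMV :eq}. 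The remaining representations follow from this form by the decomposition $f = f \wedge (\lambda - \zeta/\theta) + (X - \lambda)_{+}$ combined with the identity $V_{\theta,\zeta}(f \wedge (\lambda - \zeta/\theta)) = U_{\theta}(f \wedge (\lambda - \zeta/\theta))$ supplied by \Cref{lem: identity and truncation}, after which \cref{eq: MV preference :eq} absorbs the residual mean and variance into the stated combinations.

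The main obstacle is the bookkeeping of the last step: the several cross terms involving $f$, $\zeta$, and $(X - \lambda)_{+}$ must be regrouped carefully in order to recover the truncated MV utility plus the explicit residuals $\mathbb{E}[(f - \lambda)\zeta] + \Var[\zeta]/(2\theta)$, $\mathbb{E}[(f - f \wedge (\lambda - \zeta/\theta))\zeta]$, and $\mathbb{E}[(X - \lambda)_{+}\zeta]$ appearing in the stated forms. The pointwise minimization and Lagrange calibration themselves are routine because the inner problem is a quadratic subject to a one-sided pointwise constraint and a single linear equality, and the G\^ateaux-differentiability step reduces to standard Fenchel duality once uniqueness of $\hat Y_{f}$ is in hand.
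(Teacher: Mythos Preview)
Your proposal is correct and follows essentially the same route as the paper: Lagrangian relaxation of the mean constraint, pointwise minimization over $y\ge\zeta$ to obtain $\hat Y_f^{\lambda}=\zeta+\theta(\lambda-f-\zeta/\theta)_+$, calibration of $\lambda$ via \cref{eq: lambda-equation :eq}, singleton superdifferential via Fenchel duality and Phelps, and then algebraic substitution for the explicit formulas. The only cosmetic difference is that the paper first computes $V_{\theta,\zeta}^{*}$ explicitly before invoking Fenchel--Moreau, whereas you fold that into the phrase ``standard biconjugation argument,'' and the paper writes out the integral manipulations leading to each displayed expression in full rather than sketching them.
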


\begin{proof}
See \Cref{pf-thm: Gateaux differentiability and equivalent expressions of SMMV}.
\end{proof}

Combining \Cref{lem: identity and truncation} and the expression \cref{eq: modificed MV form of SMMV :eq},
one can characterize the SMMV preference ${V}_{ \theta, \zeta }$ by MV preference ${U}_{ \theta, \zeta }$ with a truncation approach.
In \Cref{lem: identity and truncation} it has been shown that ${V}_{ \theta, \zeta } (f) = {U}_{\theta } (f)$ for $f \in \mathcal{G}_{ \theta, \zeta }$.
In terms of $f \notin \mathcal{G}_{ \theta, \zeta }$, ${V}_{ \theta, \zeta } (f)$ equals the sum of 
\begin{itemize}
\item the MV performance for the truncated random variable $f \wedge ( {\lambda }_{ f, \theta, \zeta } - \frac{\zeta }{\theta } ) \in \mathcal{G}_{ \theta, \zeta }$
\item and the linear modification term for the gap $f - f \wedge ( {\lambda }_{ f, \theta, \zeta } - \frac{\zeta }{\theta } )$, of which the ``growth speed'' is exactly $\zeta $.
\end{itemize}
In fact, fixing the ``basis'' ${U}_{\theta } ( f \wedge ( {\lambda }_{ f, \theta, \zeta } - \frac{\zeta }{\theta } ) )$ 
and arbitrarily choosing $Y$ for the ``growth'' $\mathbb{E} [ ( f - f \wedge ( {\lambda }_{ f, \theta, \zeta } - \frac{\zeta }{\theta } ) ) Y ]$ deliver a minimality of ${V}_{ \theta, \zeta }$.
Conversely, if we arbitrarily choose $g$ for the basis ${U}_{\theta } (g)$ but fix the growth $\mathbb{E} [ ( f - g ) \zeta ]$, then this leads to a maximality of ${V}_{ \theta, \zeta } (f)$.
This maximality property is similar to the definition of MMV preference in \cite{Cerny-2020,Cerny-Ruf-Schweizer-2025}, which exploits the monotone hull of the MV preference ${U}_{1}$.
We formulate these minimality/maximality properties in the following \Cref{thm: minimality and maximality of SMMV}.

\begin{proposition}\label{thm: minimality and maximality of SMMV}
The minimality of SMMV preferences is formulated by
\begin{equation*}
{V}_{ \theta, \zeta } (f) = \min \{ V (f): V |_{ \mathcal{G}_{ \theta, \zeta } } = {U}_{\theta } |_{ \mathcal{G}_{ \theta, \zeta } }, 
                                           \partial V (g) \cap \mathbb{L}^{2}_{\zeta + } ( \Omega ) \ne \varnothing, \forall g \in \mathbb{L}^{2} ( \Omega ) \},
\end{equation*}
while the maximality of SMMV preferences is formulated by
\begin{equation*}
{V}_{ \theta, \zeta } (f) = \max_{ g \in \mathcal{G}_{ \theta, \zeta }, g \le f } \{ {U}_{\theta } (g) + \mathbb{E} [ ( f - g ) \zeta ] \}
                          = \max_{ g \in \mathbb{L}^{2} ( \Omega ), g \le f } \{ {U}_{\theta } (g) + \mathbb{E} [ ( f - g ) \zeta ] \}.
\end{equation*}
\end{proposition}

\begin{proof}
See \Cref{pf-thm: minimality and maximality of SMMV}.
\end{proof}

\begin{remark}[Supremal convolution characterization of SMMV]\label{rem: Supremal convolution characterization of SMMV}
We now extend the results for generating the MMV preference ${V}_{1}$ illustrated by \cite[eqn. (7)]{Cerny-2020}, which exploits the monotone and the cash-invariant hull (i.e. $\Box {L}_{0}$ and $\Box C$ below),
to our SMMV preferences ${V}_{ \theta, \zeta }$.
More precisely, in addition to the cash indicator function
\begin{equation*}
C (f) := \left\{ \begin{aligned}
         & c,        && for ~ f = c, ~ \mathbb{P}-a.s., ~ c \in \mathbb{R}; \\
         & - \infty, && otherwise,
         \end{aligned} \right.
\end{equation*}
and the supermal convolution $( U \Box V ) (f) := \sup \{ U (g) + V (h): g,h \in \mathbb{L}^{2} ( \Omega ), g + h = f \}$ considered in \cite{Cerny-2020},
we define the quadratic utility ${Q}_{\theta } (f) := \mathbb{E} [ f - \frac{\theta }{2} {f}^{2} ]$ 
and the concave function (which is linear on its effective domain $\mathbb{L}_{+}^{2} ( \Omega )$):
\begin{equation*}
{L}_{\zeta } (f) := \left\{ \begin{aligned}
                    & \mathbb{E} [ f \zeta ], && for ~ f \in \mathbb{L}_{+}^{2} ( \Omega ); \\
                    & - \infty,               && otherwise.
                    \end{aligned} \right.
\end{equation*}
Let ${Q}_{ \theta, \zeta } (f) := {Q}_{\theta } ( f \wedge \frac{ 1 - \zeta }{\theta } ) + {L}_{\zeta } ( f - f \wedge \frac{ 1 - \zeta }{\theta } )$.
Given that the supermal convolution is commutative and associative, 
one obtains $( {Q}_{\theta } \Box C ) \Box {L}_{\zeta } = {U}_{\theta } \Box {L}_{\zeta } = {V}_{ \theta, \zeta } = {Q}_{ \theta, \zeta } \Box C = ( {Q}_{\theta } \Box {L}_{\zeta } ) \Box C$,
which shows how to generate the SMMV preference ${V}_{ \theta, \zeta }$ from the quadratic utility ${Q}_{\theta }$.
Graphically, we establish the following commutative diagram:
\begin{equation*}
\begin{matrix}
{Q}_{\theta } & \xrightarrow{ \Box {L}_{\zeta } } & {Q}_{ \theta, \zeta } \\
\quad \Big\downarrow {\scriptstyle \Box C} &  & \Big\downarrow {\scriptstyle \Box C} \\
{U}_{\theta } & \xrightarrow{ \Box {L}_{\zeta } } & {V}_{ \theta, \zeta }.
\end{matrix}
\end{equation*}
By setting $( \theta, \zeta ) = ( 1,0 )$, \cite[eqn. (7)]{Cerny-2020} can be recovered.
\end{remark}

Furthermore, the SMMV preferences have the following properties for risk ranking.

\begin{proposition}\label{thm: MUA and SSD for SMMV}
On the one hand, $\theta \ge \hat{\theta }$ if and only if ${V}_{ \theta, \zeta }$ is more uncertainty averse than ${V}_{ \hat{\theta }, \zeta }$, namely,
\begin{equation*}
\Big( {V}_{ \theta, \zeta } (f) \ge {V}_{ \theta, \zeta } (c) ~ \Rightarrow ~ {V}_{ \hat{\theta }, \zeta } (f) \ge {V}_{ \hat{\theta }, \zeta } (c) \Big),
\quad \forall ( f,c ) \in \mathbb{L}^{2} ( \Omega ) \times \mathbb{R}.
\end{equation*}
On the other hand, ${\lambda }_{ g, \theta, \zeta } \le {\lambda }_{ f, \theta, \zeta }$ and ${V}_{ \theta, \zeta } (f) \ge {V}_{ \theta, \zeta } (g) + \mathbb{E} [ ( f - g ) \zeta ]$,
if $f + \frac{\zeta }{\theta }$ second-order stochastically dominates $g + \frac{\zeta }{\theta }$.
\end{proposition}

\begin{proof}
See \Cref{pf-thm: MUA and SSD for SMMV}.
Notably, our approach is considerably more straightforward and accessible, compared to \cite{Maccheroni-Marinacci-Rustichini-Taboga-2009} involving multiple technical steps and mathematical tools such as convex order and Hardy-Littlewood inequality.
\end{proof}

The first property stated in \Cref{thm: MUA and SSD for SMMV} is an analogue to \cite[Proposition 2.1]{Maccheroni-Marinacci-Rustichini-Taboga-2009}, which establishes a monotonicity in risk aversion: 
a certain result that is not preferred under higher risk aversion will also not be preferred under lower risk aversion. 
The second property extends the result of \cite[Theorem 2.3]{Maccheroni-Marinacci-Rustichini-Taboga-2009}, which states that ${V}_{\theta } (f) \ge {V}_{\theta } (g)$ if $f$ second-order stochastically dominates $g$.

To end this section, we establish an equivalent expression for ${V}_{ \theta, \zeta } (f)$ through a change of measure, 
analogous to the formulation of MMV preference in \cite{Maccheroni-Marinacci-Rustichini-Taboga-2009,Trybula-Zawisza-2019,Strub-Li-2020,Li-Guo-2021}. 
For $Y \in \mathbb{L}^{2}_{+} ( \Omega )$ and $\mathbb{E} [Y] = 1$, one can define the probability measure $\mathbb{Q} \ll \mathbb{P}$ on $( \Omega, \mathcal{F} )$ by $\mathbb{Q} (A) = \int_{A} Y d \mathbb{P}$.
Thus, $Y = \frac{ d \mathbb{Q} }{ d \mathbb{P} }$ is the Radon-Nikod\'ym derivative, and 
\begin{equation*}
  \mathbb{E} [ Y f ] + \frac{1}{2 \theta } ( \mathbb{E} [ {Y}^{2} ] - 1 ) 
= \mathbb{E}^{\mathbb{Q}} [f] + \frac{1}{ 2 \theta } \bigg( \mathbb{E} \Big[ \Big( \frac{ d \mathbb{Q} }{ d \mathbb{P} } \Big)^{2} \Big] - 1 \bigg),
\end{equation*}
where $\mathbb{E}^{\mathbb{Q}}$ denotes the expectation operator under $\mathbb{Q}$.
Conversely, for $\mathbb{Q} \ll \mathbb{P}$, according to Radon-Nikod\'ym theorem, there exists a Radon-Nikod\'ym derivative $Y = \frac{ d \mathbb{Q} }{ d \mathbb{P} }$.
Therefore, the MMV preference \cref{eq: biconjugate form of MMV :eq} can be re-expressed as
\begin{equation*}
  {V}_{\theta } (f) 
= \inf_{ \mathbb{Q} \ll \mathbb{P}, \frac{ d \mathbb{Q} }{ d \mathbb{P} } \in \mathbb{L}^{2} ( \Omega ) } 
  \bigg\{ \mathbb{E}^{\mathbb{Q}} [f] + \frac{1}{ 2 \theta } \bigg( \mathbb{E} \Big[ \Big( \frac{ d \mathbb{Q} }{ d \mathbb{P} } \Big)^{2} \Big] - 1 \bigg) \bigg\}
= \inf_{ \frac{ d \mathbb{Q} }{ d \mathbb{P} } \in \mathbb{L}^{2} ( \Omega ) } 
  \bigg\{ \mathbb{E}^{\mathbb{Q}} [f] + \frac{1}{ 2 \theta } C ( \mathbb{Q} \| \mathbb{P} ) \bigg\},
\end{equation*}
where $C ( \mathbb{Q} \| \mathbb{P} )$ is the so-called ``relative Gini concentration index'' given by
\begin{equation*}
  C ( \mathbb{Q} \| \mathbb{P} ) 
= \left\{ \begin{aligned}
          & \mathbb{E} \Big[ \Big( \frac{ d \mathbb{Q} }{ d \mathbb{P} } \Big)^{2} \Big] - 1, && if ~ \mathbb{Q} \ll \mathbb{P}; \\
          & + \infty, && otherwise.
          \end{aligned} \right. 
\end{equation*}
In the same manner, in conjunction with the substitution $Z = \frac{ Y - \zeta }{ 1 - \mathbb{E} [ \zeta ] }$ and the shorthand notation $\kappa := 1 - \mathbb{E} [ \zeta ]$ for brevity, 
we have the following equivalent expression for the SMMV preference:
\begin{align}\label{eq: measure optimization form of SMMV}
    {V}_{ \theta, \zeta } (f) 
& = \mathbb{E} [ f \zeta ]
  + \inf_{ Z \in \mathbb{L}^{2}_{+} ( \Omega ), ~ \mathbb{E} [Z] = 1 }
    \bigg\{ \kappa \mathbb{E} \Big[ \Big( f + \frac{\zeta }{\theta } \Big) Z \Big] + \frac{ {\kappa }^{2} }{ 2 \theta } \mathbb{E} [ {Z}^{2} ] \bigg\}
  + \frac{1}{2 \theta } \mathbb{E} [ {\zeta }^{2} ] - \frac{1}{ 2 \theta } \\ 
\notag
& = \mathbb{E} \Big[ \Big( f + \frac{\zeta }{\theta } \Big) \zeta \Big]
  + \inf_{ \frac{ d \mathbb{Q} }{ d \mathbb{P} } \in \mathbb{L}^{2} ( \Omega ) }
    \bigg\{ \kappa \mathbb{E}^{\mathbb{Q}} \Big[ f + \frac{\zeta }{\theta } \Big] + \frac{ {\kappa }^{2} }{ 2 \theta } C ( \mathbb{Q} \| \mathbb{P} ) \bigg\}
  - \frac{1}{\theta } \mathbb{E} [ \zeta ] - \frac{1}{ 2 \theta } \Var [ \zeta ]. 
\end{align}

\section{Single-period static portfolio selection}
\label{sec: Static portfolio selection}

\subsection{Model and problem formulation}

Let $r$ be the risk-free yield rate, $\vec{R}$ the vector of the yield rates of $n$ risky assets with the variance-covariance matrix $\Var [ \vec{R} ]$ under $\mathbb{P}$, 
and $\vec{\alpha }$ the ratio of wealth invested on the $n$ risky assets without any constraint.
Assume that the market has no arbitrage, and $\Var [ \vec{R} ]$ is invertible so that any asset cannot be replicated by others.
For a unit initial wealth, the terminal wealth corresponding to portfolio strategy $\vec{\alpha }$ is 
\begin{equation*}
{X}_{ \vec{\alpha } } := r ( 1 - \langle \vec{\alpha }, \vec{1} \rangle ) + \langle \vec{\alpha }, \vec{R} \rangle = r + \langle \vec{\alpha }, \vec{R} - r \vec{1} \rangle,
\end{equation*}
where $\langle \cdot, \cdot \rangle $ denotes the inner product on $\mathbb{R}^{n} \times \mathbb{R}^{n}$ and $\vec{1}$ is the vector whose components are all $1$.  

For initial wealth $x > 0$, the agent with MV preferences ${U}_{\theta }$ aims to maximize
\begin{equation*}
  \mathbb{E} [ x {X}_{ \vec{\alpha } } ] - \frac{\theta }{2} \Var [ x {X}_{ \vec{\alpha } } ]
= x r + x \Big( \langle \vec{\alpha }, \mathbb{E} [ \vec{R} - r \vec{1} ] \rangle - \frac{ x \theta }{2} \langle \vec{\alpha }, \Var [ \vec{R} ] \vec{\alpha } \rangle \Big)
\end{equation*}
on $\vec{\alpha } \in \mathbb{R}^{n}$.
Denote by $\vec{\alpha }^{*}_{mv} (x)$ the maximizer for this classical linear-quadratic optimization problem. 
It is easy to arrive at the gradient optimality condition
\begin{equation}\label{eq: MV portfolio for SP :eq}
\vec{\alpha }^{*}_{mv} (x) = \frac{1}{ x \theta } ( \Var [ \vec{R} ] )^{-1} \mathbb{E} [ \vec{R} - r \vec{1} ],
\end{equation}
which implies that the optimal investment amount $x \vec{\alpha }^{*}_{mv} (x)$ is independent of the initial wealth $x$.
Moreover, the problem with initial wealth $x > 0$ and risk aversion parameter $\theta $ is equivalent to that with unit initial wealth and risk aversion parameter $x \theta $.
This equivalence property also holds for the portfolio problems with the SMMV preferences ${V}_{ \theta, \zeta }$, since
\begin{align*}
    \sup_{ \vec{\alpha } \in \mathbb{R}^{n} } {V}_{ \theta, \zeta } ( x {X}_{ \vec{\alpha } } )
& = x \sup_{ \vec{\alpha } \in \mathbb{R}^{n} } \inf_{ Z \in \mathbb{L}^{2}_{+} ( \Omega ), ~ \mathbb{E} [Z] = 1 }
    \Big\{ \mathbb{E} [ {X}_{ \vec{\alpha } } ( \zeta + \kappa Z ) ]
         + \frac{1}{ 2 x \theta } \mathbb{E} [ ( \zeta + \kappa Z )^{2} ] - \frac{1}{ 2 x \theta } \Big\} \\
& = x \sup_{ \vec{\alpha } \in \mathbb{R}^{n} } {V}_{ x \theta, \zeta } ( {X}_{ \vec{\alpha } } ), \qquad \forall x > 0,
\end{align*}
where $\kappa = 1 - \mathbb{E} [ \zeta ]$ and $Z = \frac{ Y - \zeta }{\kappa }$.
Hence, hereafter we only consider the SMMV portfolio problems with unit initial wealth, i.e.,
\begin{equation}\label{eq: SP :eq}
  \sup_{ \vec{\alpha } \in \mathbb{R}^{n} } {V}_{ \theta, \zeta } ( {X}_{ \vec{\alpha } } )
= r + \sup_{ \vec{\alpha } \in \mathbb{R}^{n} } \inf_{ Z \in \mathbb{L}^{2}_{+} ( \Omega ), \mathbb{E} [Z] = 1 }
      \bigg\{ \langle \vec{\alpha }, \mathbb{E} [ ( \vec{R} - r \vec{1} ) ( \kappa Z + \zeta ) ] \rangle
            + \frac{1}{ 2 \theta } \mathbb{E} [ ( \kappa Z + \zeta )^{2} ] \bigg\}
    - \frac{1}{ 2 \theta },
\end{equation}
for which the maximizer is denoted by $\vec{\alpha }^{*}$.

\subsection{Properties of solution}

In this subsection, we shall characterize the solution for \cref{eq: SP :eq} at first, including the gradient condition and the max-min duality, and then discuss the existence and uniqueness of solution.
In addition, exploiting the Lagrange multiplier method, we do not only compare the SMMV and MV optimal strategies, but also conduct a linear equation algorithm for deriving the SMMV optimal strategy.
Notably, this algorithm (even for the MMV preference, i.e., $\zeta \equiv 0$) is not reported in \cite{Maccheroni-Marinacci-Rustichini-Taboga-2009}, 
and is not straightforward given the gradient condition, i.e., a system of several nonlinear equations.

Define ${\lambda }_{ \vec{\alpha } } := {\lambda }_{ {X}_{ \vec{\alpha } }, \theta, \zeta }$ for brevity.
First, we derive the gradient condition (namely, the first-order derivative condition) that is necessary and sufficient for maximality.
This extends existing results reported in \cite[Theorem 4.1]{Maccheroni-Marinacci-Rustichini-Taboga-2009}.

\begin{theorem}\label{thm: optimality condition for SP}
If the maximizer $\vec{\alpha }^{*}$ for \cref{eq: SP :eq} exists, then it fulfills the following equations:
\begin{equation}\label{eq: optimality condition for SP :eq}
\left\{ \begin{aligned}
&   \mathbb{E} [ ( \vec{R} - r \vec{1} ) \zeta ]
  + \kappa \mathbb{E} \Big[ \vec{R} - r \vec{1} \Big| {X}_{ \vec{\alpha }^{*} } + \frac{\zeta }{\theta } \le {\lambda }_{ \vec{\alpha }^{*} } \Big] \\
& - \mathbb{P} \Big( {X}_{ \vec{\alpha }^{*} } + \frac{\zeta }{\theta } \le {\lambda }_{ \vec{\alpha }^{*} } \Big)
    \Cov \Big[ \vec{R}, \zeta \Big| {X}_{ \vec{\alpha }^{*} } + \frac{\zeta }{\theta } \le {\lambda }_{ \vec{\alpha }^{*} } \Big] \\
& = \mathbb{P} \Big( {X}_{ \vec{\alpha }^{*} } + \frac{\zeta }{\theta } \le {\lambda }_{ \vec{\alpha }^{*} } \Big)
    \Var \Big[ \vec{R} \Big| {X}_{ \vec{\alpha }^{*} } + \frac{\zeta }{\theta } \le {\lambda }_{ \vec{\alpha }^{*} } \Big] \theta \vec{\alpha }^{*}, \\
& \frac{\kappa }{\theta } = \mathbb{E} \Big[ \Big( {\lambda }_{ \vec{\alpha }^{*} } - {X}_{ \vec{\alpha }^{*} } - \frac{\zeta }{\theta } \Big)_{+} \Big],
\end{aligned} \right.
\end{equation}
where $\Cov [ \cdot, \cdot | \cdot ]$ denotes the conditional covariance vector under $\mathbb{P}$.
Conversely, if \cref{eq: optimality condition for SP :eq} admits a solution $( \vec{\alpha }^{*}, {\lambda }_{ \vec{\alpha }^{*} } )$ as a system of $n+1$ equations for $n+1$ unknowns,
then $\vec{\alpha }^{*}$ realizes the maximum in \cref{eq: SP :eq}.
\end{theorem}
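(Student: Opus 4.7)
The plan is to observe that $G(\vec{\alpha}) := V_{\theta,\zeta}(X_{\vec{\alpha}})$ is concave on $\mathbb{R}^n$, since it is the composition of the concave $V_{\theta,\zeta}$ on $\mathbb{L}^2(\mathbb{P})$ with the affine map $\vec{\alpha} \mapsto X_{\vec{\alpha}}$. Concavity reduces both necessity and sufficiency to showing that the stated system \cref{eq: optimality condition for SP :eq} is equivalent to the gradient condition $\nabla G(\vec{\alpha}^*) = \vec{0}$: necessity because any interior maximizer of a concave G\^ateaux-differentiable function must satisfy the first-order condition, and sufficiency because that condition is already enough for a global maximum in the concave case.

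To compute $\nabla G$, I would invoke \Cref{thm: Gateaux differentiability and equivalent expressions of SMMV}, which gives $dV_{\theta,\zeta}(f) = \zeta + \theta(\lambda_{f,\theta,\zeta} - f - \zeta/\theta)_+$. Since $\partial_{\alpha_i} X_{\vec{\alpha}} = R_i - r$ along the $i$th coordinate direction, the chain rule for G\^ateaux derivatives yields
\[
\nabla G(\vec{\alpha}) = \mathbb{E}\bigl[(\vec{R} - r\vec{1})\bigl(\zeta + \theta(\lambda_{\vec{\alpha}} - X_{\vec{\alpha}} - \zeta/\theta)_+\bigr)\bigr].
\]
Setting $\nabla G(\vec{\alpha}^*) = \vec{0}$ and splitting off the $\zeta$-piece, the first-order condition becomes
\[
\mathbb{E}[(\vec{R} - r\vec{1})\zeta] + \theta\,\mathbb{E}\bigl[(\vec{R} - r\vec{1})(\lambda_{\vec{\alpha}^*} - X_{\vec{\alpha}^*} - \zeta/\theta)\mathbf{1}_A\bigr] = \vec{0},
\]
where $A := \{X_{\vec{\alpha}^*} + \zeta/\theta < \lambda_{\vec{\alpha}^*}\}$ is the support of the positive part.

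The key step is then to massage this expectation into the conditional form displayed in the theorem. Substituting $X_{\vec{\alpha}^*} = r + \langle \vec{\alpha}^*, \vec{R} - r\vec{1}\rangle$ and writing $p := \mathbb{P}(A)$, I would apply the conditional covariance decomposition $\mathbb{E}[UV\mathbf{1}_A] = p(\mathbb{E}[U|A]\mathbb{E}[V|A] + \Cov[U,V|A])$ with the vector $U = \vec{R}$ and scalar $V = \lambda_{\vec{\alpha}^*} - X_{\vec{\alpha}^*} - \zeta/\theta$. The piece linear in $\vec{\alpha}^*$ contributes $-p\Var[\vec{R}|A]\vec{\alpha}^*$ through $\Cov[\vec{R}, \langle \vec{\alpha}^*, \vec{R}\rangle | A]$, while the $\zeta/\theta$ piece contributes $-p\Cov[\vec{R},\zeta|A]/\theta$; the purely conditional-expectation factor is controlled by the identity $p\,\mathbb{E}[\lambda_{\vec{\alpha}^*} - X_{\vec{\alpha}^*} - \zeta/\theta | A] = \mathbb{E}[(\lambda_{\vec{\alpha}^*} - X_{\vec{\alpha}^*} - \zeta/\theta)_+] = \kappa/\theta$, which is precisely \cref{eq: lambda-equation :eq} at $f = X_{\vec{\alpha}^*}$ and also supplies the second equation of \cref{eq: optimality condition for SP :eq}. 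Collecting the pieces yields the stated first equation.

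The hard part will be bookkeeping rather than any deep idea: one must carefully track how the affine structure of $X_{\vec{\alpha}^*}$ interacts with the conditional covariance expansion so that the cross term involving $\vec{\alpha}^*$ ends up as $p\Var[\vec{R}|A]\theta\vec{\alpha}^*$ on the right-hand side, while the deterministic factors on $A$ collapse into exactly $\kappa\mathbb{E}[\vec{R} - r\vec{1}|A]$ and $-p\Cov[\vec{R},\zeta|A]$. Once this cancellation is in place, equivalence with \cref{eq: optimality condition for SP :eq} is immediate, and both directions of the theorem follow from the concavity of $G$.
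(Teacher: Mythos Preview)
Your proposal is correct and follows essentially the same approach as the paper: both reduce the claim to the first-order condition via concavity of $V_{\theta,\zeta}\circ X_{(\cdot)}$, compute the gradient from \Cref{thm: Gateaux differentiability and equivalent expressions of SMMV}, and then rewrite the resulting expectation in conditional form using \cref{eq: lambda-equation :eq} and the affine structure of $X_{\vec{\alpha}^*}$. The only cosmetic difference is that the paper first isolates $\lambda_{\vec{\alpha}^*}\mathbb{P}(A)$ via iterated conditioning before expanding the covariance, whereas you go straight to the conditional-covariance decomposition; the algebra is the same.
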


\begin{proof}
See \Cref{pf-thm: optimality condition for SP}.
\end{proof}

Combine \cref{eq: SP :eq} with \Cref{thm: Gateaux differentiability and equivalent expressions of SMMV},
we conclude that the pair $( \vec{\alpha }^{*}, \frac{\theta }{\kappa } ( {\lambda }_{ \vec{\alpha }^{*} } - {X}_{ \vec{\alpha }^{*} } - \frac{\zeta }{\theta } )_{+} )$, if  exists,
solves the linear-quadratic max-min problem formulated by
\begin{equation}\label{eq: max-min problem for SP :eq}
\mathcal{M} := \max_{ \vec{\alpha } \in \mathbb{R}^{n} } \min_{ Z \in \mathbb{L}^{2}_{+} ( \Omega ), \mathbb{E} [Z] = 1 } 
               \Big\{ \langle \theta \vec{\alpha }, \mathbb{E} [ ( \vec{R} - r \vec{1} ) ( \kappa Z + \zeta ) ] \rangle + \frac{1}{2} \mathbb{E} [ ( \kappa Z + \zeta )^{2} ] - \frac{1}{2} \Big\}.
\end{equation}
In particular, taking $\vec{\alpha } = 0$ provides the lower bound of \cref{eq: max-min problem for SP :eq}; that is,
\begin{equation*}
\mathcal{M} \ge \min_{ Z \in \mathbb{L}^{2}_{+} ( \Omega ), \mathbb{E} [Z] = 1 } \bigg\{ \frac{1}{2} \mathbb{E} [ ( \kappa Z + \zeta )^{2} ] - \frac{1}{2} \bigg\}
              = \frac{1}{2} \min_{ Z \in \mathbb{L}^{2}_{+} ( \Omega ), \mathbb{E} [Z] = 1 } \Var [ \kappa Z + \zeta ] 
              = 0.
\end{equation*}
Consequently, plugging $\vec{\alpha } = \vec{\alpha }^{*}$ and $Z = \frac{ 1 - \zeta }{\kappa }$ into the right-hand side of \cref{eq: max-min problem for SP :eq} yields $\langle \vec{\alpha }^{*}, \mathbb{E} [ \vec{R} ] - r \vec{1} \rangle \ge 0$. 
In general, we have the following theorem regarding the existence of a solution.

\begin{proposition}\label{lem: saddle point in SP}
$( \vec{\alpha }^{*}, {Z}_{*} )$ solves the max-min problem formulated by \cref{eq: max-min problem for SP :eq}, if and only if $( \vec{\alpha }^{*}, {Z}_{*} )$ is the saddle point for \cref{eq: max-min problem for SP :eq}.
\end{proposition}

\begin{proof}
See \Cref{pf-lem: saddle point in SP}.
\end{proof}

For the embedded minimization problem in \cref{eq: max-min problem for SP :eq} corresponding to the maximizer $\vec{\alpha }^{*}$, the minimizer
\begin{equation*}
{Z}_{*} := \frac{\theta }{\kappa } \Big( {\lambda }_{ \vec{\alpha }^{*} } - {X}_{ \vec{\alpha }^{*} } - \frac{\zeta }{\theta } \Big)_{+}
\in \argmin_{ Z \in \mathbb{L}^{2}_{+} ( \Omega ), \mathbb{E} [Z] = 1 }
    \Big\{ \mathbb{E} [ \langle \theta \vec{\alpha }^{*}, \vec{R} \rangle Z ] + \frac{\kappa }{2} \mathbb{E} \Big[ \Big( Z + \frac{\zeta }{\kappa } \Big)^{2} \Big] \Big\}.
\end{equation*}
For this constrained problem, one can define the Lagrangian $\mathcal{L}_{1}: \mathbb{L}^{2} ( \Omega ) \times \mathbb{L}^{2} ( \Omega ) \times \mathbb{R} \to \mathbb{R}$ by
\begin{equation*}
\mathcal{L}_{1} ( Z, \beta, \mu ) 
:= \frac{\kappa }{2} \mathbb{E} \Big[ \Big( Z + \frac{\zeta }{\kappa } \Big)^{2} \Big] + \mathbb{E} [ \langle \theta \vec{\alpha }^{*}, \vec{R} \rangle Z ]
 - \mathbb{E} [ \beta Z ] - \mu ( \mathbb{E} [Z] - 1 ),
\end{equation*}
and then obtains the Karush-Kuhn-Tucker (KKT) condition
\begin{equation}\label{eq: KKT condition :eq}
\left\{ \begin{aligned}
& 0 = \kappa {Z}_{*} + \zeta + \langle \theta \vec{\alpha }^{*}, \vec{R} \rangle - \beta - \mu, \quad \mathbb{P}-a.s.; \\
& \mathbb{E} [ {Z}_{*} ] = 1; \\
& \beta \ge 0, ~ {Z}_{*} \ge 0, ~ \beta {Z}_{*} = 0, ~ \mathbb{P}-a.s.
\end{aligned} \right.
\end{equation}
Then, the following \Cref{lem: express SMMV portfolio} provides the expression for $\vec{\alpha }^{*}$ in terms of the Lagrange multiplier $\beta $ and the MV optimal portfolio $\vec{\alpha }^{*}_{mv} (1)$ given by \cref{eq: MV portfolio for SP :eq}.

\begin{proposition}\label{lem: express SMMV portfolio}
Assume that $\vec{\alpha }^{*}$ realizes the maximum in \cref{eq: SP :eq}, 
and $( {Z}_{*}, \beta, \mu ) \in \mathbb{L}^{2} ( \Omega ) \times \mathbb{L}^{2} ( \Omega ) \times \mathbb{R}$ solves the KKT condition \cref{eq: KKT condition :eq}.
Then,
\begin{align}\label{eq: express SMMV portfolio by MV portfolio :eq}
& \vec{\alpha }^{*} = \vec{\alpha }^{*}_{mv} (1) + \frac{1}{\theta } ( \Var [ \vec{R} ] )^{-1} \Cov [ \vec{R}, \beta ], \\
\label{eq: express SMMV portfilio by Lagrange multiplier :eq}
& \langle \theta \vec{\alpha }^{*}, \Cov [ \vec{R}, \beta ] \rangle = \Var [ \beta ] + \mathbb{E} [ \beta ( 1 - \zeta ) ].
\end{align}
Furthermore, ${X}_{ \vec{\alpha }^{*} } \in \mathcal{G}_{ \theta, \zeta }$ is equivalent to $\Var [ \beta ] = 0$, which implies $\vec{\alpha }^{*} = \vec{\alpha }^{*}_{mv} (1)$ due to \cref{eq: express SMMV portfolio by MV portfolio :eq}.
\end{proposition}

\begin{proof}
See \Cref{pf-lem: express SMMV portfolio}.
\end{proof}

However, the Lagrange multiplier $\beta $ is implicit, even though given the KKT condition \cref{eq: KKT condition :eq}.
We aim to compare $\vec{\alpha }^{*}$ with $\vec{\alpha }^{*}_{mv} (1)$ based on the sign of $\vec{\alpha }^{*}$, similar to \cite[Proposition 4.1]{Maccheroni-Marinacci-Rustichini-Taboga-2009}.
The results are collected in the following \Cref{thm: optimal portfolio comparison}. 
Notably, unlike \cite[Proposition 4.1]{Maccheroni-Marinacci-Rustichini-Taboga-2009}, we do not assume the finiteness of $\Omega $.

\begin{proposition}\label{thm: optimal portfolio comparison}
Assume that there is only one risky asset, so that the vector triplet $( \vec{R}, \vec{\alpha }^{*}, \vec{\alpha }^{*}_{mv} )$ reduces to $( R, {\alpha }^{*}, {\alpha }^{*}_{mv} ) \in \mathbb{L}^{2} ( \Omega ) \times \mathbb{R} \times \mathbb{R}$.
Then,
\begin{equation*}
{\alpha }^{*} = {\alpha }^{*}_{mv} (1) + \frac{ \Cov [ R, \beta ] }{ \theta \Var [R] }, \quad
{\alpha }^{*} \Cov [ R, \beta ] = \frac{1}{\theta } \big( \Var [ \beta ] + \mathbb{E} [ \beta ( 1 - \zeta ) ] \big) \ge 0.
\end{equation*}
Consequently,
\begin{equation*}
\left\{ \begin{aligned}
& {\alpha }^{*} > 0 \quad \Rightarrow \quad \mathbb{E} [R] \ge r, ~ \Cov [ R, \beta ] \ge 0 \quad \Rightarrow \quad {\alpha }^{*} \ge {\alpha }^{*}_{mv} (1) \ge 0; \\
& {\alpha }^{*} < 0 \quad \Rightarrow \quad \mathbb{E} [R] \le r, ~ \Cov [ R, \beta ] \le 0 \quad \Rightarrow \quad {\alpha }^{*} \le {\alpha }^{*}_{mv} (1) \le 0; \\
& {\alpha }^{*} = 0 \quad \Rightarrow \quad {X}_{ {\alpha }^{*} } \in \mathcal{G}_{ \theta, \zeta } \quad \Rightarrow \quad \Var[ \beta ] = 0 
                    \quad \Rightarrow \quad \Cov [ R, \beta ] = 0 \quad \Rightarrow \quad {\alpha }^{*} = {\alpha }^{*}_{mv} (1).
\end{aligned} \right.
\end{equation*}
In particular, if $\mathbb{P} ( {X}_{ {\alpha }^{*} } + \frac{\zeta }{\theta } > {\lambda }_{ {\alpha }^{*} } ) > 0$, i.e. ${X}_{ {\alpha }^{*} } \notin \mathcal{G}_{ \theta, \zeta }$, then
\begin{equation*}
\left\{ \begin{aligned}
& {\alpha }^{*} > 0 \quad \Rightarrow \quad \Cov [ R, \beta ] > 0 \quad \Rightarrow \quad {\alpha }^{*} > {\alpha }^{*}_{mv} (1); \\
& {\alpha }^{*} < 0 \quad \Rightarrow \quad \Cov [ R, \beta ] < 0 \quad \Rightarrow \quad {\alpha }^{*} < {\alpha }^{*}_{mv} (1).
\end{aligned} \right.
\end{equation*}
\end{proposition}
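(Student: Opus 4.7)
The plan is to deduce the proposition as a direct scalar specialization of the preceding Proposition \ref{lem: express SMMV portfolio}, combined with two ingredients already in hand: the non-negativity $\beta \ge 0$ from the KKT condition \cref{eq: KKT condition :eq}, and the standing hypothesis $\zeta \le 1$.

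First I would specialize \cref{eq: express SMMV portfolio by MV portfolio :eq} and \cref{eq: express SMMV portfilio by Lagrange multiplier :eq} to $n=1$, which immediately yields the two displayed equalities. The non-negativity of the right-hand side of the second is then automatic: $\Var[\beta] \ge 0$ always, and $\beta \ge 0$ together with $\zeta \le 1$ give $\mathbb{E}[\beta(1-\zeta)] \ge 0$, so $\alpha^* \Cov[R,\beta] \ge 0$.

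For the sign implications, I would next invoke the observation recorded just before Proposition \ref{lem: express SMMV portfolio} (obtained by taking $Z = (1-\zeta)/\kappa$ in \cref{eq: max-min problem for SP :eq}) that $\alpha^*(\mathbb{E}[R] - r) \ge 0$. If $\alpha^* > 0$, this forces $\mathbb{E}[R] \ge r$ and hence $\alpha^*_{mv}(1) \ge 0$ by the scalar form of \cref{eq: MV portfolio for SP :eq}, while $\alpha^*\Cov[R,\beta] \ge 0$ forces $\Cov[R,\beta] \ge 0$; substituting into the scalar form of \cref{eq: express SMMV portfolio by MV portfolio :eq} delivers $\alpha^* \ge \alpha^*_{mv}(1) \ge 0$. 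The case $\alpha^* < 0$ is symmetric. If $\alpha^* = 0$, then $X_{\alpha^*} = r$ is deterministic and thus lies in $\mathcal{G}_{\theta,\zeta}$, so the last assertion of Proposition \ref{lem: express SMMV portfolio} gives $\Var[\beta] = 0$; hence $\beta$ is a.s.\ constant, $\Cov[R,\beta] = 0$, and the scalar form of \cref{eq: express SMMV portfolio by MV portfolio :eq} yields $\alpha^*_{mv}(1) = 0$.

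For the strict inequality block, the assumption $X_{\alpha^*} \notin \mathcal{G}_{\theta,\zeta}$ together with the same last assertion of Proposition \ref{lem: express SMMV portfolio} gives $\Var[\beta] > 0$, upgrading $\alpha^*\Cov[R,\beta] \ge 0$ to the strict inequality $\alpha^*\Cov[R,\beta] > 0$; combining with the sign of $\alpha^*$ yields the strict comparison with $\alpha^*_{mv}(1)$. There is no serious obstacle here: the argument is almost entirely a sign-tracking exercise built on Proposition \ref{lem: express SMMV portfolio}. The only point meriting care is that without the bound $\zeta \le 1$, the term $\mathbb{E}[\beta(1-\zeta)]$ could be negative and the sign implications would collapse; the hypothesis $\zeta \le 1$ is precisely what reinstates the alignment of the signs of $\alpha^*$ and $\Cov[R,\beta]$.
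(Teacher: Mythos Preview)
Your proposal is correct and follows exactly the route the paper indicates: the paper omits the proof, noting only that it is straightforward given Proposition~\ref{lem: express SMMV portfolio}, \cref{eq: MV portfolio for SP :eq}, and the previously established inequality $\alpha^{*}(\mathbb{E}[R]-r)\ge 0$, which are precisely the three ingredients you invoke. Your sign-tracking argument, including the use of $\zeta\le 1$ to secure $\mathbb{E}[\beta(1-\zeta)]\ge 0$ and the appeal to the last assertion of Proposition~\ref{lem: express SMMV portfolio} for the equivalence $\Var[\beta]=0\Leftrightarrow X_{\alpha^{*}}\in\mathcal{G}_{\theta,\zeta}$, is exactly what the omitted proof would contain.
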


\begin{proof}
Given \Cref{lem: express SMMV portfolio} with \cref{eq: MV portfolio for SP :eq} and ${\alpha }^{*} ( \mathbb{E} [R] - r ) \ge 0$, the proof is straightforward, so we omit the details for brevity. 
\end{proof}

Now we provide an algorithm for deriving the solution $( \vec{\alpha }^{*}, {Z}_{*} )$ for \cref{eq: max-min problem for SP :eq}, in order to facilitate practical applications of the SMMV preferences.

\begin{theorem}\label{thm: Lagrange multiplier as solution}
Suppose that ${Y}_{*}$ is the unique solution of the following minimization problem:
\begin{equation}\label{eq: minimization problem in SP :eq}
\text{minimizing} \quad \mathbb{E} [ |Y|^{2} ] \quad
\text{subject to} \quad Y \in \mathbb{L}^{2}_{ \zeta + } ( \Omega ), ~ \mathbb{E} [Y] = 1, ~ \mathbb{E} [ ( \vec{R} - r \vec{1} ) Y ] = \vec{0}.
\end{equation}
Then, arising from the Lagrangian $\mathcal{L}_{2}: \mathbb{L}^{2} ( \Omega ) \times \mathbb{L}^{2} ( \Omega ) \times \mathbb{R} \times \mathbb{R}^{n} \to \mathbb{R}$ formulated by
\begin{equation*}
\mathcal{L}_{2} ( Y, \beta, \mu, \vec{\alpha } ) 
:= \frac{1}{2} \mathbb{E} [ {Y}^{2} ] - \mathbb{E} [ \beta ( Y - \zeta ) ] + \theta ( r - \mu ) ( \mathbb{E} [Y] - 1 ) + \theta \langle \vec{\alpha }, \mathbb{E} [ ( \vec{R} - r \vec{1} ) Y ] \rangle
\end{equation*}
with the following KKT condition:
\begin{equation}\label{eq: modified KKT condition in SP :eq}
\left\{ \begin{aligned}
& 0 = {Y}_{*} - \beta - \theta ( \mu - r ) + \theta \langle \vec{\alpha }, \vec{R} - r \vec{1} \rangle, \quad \mathbb{P}-a.s.; \\
& \mathbb{E} [ {Y}_{*} ] = 1, \quad \mathbb{E} [ ( \vec{R} - r \vec{1} ) {Y}_{*} ] = \vec{0}; \\
& \beta \ge 0, ~ {Y}_{*} \ge \zeta, ~ \beta ( {Y}_{*} - \zeta ) = 0, \quad \mathbb{P}-a.s.,
\end{aligned} \right.
\end{equation}
$\vec{\alpha }$ satisfies ${Y}_{*} = \zeta + \theta ( {\lambda }_{ \vec{\alpha } } - {X}_{ \vec{\alpha } } - \frac{\zeta }{\theta } )_{+}$ and $\mu = {\lambda }_{ \vec{\alpha } }$, and realizes the maximum for \cref{eq: SP :eq}.
Furthermore, if $\Var [ \vec{R} | {Y}_{*} > \zeta ]$ is invertible, then 
\begin{equation*}
\vec{\alpha } = \frac{1}{\theta } ( \Var [ \vec{R} | {Y}_{*} > \zeta ] )^{-1} 
                \bigg( \frac{1}{ \mathbb{P} ( {Y}_{*} > \zeta ) } \Big( \mathbb{E} [ ( \vec{R} - r \vec{1} ) \zeta ] + \kappa \mathbb{E} [ \vec{R} - r \vec{1} | {Y}_{*} > \zeta ] \Big) 
                     - \Cov [ \vec{R}, \zeta | {Y}_{*} > \zeta ] \bigg).
\end{equation*}
\end{theorem}

\begin{proof}
See \Cref{pf-thm: Lagrange multiplier as solution}.
\end{proof}

Notably, the two Lagrange multipliers $\beta $ in the KKT conditions \cref{eq: modified KKT condition in SP :eq} and \cref{eq: KKT condition :eq} coincide,
since these two KKT conditions coincide under some appropriate variable substitution for $\mu $.
This coincidence not only implies that \Cref{lem: express SMMV portfolio} and \Cref{thm: optimal portfolio comparison} also hold for $\beta $ as defined by \cref{eq: modified KKT condition in SP :eq},
but also highlights the necessity of such a minimizer ${Y}_{*}$ for the existence of $\vec{\alpha }^{*}$. 
In other words, $\vec{\alpha }^{*}$ realizes the maximum in \cref{eq: SP :eq}, if and only if
${Y}_{*} = \zeta + \theta ( {\lambda }_{ \vec{\alpha }^{*} } - {X}_{ \vec{\alpha }^{*} } - \frac{\zeta }{\theta } )_{+}$ realizes the minimum in \cref{eq: minimization problem in SP :eq}.
In particular, if
\begin{equation*}
\zeta \le 1 - \langle \vec{R} - \mathbb{E} [ \vec{R} ], ( \Var [ \vec{R} ] )^{-1} \mathbb{E} [ \vec{R} - r \vec{1} ] \rangle
\in \argmin_{ Y \in \mathbb{L}^{2} ( \Omega ), ~ \mathbb{E} [Y] = 1, ~ \mathbb{E} [ ( \vec{R} - r \vec{1} ) Y ] = \vec{0} } \mathbb{E} [ {Y}^{2} ],
\end{equation*}
then ${Y}_{*} = 1 - \langle \vec{R} - \mathbb{E} [ \vec{R} ], ( \Var [ \vec{R} ] )^{-1} \mathbb{E} [ \vec{R} - r \vec{1} ] \rangle $ is the minimizer for \cref{eq: minimization problem in SP :eq},
and the constraint $Y \in \mathbb{L}^{2}_{ \zeta + } ( \Omega )$ therein is not tight, resulting in $\beta = 0$.
By \Cref{lem: express SMMV portfolio}, it then follows immediately that $\vec{\alpha }^{*} = \vec{\alpha }^{*}_{mv} (1)$.
In general, to derive $\vec{\alpha }^{*}$, it suffices to solve the following linear system for $( \vec{\alpha}, \mu, \beta )$ arising from \cref{eq: modified KKT condition in SP :eq}:
\begin{equation}\label{eq: linear system given by KKT :eq}
\left\{ \begin{aligned}
& 1 + \theta r = \mathbb{E} [ \beta ] + \theta \mu - \theta ( \mathbb{E} [ \vec{R} ] - r \vec{1} )^{\top } \vec{\alpha }, \\ 
& \theta r ( \mathbb{E} [ \vec{R} ] - r \vec{1} ) = \mathbb{E} [ ( \vec{R} - r \vec{1} ) \beta ] + \theta ( \mathbb{E} [ \vec{R} ] - r \vec{1} ) \mu - \theta \mathbb{E} [ ( \vec{R} - r \vec{1} ) ( \vec{R} - r \vec{1} )^{\top } ] \vec{\alpha }, \\
& \zeta + \theta r = \beta + \theta \mu - \theta ( \vec{R} - r \vec{1} )^{\top } \vec{\alpha }, \quad on ~ \{ \beta > 0 \}.
\end{aligned} \right.
\end{equation}

\begin{remark}[existence of solution]\label{rem: no static solution}
Let us treat $Y$ as a Radon-Nikod\'ym derivative of the risk-neutral measure $\mathbb{Q}$ with respect to $\mathbb{P}$, as $\mathbb{E} [ \vec{R} Y ] = r \vec{1}$.
According to the first fundamental theorem of asset pricing (namely, Dalang-Morton-Willinger theorem, see \cite[Theorem 6.5.1]{Delbaen-Schachermayer-2006}), the no-arbitrage condition ensures that 
$\mathbb{Y}_{0} := \{ Y \in \mathbb{L}^{2}_{+} ( \Omega ): ~ \mathbb{E} [Y] = 1, ~ \mathbb{E} [ ( \vec{R} - r \vec{1} ) Y ] = \vec{0} \}$ is not empty.
Notably, the no-arbitrage condition can be replaced by the no-free-lunch-with-vanishing-risk condition, according to Kreps-Yan Theorem (see \cite[Theorem 5.2.2, p. 77]{Delbaen-Schachermayer-2006}).
However, if $\mathbb{L}^{2}_{ \zeta + } ( \Omega ) \cap \mathbb{Y}_{0} = \varnothing $, then $\mathbb{Y} = \varnothing $, and hence the maximization problem given by \cref{eq: SP :eq} has no solution.
Nevertheless, as $\zeta $ can be artificially chosen and $\mathbb{Y}_{0}$ completely relies on the market, 
one can take $\zeta \le Y \in \mathbb{Y}_{0}$ in advance to avoid these situations where there is no solution.

Suppose that $\mathbb{Y} \ne \varnothing $.
If $\mathcal{F}$ is generated by a finite partition of $\Omega $, including the case with finite $\Omega $, then the existence of minimizer is a straightforward result of Weierstrass Theorem.
In general, observing that the minimization problem given by \cref{eq: minimization problem in SP :eq} is defined on a (weakly) closed convex subset of reflexive Hilbert space $\mathbb{L}^{2} ( \Omega )$,
and thus $\mathbb{Y} \cap \{ Y \in \mathbb{L}^{2} ( \Omega ): \mathbb{E} [ {Y}^{2} ] \le t \}$ for some $t$ is weakly compact (according to Kakutani's Theorem), 
we refer to the infinite-dimensional version of the Weierstrass Theorems (see, e.g., \cite[Theorems 2.3.4 and 2.3.5, p. 56]{Bobylev-Emelyanov-Korovin-1999}) to conclude the existence of solution.
\end{remark}

\subsection{Some examples}

Let us consider the three examples provided by \cite[Section 6]{Maccheroni-Marinacci-Rustichini-Taboga-2009}, where 
$\theta = 10$, $r = 1$ and other data are listed in the following \Cref{tab: three examples given by Maccheroni et al}.

\begin{table}[H]
    \centering
    \begin{tabular}{ccccc}
    \hline
              &                          & Example 1 ($n=1$) & Example 2 ($n=1$) & Example 3 ($n=2$)                 \\
                                           \cmidrule(r){3-3}   \cmidrule(r){4-4}   \cmidrule(r){5-5}
              & $\mathbb{P} ( {s}_{i} )$ & $R ( {s}_{i} )$   & $R ( {s}_{i} )$   & $( \vec{R} ( {s}_{i} ) )^{\top }$    \\
    \hline
    ${s}_{1}$ & $0.1$                    & $0.97$            & $0.97$            & $( 0.97, 1.05 )$ \\ 
    ${s}_{2}$ & $0.2$                    & $0.99$            & $0.99$            & $( 0.99, 1.00 )$ \\
    ${s}_{3}$ & $0.4$                    & $1.01$            & $1.01$            & $( 1.01, 0.99 )$ \\
    ${s}_{4}$ & $0.2$                    & $1.03$            & $1.03$            & $( 1.03, 0.99 )$ \\
    ${s}_{5}$ & $0.1$                    & $1.05$            & $1.10$            & $( 1.10, 0.99 )$ \\
    \multicolumn{2}{c}{$\vec{\alpha }^{*}_{mv} (1)$}
                                         & $2.08333$         & $1.35747$        & $( 1.66140, 1.04952 )$ \\
    \multicolumn{2}{c}{${y}^{\S}$}       & $1/6$             & $- 0.153846$     & $- 0.328228$ \\
    \hline
    \end{tabular}
    \caption{\centering Examples given by \protect\cite[Section 6]{Maccheroni-Marinacci-Rustichini-Taboga-2009}, \\
             where ${y}^{\S} := \inf_{i} \{ 1 - \langle \vec{R} ( {s}_{i} ) - \mathbb{E} [ \vec{R} ], ( \Var [ \vec{R} ] )^{-1} \mathbb{E} [ \vec{R} - r \vec{1} ] \rangle \}$.}
    \label{tab: three examples given by Maccheroni et al}
\end{table}

In Example 1, ${y}^{\S} = 1/6$ means that the optimal strategies for SMMV and MV preferences coincide if one chooses the constant $\zeta ( \cdot ) \equiv \zeta \le 1/6$.
This coincidence disappears in Examples 2 and 3, since ${y}^{\S} < 0$ therein.
Indeed, in Example 1, ${\alpha }^{*} = {\alpha }^{*}_{mv} (1)$ if $\zeta ( {s}_{5} ) \le 1/6$ and $\zeta ( {s}_{4} ) \le 7/12$, as illustrated in \Cref{tab: Example 1 with random zeta -a}.
However, as we merely change $\zeta ( {s}_{5} )$ to $1/5$, the result becomes significantly different, e.g., the optimal investment ratio is greater than that for the constant $\zeta \equiv 0.5$;
see \Cref{tab: Example 1 with random zeta -b,tab: Example 1 with deterministic zeta -d}.
Moreover, these change makes the set $\{ {X}_{ {\alpha }^{*} } \le {\lambda }_{ {\alpha }^{*} } - \frac{\zeta }{\theta } \}$, 
where the optimal wealth ${X}_{ {\alpha }^{*} }$ falls below the truncation level ${\lambda }_{ {\alpha }^{*} } - \frac{\zeta }{\theta }$ (as mentioned in \Cref{subsec: Properties of SMMV preference}),
shrink from $\Omega $ to $\{ {s}_{1}, {s}_{2} \}$.

\begin{table}[H]
    \centering
    \begin{subtable}[t]{0.495\linewidth}\centering
    \begin{tabular}{ccccr}
    \hline 
              & $\zeta $ & ${Y}_{*}$ & ${X}_{ {\alpha }^{*} }$ & ${\lambda }_{ {\alpha }^{*} } - \frac{\zeta }{\theta }$ \\
    \hline
    ${s}_{1}$ & $1$      & $11/6$    & $0.93750$               & $1.02083$ \\ 
    ${s}_{2}$ & $1$      & $17/12$   & $0.97917$               & $1.02083$ \\
    ${s}_{3}$ & $1$      & $1$       & $1.02083$               & $1.02083$ \\
    ${s}_{4}$ & $7/12$   & $7/12$    & $1.06250$               & $1.06250$ \\
    ${s}_{5}$ & $1/6$    & $1/6$     & $1.10417$               & $1.10417$ \\
    \multicolumn{3}{l}{${\alpha }^{*} = 2.08333$} & 
    \multicolumn{2}{r}{${\lambda }_{ {\alpha }^{*} } = 1.12083$} \\
    \hline
    \end{tabular}
    \caption{${\alpha }^{*} = {\alpha }^{*}_{mv} (1)$}
    \label{tab: Example 1 with random zeta -a}
    \end{subtable}
    \begin{subtable}[t]{0.495\linewidth}\centering
    \begin{tabular}{ccccr}
    \hline 
              & $\zeta $ & ${Y}_{*}$ & ${X}_{ {\alpha }^{*} }$ & ${\lambda }_{ {\alpha }^{*} } - \frac{\zeta }{\theta }$ \\
    \hline
    ${s}_{1}$ & $1$      & $29/15$   & $0.91250$               & $1.00583$ \\ 
    ${s}_{2}$ & $1$      & $27/20$   & $0.97083$               & $1.00583$ \\ 
    ${s}_{3}$ & $1$      & $1$       & $1.02917$               & $1.00583$ \\ 
    ${s}_{4}$ & $7/12$   & $7/12$    & $1.08750$               & $1.04750$ \\ 
    ${s}_{5}$ & $1/5$    & $1/5$     & $1.14583$               & $1.08583$ \\ 
    \multicolumn{3}{l}{${\alpha }^{*} = 2.91667$} & 
    \multicolumn{2}{r}{${\lambda }_{ {\alpha }^{*} } = 1.10583$} \\
    \hline
    \end{tabular}    
    \caption{${\alpha }^{*} \ne {\alpha }^{*}_{mv} (1)$}
    \label{tab: Example 1 with random zeta -b}
    \end{subtable}
    \caption{Example 1 with random $\zeta $.}
    \label{tab: Example 1 with random zeta}
\end{table}

Using the linear system \cref{eq: linear system given by KKT :eq}, we can easily make numerical experiments with different deterministic $\zeta $.
The results are summarized in \Cref{tab: Example 1 with deterministic zeta}.
For $\zeta \le 1/6$, the results are the same as those in \Cref{tab: Example 1 with deterministic zeta -a}, as previously noted.
For $\zeta > 0.75$, no solution exists.
As $\zeta $ increases, two clear monotonic trends emerge: The SMMV optimal investment ratio increases, while the truncation level decreases. These trends can be explained by the following mechanism:
Larger values of $\zeta $ place greater weight on extreme payoffs, since the marginal utility per unit of such outcomes is at least $\zeta $. 
Consequently, the SMMV optimal investment strategy becomes increasingly aggressive with rising $\zeta $, leading to a greater dispersion in terminal wealth ${X}_{ {\alpha }^{*} }$. 
This, in turn, causes a decline in the truncation level ${\lambda }_{ {\alpha }^{*} } - \frac{\zeta }{\theta }$.

\begin{table}[H]
    \centering
    \begin{subtable}[t]{0.495\linewidth}\centering
    \begin{tabular}{cclr}
    \hline 
              & ${Y}_{*}$ & $\beta $ & ${X}_{ {\alpha }^{*} }$ \\
    \hline
    ${s}_{1}$ & $11/6$    & $0$      & $0.93750$ \\ 
    ${s}_{2}$ & $17/12$   & $0$      & $0.97917$ \\
    ${s}_{3}$ & $1$       & $0$      & $1.02083$ \\
    ${s}_{4}$ & $7/12$    & $0$      & $1.06250$ \\
    ${s}_{5}$ & $1/6$     & $0$      & $1.10417$ \\
    \multicolumn{2}{l}{${\alpha }^{*} = 2.08333$} & 
    \multicolumn{2}{r}{${\lambda }_{ {\alpha }^{*} } - \frac{\zeta }{\theta } \equiv 1.10417$} \\
    \hline
    \end{tabular}
    \caption{Constant $\zeta \equiv 1/6$}
    \label{tab: Example 1 with deterministic zeta -a}
    \end{subtable}
    \begin{subtable}[t]{0.495\linewidth}\centering
    \begin{tabular}{cclr}
    \hline 
              & ${Y}_{*}$ & $\beta $  & ${X}_{ {\alpha }^{*} }$ \\
    \hline
    ${s}_{1}$ & $1.84706$ & $0$       & $0.93603$ \\ 
    ${s}_{2}$ & $1.42059$ & $0$       & $0.97868$ \\
    ${s}_{3}$ & $0.99412$ & $0$       & $1.02132$ \\
    ${s}_{4}$ & $0.56765$ & $0$       & $1.06397$ \\
    ${s}_{5}$ & $0.2$     & $0.05882$ & $1.10662$ \\
    \multicolumn{2}{l}{${\alpha }^{*} = 2.13235$} &
    \multicolumn{2}{r}{${\lambda }_{ {\alpha }^{*} } - \frac{\zeta }{\theta } \equiv 1.10074$} \\
    \hline
    \end{tabular}
    \caption{$\zeta \equiv 0.2$}
    \end{subtable} \\ 
    \vspace{1em}
    \begin{subtable}[t]{0.495\linewidth}\centering
    \begin{tabular}{cclr}
    \hline 
              & ${Y}_{*}$ & $\beta $  & ${X}_{ {\alpha }^{*} }$ \\
    \hline
    ${s}_{1}$ & $1.95$    & $0$       & $0.925$ \\ 
    ${s}_{2}$ & $1.45$    & $0$       & $0.975$ \\
    ${s}_{3}$ & $0.95$    & $0$       & $1.025$ \\
    ${s}_{4}$ & $0.45$    & $0$       & $1.075$ \\
    ${s}_{5}$ & $0.45$    & $0.5$     & $1.125$ \\
    \multicolumn{2}{l}{${\alpha }^{*} = 2.5$} &
    \multicolumn{2}{r}{${\lambda }_{ {\alpha }^{*} } - \frac{\zeta }{\theta } \equiv 1.075$} \\
    \hline
    \end{tabular}
    \caption{$\zeta \equiv 0.45$}
    \end{subtable} 
    \begin{subtable}[t]{0.495\linewidth}\centering
    \begin{tabular}{cclr}
    \hline 
              & ${Y}_{*}$ & $\beta $  & ${X}_{ {\alpha }^{*} }$ \\
    \hline
    ${s}_{1}$ & $2.03846$ & $0$       & $0.91346$ \\ 
    ${s}_{2}$ & $1.46154$ & $0$       & $0.97115$ \\
    ${s}_{3}$ & $0.88462$ & $0$       & $1.02885$ \\
    ${s}_{4}$ & $0.5$     & $0.19231$ & $1.08654$ \\
    ${s}_{5}$ & $0.5$     & $0.76923$ & $1.14423$ \\
    \multicolumn{2}{l}{${\alpha }^{*} = 2.88462$} &
    \multicolumn{2}{r}{${\lambda }_{ {\alpha }^{*} } - \frac{\zeta }{\theta } \equiv 1.06731$} \\
    \hline
    \end{tabular}
    \caption{$\zeta \equiv 0.5$}
    \label{tab: Example 1 with deterministic zeta -d}
    \end{subtable} \\
    \vspace{1em}
    \begin{subtable}[t]{0.495\linewidth}\centering
    \begin{tabular}{cclr}
    \hline 
              & ${Y}_{*}$ & $\beta $  & ${X}_{ {\alpha }^{*} }$ \\
    \hline
    ${s}_{1}$ & $2.21538$ & $0$       & $0.89038$ \\ 
    ${s}_{2}$ & $1.48462$ & $0$       & $0.96346$ \\
    ${s}_{3}$ & $0.75385$ & $0$       & $1.03654$ \\
    ${s}_{4}$ & $0.6$     & $0.57692$ & $1.10962$ \\
    ${s}_{5}$ & $0.6$     & $1.30769$ & $1.18269$ \\
    \multicolumn{2}{l}{${\alpha }^{*} = 3.65385$} &
    \multicolumn{2}{r}{${\lambda }_{ {\alpha }^{*} } - \frac{\zeta }{\theta } \equiv 1.05192$} \\
    \hline
    \end{tabular}
    \caption{$\zeta \equiv 0.6$}
    \end{subtable}
    \begin{subtable}[t]{0.495\linewidth}\centering
    \begin{tabular}{cclr}
    \hline 
              & ${Y}_{*}$ & $\beta $  & ${X}_{ {\alpha }^{*} }$ \\
    \hline
    ${s}_{1}$ & $7/3$     & $0$       & $0.87500$ \\ 
    ${s}_{2}$ & $3/2$     & $0$       & $0.95833$ \\
    ${s}_{3}$ & $2/3$     & $0$       & $1.04167$ \\
    ${s}_{4}$ & $2/3$     & $5/6$     & $1.12500$ \\
    ${s}_{5}$ & $2/3$     & $5/3$     & $1.20833$ \\
    \multicolumn{2}{l}{${\alpha }^{*} = 4.16667$} &
    \multicolumn{2}{r}{${\lambda }_{ {\alpha }^{*} } - \frac{\zeta }{\theta } \equiv 1.04167$} \\
    \hline
    \end{tabular}
    \caption{$\zeta \equiv 2/3$}
    \end{subtable} \\
    \vspace{1em}
    \begin{subtable}[t]{0.495\linewidth}\centering
    \begin{tabular}{cclr}
    \hline 
              & ${Y}_{*}$ & $\beta $  & ${X}_{ {\alpha }^{*} }$ \\
    \hline
    ${s}_{1}$ & $2.7$     & $0$       & $0.775$ \\ 
    ${s}_{2}$ & $1.2$     & $0$       & $0.925$ \\
    ${s}_{3}$ & $0.7$     & $1$       & $1.075$ \\
    ${s}_{4}$ & $0.7$     & $2.5$     & $1.225$ \\
    ${s}_{5}$ & $0.7$     & $4$       & $1.375$ \\
    \multicolumn{2}{l}{${\alpha }^{*} = 7.5$} &
    \multicolumn{2}{r}{${\lambda }_{ {\alpha }^{*} } - \frac{\zeta }{\theta } \equiv 0.975$} \\
    \hline
    \end{tabular}
    \caption{$\zeta \equiv 0.7$}
    \end{subtable}
    \begin{subtable}[t]{0.495\linewidth}\centering
    \begin{tabular}{cclr}
    \hline 
              & ${Y}_{*}$ & $\beta $  & ${X}_{ {\alpha }^{*} }$ \\
    \hline
    ${s}_{1}$ & $3.25$    & $0$       & $0.625$ \\ 
    ${s}_{2}$ & $0.75$    & $0$       & $0.875$ \\
    ${s}_{3}$ & $0.75$    & $2.5$     & $1.125$ \\
    ${s}_{4}$ & $0.75$    & $5$       & $1.375$ \\
    ${s}_{5}$ & $0.75$    & $7.5$     & $1.625$ \\
    \multicolumn{2}{l}{${\alpha }^{*} = 12.5$} &
    \multicolumn{2}{r}{${\lambda }_{ {\alpha }^{*} } - \frac{\zeta }{\theta } \equiv 0.875$} \\
    \hline
    \end{tabular}
    \caption{$\zeta \equiv 0.75$}
    \end{subtable}
    \caption{Example 1 with several deterministic $\zeta $.}
    \label{tab: Example 1 with deterministic zeta}
\end{table}

Another interesting observation is that for $\zeta = 1/6, 0.45, 2/3, 0.75$, there exists a state of nature $\omega $ such that ${X}_{ {\alpha }^{*} } ( \omega ) + \frac{\zeta }{\theta } = {\lambda }_{ {\alpha }^{*} }$.
For these specific $\zeta $, since ${X}_{ {\alpha }^{*} } + \frac{\zeta }{\theta } \le {\lambda }_{ {\alpha }^{*} }$ is equivalent to ${Y}^{*} \ge \zeta $,
one can mirror the proof of \Cref{thm: Lagrange multiplier as solution} to obtain
\begin{equation*}
{\alpha }^{*} = \frac{ \zeta \mathbb{E} [R] + ( 1 - \zeta ) \mathbb{E} [ R | {Y}_{*} \ge \zeta ] - r }{\theta \mathbb{P} ( {Y}_{*} \ge \zeta ) \Var [ R | {Y}_{*} \ge \zeta ] }.
\end{equation*}
In contrast, the expression for ${\alpha }^{*}$ in \Cref{thm: Lagrange multiplier as solution} is not applicable when $\zeta \equiv 0.75$, as $\Var [ R | {Y}_{*} > \zeta ] = 0$.

\begin{table}[H]
    \centering
    \begin{tabular}{lccccccc}
    \hline
    $R ( {s}_{5} )$          & $1.05$   & $1.10$   & $1.15$   & $1.20$   & $1.50$   & $2$      & $3$  \\
    \hline
    ${\alpha }^{*}_{mv} (1)$ & $2.0833$ & $1.3574$ & $0.9174$ & $0.6747$ & $0.2465$ & $0.1175$ & $0.0572$ \\
    $\zeta = 0$              & $2.0833$ & $1.8382$ & $1.8382$ & $1.8382$ & $1.8382$ & $1.8382$ & $1.8382$ \\
    $\zeta = 0.01$           & $2.0833$ & $1.8695$ & $1.8860$ & $1.9026$ & $2.0018$ & $2.1673$ & $2.4982$ \\
    $\zeta = 0.02$           & $2.0833$ & $1.9007$ & $1.9338$ & $1.9669$ & $2.1654$ & $2.4963$ & $3.1581$ \\
    $\zeta = 0.05$           & $2.0833$ & $1.9945$ & $2.0772$ & $2.1599$ & $2.6563$ & $3.4835$ & $5.9856$ \\
    $\zeta = 0.1$            & $2.0833$ & $2.1507$ & $2.3162$ & $2.4816$ & $3.4743$ & $6.2019$ & $20.625$ \\
    $\zeta = 1/6$            & $2.0833$ & $2.3591$ & $2.6348$ & $2.9105$ & $5.3686$ & $13.542$ & ---      \\ 
    $\zeta = 0.2$            & $2.1323$ & $2.4632$ & $2.7941$ & $3.1250$ & $6.6346$ & $28.750$ & ---      \\
    $\zeta = 0.45$           & $2.5000$ & $4.0144$ & $5.5288$ & $7.8125$ & ---      & ---      & ---      \\
    $\zeta = 0.5$            & $2.8846$ & $4.5673$ & $6.2500$ & $15.625$ & ---      & ---      & ---      \\
    $\zeta = 0.6$            & $3.6539$ & $8.7500$ & $20.000$ & ---      & ---      & ---      & ---      \\
    $\zeta = 2/3$            & $4.1667$ & $16.667$ & ---      & ---      & ---      & ---      & ---      \\
    $\zeta = 0.7$            & $7.5000$ & ---      & ---      & ---      & ---      & ---      & ---      \\
    $\zeta = 0.75$           & $12.500$ & ---      & ---      & ---      & ---      & ---      & ---      \\
    \multicolumn{8}{l}{---: there exists no solution.} \\
    \hline
    \end{tabular} 
    \caption{\centering SMMV optimal investment ratios with different $R ( {s}_{5} )$ and deterministic $\zeta $, \\
             where the second and the third columns correspond to Examples 1 and 2, respectively.}
    \label{tab: Examples 1 and 2}
\end{table}

Next, we increase the payoff $R ( {s}_{5} )$ to include the Examples 1 and 2 with different deterministic values of $\zeta $, and present the SMMV optimal investment ratios in \Cref{tab: Examples 1 and 2}.
For ease of comparison, we also display the result for ${\alpha }_{mv} (1)$ and $\zeta = 0$ (i.e.,  ${\alpha }_{mmv} (1)$ in \cite[Section 6]{Maccheroni-Marinacci-Rustichini-Taboga-2009}) in \Cref{tab: Examples 1 and 2}.
Interestingly, in the vast majority of cases, the SMMV optimal investment ratio is increasing in both the chosen $\zeta $ and the payoff $R ( {s}_{5} )$. 
However, this monotonicity in the payoff $R ( {s}_{5} )$ does not necessarily hold when $R ( {s}_{5} )$ increases from $1.05$ to $1.10$.
For instance, when $\zeta = 0.02$, ${\alpha }^{*}$ has a sudden jump from $2.0833$ to $1.8766$ at $R ( {s}_{5} ) \approx 1.0636$.
This is because ${Y}_{*}$ switches from the interior solution for \cref{eq: minimization problem in SP :eq} to a corner solution, which adds one more equation regarding ${s}_{5}$ to \cref{eq: linear system given by KKT :eq}.
For $R ( {s}_{5} ) \ge 1.0636$, the aforementioned monotonicity in $R ( {s}_{5} )$ holds. 

\begin{table}[htbp!]
    \centering
    \begin{tabular}{ccccccccc}
    \hline
    $\zeta ( \cdot ) \equiv \zeta $ & $0$      & $0.01$   & $0.02$   & $0.05$   & $0.1$    & $0.2$    & $0.3$    & $0.4$ \\
    \hline
    ${\alpha }^{*}_{1}$             & $4.2980$ & $4.3793$ & $4.4597$ & $4.7007$ & $5.1025$ & $9.2422$ & $14.840$ & $20.438$ \\
    ${\alpha }^{*}_{2}$             & $3.0423$ & $3.1030$ & $3.1638$ & $3.3459$ & $3.6495$ & $6.2969$ & $9.8359$ & $13.375$ \\  
    ${\lambda }_{ \vec{\alpha }^{*} } - \frac{\zeta }{\theta }$ 
                                    & $1.1316$ & $1.1308$ & $1.1301$ & $1.1280$ & $1.1243$ & $1.1101$ & $1.0929$ & $1.0756$ \\ 
    \hline
    \end{tabular} 
    \caption{SMMV optimal investment ratio in Example 3 with different deterministic $\zeta $.}
    \label{tab: Example 3}
\end{table}

Finally, we extend the discussion to investigate a portfolio selection problem involving a second risky asset, whose SMMV optimal investment ratio is denoted by ${\alpha }^{*}_{2}$. 
In this context, we seek to determine $\vec{\alpha }^{*} = ( {\alpha }^{*}_{1}, {\alpha }^{*}_{2} )$ for Example 3 across various deterministic values of $\zeta $, and summarize the results in \Cref{tab: Example 3}.
Notably, for the constant $\zeta \ge 0.4133$ in this example, there exists no solution. 
In this example, it is obvious that as $\zeta $ increases, the SMMV optimal investment ratio for each asset increases, and the truncation level decreases.
Indeed, since ${\alpha }^{*}_{1} > {\alpha }^{*}_{2}$ --- indicating that the investment in the first asset plays a dominant role as shown in \Cref{tab: Example 3} --- 
the mechanism described in Example 1 remains valid to explain the two observations for monotonicity.

\section{Continuous-time dynamic portfolio management}
\label{sec: Continuous-time dynamic portfolio management}

\subsection{Model and problem formulation}

In this section, we study the portfolio selection problem in a continuous-time stochastic control framework with a preassigned finite time-horizon $T$ and the SMMV preference ${V}_{ \theta, \zeta }$.
Let us formulate this problem on the complete filtered probability basis $( \Omega, \mathcal{F}, \mathbb{F}, \mathbb{P} )$,
where $\mathbb{F} := \{ \mathcal{F}_{t} \}_{ t \in [ 0,T ] }$ is the right-continuous, completed natural filtration generated by a one-dimensional standard Brownian motion $\{ {W}_{t} \}_{ t \in [ 0,T ] }$. 
For the conditional expectation operator, we write $\mathbb{E}_{t} [ \cdot ] := \mathbb{E} [ \cdot | \mathcal{F}_{t} ]$ for brevity.
Without any loss of generality, we assume that $\mathcal{F}_{0} = \{ \varnothing, \Omega \}$ and $\mathcal{F}_{T} = \mathcal{F}$. 
For ease of reference, below we list the notation of spaces that are used in the sequel.
\begin{itemize}
\item $\mathbb{L}^{2}_{\mathbb{F}} ( 0,T; \mathbb{L}^{2} ( \Omega ) )$ denotes the set of all $\mathbb{F}$-progressively measurable processes $f: [ 0,T ] \times \Omega \to \mathbb{R}$ such that 
      $\mathbb{E} [ \int_{0}^{T} | f ( t, \cdot ) |^{2} dt ] < \infty $.
\item ${C}_{\mathbb{F}} ( 0,T; \mathbb{L}^{2} ( \Omega ) )$ denotes the set of all $\mathbb{P}$-a.s. continuous processes $f \in \mathbb{L}^{2}_{\mathbb{F}} ( 0,T; \mathbb{L}^{2} ( \Omega ) )$ such that 
      $\mathbb{E} [ \sup_{ t \in [ 0,T ] } | f ( t, \cdot ) |^{2} ] < \infty $.
\item $\mathbb{L}^{2}_{\mathbb{F}} ( 0,T ; \mathbb{L}^{2} ( \Omega; {C}^{p,q} ( \mathbb{R} \times [ 0, + \infty ) ) ) )$ denotes the set of all random fields $f: [ 0,T ] \times \Omega \times \mathbb{R} \times [ 0, + \infty ) \to \mathbb{R}$ 
      such that $f ( \cdot, \cdot, x, z ) \in \mathbb{L}^{2} ( 0,T; \mathbb{L}^{2} ( \Omega ) )$ and $f ( t, \omega, x, z )$ is $p$ (resp. $q$) times continuously differentiable in $x$ (resp. $z$) for every $t$ and $\mathbb{P}$-a.e. $\omega $.
\item ${C}_{\mathbb{F}} ( 0,T; \mathbb{L}^{2} ( \Omega; {C}^{p,q} ( \mathbb{R} \times [ 0, + \infty ) ) ) )$ denotes the set of all 
      $f \in \mathbb{L}^{2}_{\mathbb{F}} ( 0,T ; \mathbb{L}^{2} ( \Omega; {C}^{p,q} ( \mathbb{R} \times [ 0, + \infty ) ) ) )$ such that $f ( \cdot, \cdot, x,z ) \in {C}_{\mathbb{F}} ( 0,T; \mathbb{L}^{2} ( \Omega ) )$.
\end{itemize}
To keep the exposition concise, hereafter we omit the statement of sample path $\omega $, ``a.e. $t$'' and ``$\mathbb{P}$-a.s.'', 
and write ${f}_{t} = f ( t, \omega )$, $f ( t,x ) = f ( t, \omega, x )$ and $f ( t,x,z ) = f ( t, \omega, x,z )$, unless otherwise mentioned.
  
Referring to \cite[Section 6.8]{Yong-Zhou-1999} (see also \cite{Zhou-Li-2000}) and \cite[Section 4.5.1]{Shreve-2004}, 
we consider a simple Black-Scholes market, which includes a bond and a stock with random parameters. 
For every epoch $t$, let ${r}_{t} \in \mathbb{R}$ denote the instantaneous yield rate of the bond,
and $( {\sigma }_{t}, {\vartheta }_{t} ) \in \mathbb{R}_{+} \times \mathbb{R}_{+}$ respectively denote the volatility rate and the market price of risk for the stock.
Thus, the price processes of these two assets satisfy the following stochastic differential equations (SDEs):
\begin{equation*}
\left\{ \begin{aligned}
& d {B}_{t} = {B}_{t} {r}_{t} dt, && {B}_{0} > 0 \quad \text{(for the bond)}; \\
& d {S}_{t} = {S}_{t} {r}_{t} dt + {S}_{t} {\sigma }_{t} ( d {W}_{t} + {\vartheta }_{t} dt ), && {S}_{0} > 0 \quad \text{(for the stock)}.
\end{aligned} \right.
\end{equation*}
Then, it is supposed to maximize ${V}_{ \theta, \zeta } ( {X}^{\pi }_{T} )$ by choosing an appropriate dynamic stock investment amount process $\pi \in \mathbb{L}^{2}_{\mathbb{F}} ( 0,T; \mathbb{L}^{2} ( \Omega ) )$, 
subject to the following controlled SDE for the wealth process ${X}^{\pi }$:
\begin{equation}\label{eq: wealth dynamics :eq}
d {X}^{\pi }_{t} = ( {X}^{\pi }_{t} - {\pi }_{t} ) \frac{ d {B}_{t} }{ {B}_{t} } + {\pi }_{t} \frac{ d {S}_{t} }{ {S}_{t} }
                 = {X}^{\pi }_{t} {r}_{t} dt + {\pi }_{t} {\sigma }_{t} ( d {W}_{t} + {\vartheta }_{t} dt ), \quad {X}^{\pi }_{0} = {x}_{0}.
\end{equation}

Before proceeding, we should note that the aforementioned financial market is complete and has a unique risk-neutral probability measurable $\mathbb{Q}$ given by the Radon-Nikod\'ym derivative 
\begin{equation*}
\frac{ d \mathbb{Q} }{ d \mathbb{P} } \Big|_{\mathcal{F}_{T}} = {\Lambda }_{T} := {e}^{ - \int_{0}^{T} {\vartheta }_{v} d {W}_{v} - \frac{1}{2} \int_{0}^{T} | {\vartheta }_{v} |^{2} dv }.
\end{equation*}
Let ${\Lambda }_{t} := \mathbb{E}_{t} [ {\Lambda }_{T} ]$ and ${D}_{t} := {\Lambda }_{t} \exp ( - \int_{0}^{t} {r}_{v} dv )$ (i.e. the so-called state-price density process).
In addition, we assume that $r, \sigma, \vartheta \in {C}_{\mathbb{F}} ( 0,T; \mathbb{L}^{2} ( \Omega ) )$ are essentially bounded, 
and introduce the following assumption, for which the economic insights can be found in the next subsection.

\begin{assumption}\label{ass: SMMV-MV comparison}
$\zeta \le {D}_{T} / \mathbb{E} [ {D}_{T} ]$, $\mathbb{P}$-a.s.
\end{assumption}

Due to the presence of $\zeta \in \mathbb{L}^{2} ( \Omega )$, the method in \cite{Strub-Li-2020}, which shows that the optimal strategies for MMV and MV preferences coincide, is no longer applicable to compare the optimal strategies for SMMV and MV preferences.
For example, this method employs the embedding technique in \cite{Li-Ng-2000,Zhou-Li-2000}, reducing the MV problem to minimizing $\mathbb{E} [ ( {X}^{\pi }_{T} - \lambda )^{2} ]$,
where $\lambda = \frac{1}{\theta } + \mathbb{E} [ {X}^{*}_{T} ]$ with ${X}^{*}$ corresponding to the MV optimal strategy ${\pi }^{*}$. 
It is obvious that ${X}^{*}_{T} \in \mathcal{G}_{\theta }$ if and only if ${X}^{*}_{T} \le \lambda $.
However, ${X}^{*}_{T} \le \lambda - \frac{\zeta }{\theta }$ is needed for ${X}^{*}_{T} \in \mathcal{G}_{\theta, \zeta }$, 
and the ``target-anchoring'' strategy in the proof of \cite[Theorem 1]{Strub-Li-2020} cannot realize such a target ${X}^{\pi }_{T} = \lambda - \frac{\zeta }{\theta }$, $\mathbb{P}$-a.s., for random $\zeta $.
To resolve this issue, we proceed with the following proposition, 
which does not only show the equivalence between ${X}^{*}_{T} \in \mathcal{G}_{\theta, \zeta }$ and \Cref{ass: SMMV-MV comparison},
but also set a stepping stone to validate the coincidence among the SMMV, MMV and MV optimal strategies.

\begin{proposition}\label{lem: MV optimal portfolio in monotonicity domain}
${X}^{*}_{T} = \mathbb{E} [ {X}^{*}_{T} ] + \frac{1}{\theta } ( 1 - \frac{ {D}_{T} }{ \mathbb{E} [ {D}_{T} ] } )$ realizes the maximum for the MV problem formulated by 
${U}_{\theta } ( {X}^{*}_{T} ) = \max_{ \pi \in \mathbb{L}^{2}_{\mathbb{F}} ( 0,T; \mathbb{L}^{2} ( \Omega ) ) } {U}_{\theta } ( {X}^{\pi }_{T} )$.
Consequently, ${X}^{*}_{T} \in \mathcal{G}_{ \theta, \zeta }$ if and only if \Cref{ass: SMMV-MV comparison} holds. 
\end{proposition}

\begin{proof}
See \Cref{pf-lem: MV optimal portfolio in monotonicity domain}.
Notably, to prove this \Cref{lem: MV optimal portfolio in monotonicity domain},
we conduct a convex duality analysis of the MV problem under a slightly general framework in \Cref{app: Convex duality analysis to the MV problem},
which is also applicable for MV problems in an incomplete market with partial information.
\end{proof}

Now we state the main result of this subsection, showing that the SMMV, MMV and MV optimal strategies coincide.

\begin{theorem}\label{thm: identical solution of SMMV and MV}
Suppose that \Cref{ass: SMMV-MV comparison} holds.
Then, the MV optimal strategy ${\pi }^{*}$ maximizes ${V}_{ \theta, \zeta } ( {X}^{\pi }_{T} )$ over all $\pi \in \mathbb{L}^{2}_{\mathbb{F}} ( 0,T; \mathbb{L}^{2} ( \Omega ) )$, 
implying ${V}_{ \theta, \zeta } ( {X}^{*}_{T} ) = {U}_{\theta } ( {X}^{*}_{T} )$.
\end{theorem}

\begin{proof}
See \Cref{pf-thm: identical solution of SMMV and MV}.
\end{proof}

Our results in the continuous-time setting are established under \Cref{ass: SMMV-MV comparison}.
However, analogous to the aforementioned single-period situations where $\mathbb{P} ( \zeta \le d \mathbb{Q} / d \mathbb{P} ) < 1$ and no solution exists (see \Cref{rem: no static solution}), 
the SMMV dynamic portfolio problems become ill-posed when \Cref{ass: SMMV-MV comparison} does not hold. 
This will be demonstrated in the following subsections.

\subsection{Microeconomics analysis for \texorpdfstring{\Cref{ass: SMMV-MV comparison}}{Assumption 4.1}: single-period binomial framework}

In this subsection, we employ a single-period binomial framework, with a slight abuse of notation, to intuitively show that the solution does not exist unless \Cref{ass: SMMV-MV comparison} holds.
Before proceeding, we should note that the SMMV portfolio selection problem can be reduced to
maximizing ${V}_{ \theta, \zeta } ( {X}^{\pi }_{T} )$ subject to ${X}^{\pi }_{T} \in \mathbb{L}^{2} ( \Omega )$ with $\mathbb{E} [ {X}^{\pi }_{T} {D}_{T} ] = {x}_{0}$,
analogous to the proof of \Cref{lem: MV optimal portfolio in monotonicity domain} in \Cref{pf-lem: MV optimal portfolio in monotonicity domain}.
In other words, the portfolio selection problem is equivalent to the terminal state choosing problem,
provided that the market is complete.

Let us consider the outcomes of a single coin toss (as in \Cref{exp: illustrate MV MMV SMMV}) with $\mathbb{P} (head) = {p}_{1}$ and $\mathbb{P} (tail) = {p}_{2}$ under the real-world probability measure.
Without loss of generality, ${p}_{1} \ne {p}_{2}$ is allowed. 
Suppose that the wealth starts at the initial value ${x}_{0}$, and then becomes $X(head) = {x}_{1}$ or $X(tail) = {x}_{2}$ at the maturity.
In addition, we write $\zeta ( head ) = {\zeta }_{1}$ and $\zeta ( tail ) = {\zeta }_{2}$, and denote the accumulating factor by ${e}^{r}$, where $r$ is also a random variable with $r (head) = {r}_{1}$ and $r (tail) = {r}_{2}$.
Then, one can find the risk-neutral probability measure $\mathbb{Q}$, under which $\mathbb{Q} (head) = {q}_{1}$ and $\mathbb{Q} (tail) = {q}_{2}$, 
by $1 = {q}_{1} + {q}_{2}$ and ${x}_{0} = {x}_{1} {e}^{ - {r}_{1} } {q}_{1} + {x}_{2} {e}^{ - {r}_{2} } {q}_{2}$.
Consequently, \Cref{ass: SMMV-MV comparison} degenerates to
\begin{equation*}
{\zeta }_{1} \le \frac{ {e}^{ - {r}_{1} } \frac{ {q}_{1} }{ {p}_{1} } }{ {e}^{ - {r}_{1} } {q}_{1} + {e}^{ - {r}_{2} } {q}_{2} }, \quad
{\zeta }_{2} \le \frac{ {e}^{ - {r}_{2} } \frac{ {q}_{2} }{ {p}_{2} } }{ {e}^{ - {r}_{1} } {q}_{1} + {e}^{ - {r}_{2} } {q}_{2} }.
\end{equation*}

Since ${V}_{ \theta, \zeta } (X)$ can be re-expressed as a utility function $u ( {x}_{1}, {x}_{2} )$,
the SMMV problem can be formulated as
\begin{equation*}
\max_{ ( {x}_{1}, {x}_{2} ) \in \mathbb{R}^{2} } u ( {x}_{1}, {x}_{2} ), \quad s.t. \quad {x}_{1} {e}^{ - {r}_{1} } {q}_{1} + {x}_{2} {e}^{ - {r}_{2} } {q}_{2} = {x}_{0}.
\end{equation*}
This is an analogue to the consumer behavior model in microeconomics; 
that is, one is supposed to choose the amount $( {x}_{1}, {x}_{2} )$ for the ``goods'' $X ( head )$ and $X ( tail )$ with their ``prices'' ${e}^{ - {r}_{1} } {q}_{1}$ and ${e}^{ - {r}_{2} } {q}_{2}$,
subject to the budget constraint ${x}_{0}$, to maximize the utility $u ( {x}_{1}, {x}_{2} )$.
So the equi-marginal principle suggests that at the maximum point,
\begin{equation*}
\frac{ \partial u / \partial {x}_{1} }{ {e}^{ - {r}_{1} } {q}_{1} } = \frac{ \partial u / \partial {x}_{2} }{ {e}^{ - {r}_{2} } {q}_{2} }.
\end{equation*}

Now, corresponding to the case that \Cref{ass: SMMV-MV comparison} is violated, we assume, without loss of generality, that 
${\zeta }_{1} > \frac{ {e}^{ - {r}_{1} } {q}_{1} / {p}_{1} }{ {e}^{ - {r}_{1} } {q}_{1} + {e}^{ - {r}_{2} } {q}_{2} }$, 
i.e. $\frac{ {\zeta }_{1} {p}_{1} }{ {e}^{ - {r}_{1} } {q}_{1} } > \frac{ 1 - {\zeta }_{1} {p}_{1} }{ {e}^{ - {r}_{2} } {q}_{2} }$.
Let us start at the point $( {x}_{1}, {x}_{2} )$ satisfying $( {x}_{1} + \frac{ {\zeta }_{1} }{\theta } - {x}_{2} - \frac{ {\zeta }_{2} }{\theta } ) {p}_{2} \ge \frac{1}{\theta }$.
By the second assertion in \Cref{lem: identity and truncation}, we know that 
\begin{equation*}
{\lambda }_{ \theta, \zeta, X } = {x}_{2} + \frac{ {\zeta }_{2} }{\theta } + \frac{ 1 - {\zeta }_{1} {p}_{1} - {\zeta }_{2} {p}_{2} }{ \theta {p}_{2} } 
                                = {x}_{2} + \frac{ 1 - {\zeta }_{1} {p}_{1} }{ \theta {p}_{2} },
\end{equation*}
and hence \Cref{thm: Gateaux differentiability and equivalent expressions of SMMV} gives
\begin{align*}
& \frac{ \partial u }{ \partial {x}_{1} } = \mathbb{E} [ {1}_{\{ head \}} d {V}_{ \theta, \zeta } (X) ] = {\zeta }_{1} {p}_{1}, \\
& \frac{ \partial u }{ \partial {x}_{2} } = \mathbb{E} [ {1}_{\{ tail \}} d {V}_{ \theta, \zeta } (X) ] = \theta ( {\lambda }_{ \theta, \zeta, X } - {x}_{2} ) {p}_{2} = 1 - {\zeta }_{1} {p}_{1}.
\end{align*}
Given $\frac{ {\zeta }_{1} {p}_{1} }{ {e}^{ - {r}_{1} } {q}_{1} } > \frac{ 1 - {\zeta }_{1} {p}_{1} }{ {e}^{ - {r}_{2} } {q}_{2} }$, it immediately follows that
\begin{equation*}
\frac{ \partial u / \partial {x}_{1} }{ {e}^{ - {r}_{1} } {q}_{1} } > \frac{ \partial u / \partial {x}_{2} }{ {e}^{ - {r}_{2} } {q}_{2} }, \quad i.e. \quad
\frac{ \partial u / \partial {x}_{1} }{ \partial u / \partial {x}_{2} } > \frac{ {e}^{ - {r}_{1} } {q}_{1} }{ {e}^{ - {r}_{2} } {q}_{2} }.
\end{equation*}
This implies that at all points $( {x}_{1}, {x}_{2} )$ satisfying $( {x}_{1} + \frac{ {\zeta }_{1} }{\theta } - {x}_{2} - \frac{ {\zeta }_{2} }{\theta } ) {p}_{2} \ge \frac{1}{\theta }$,
the marginal substitution rate of ``good $X(head)$'' for ``good $X(tail)$'' is always larger than their ``price'' ratio.
As a result, the rational agent is always willing to reduce the amount ${x}_{2}$ for every additional increase of ${x}_{1}$, which implies that any finite ${x}_{1}$ is not optimal.

In conclusion, when \Cref{ass: SMMV-MV comparison} is violated, the marginal utilities across certain states become equalized but excessively high, causing the optimal allocation to concentrate entirely on those states. Intuitively, this is analogous to the substitution between goods in microeconomics being replaced here by substitution between states of nature. For instance, between a ``good'' and a ``bad'' state. Increasing investment corresponds to earning more in the good state but losing more in the bad one. As long as the marginal gain in the good state always outweighs the marginal loss in the bad state, the investor's optimal decision collapses to an all-in position. 

Extending the model beyond this case would require a more sophisticated monotonicity adjustment mechanism, which lies beyond the scope of this paper. Our focus is to highlight the limitation of MMV preferences, propose a feasible and strictly monotone alternative, and clearly delineate the theoretical and practical domain where SMMV preferences are meaningful. The exploration of more general continuous-time extensions of SMMV is left for future research.

\subsection{Characterizations for the solution}

In the previous subsections, our method is indeed straightforward and differs from the most existing studies using some stochastic control techniques to address the continuous-time portfolio problems 
(including \cite{Trybula-Zawisza-2019,Shen-Zou-2022,Li-Guo-Tian-2023,Li-Guo-2021,Li-Liang-Pang-2022,Li-Liang-Pang-2023} for the MMV preference).
To facilitate comparison and establish potential frameworks, 
in the following three subsections we characterize the optimal SMMV strategy by employing the dynamic programming principle, the martingale convex duality analysis, and the embedding method, respectively.

\subsubsection{BSPDE characterization}

Due to \cref{eq: measure optimization form of SMMV}, the portfolio problem with the SMMV preference ${V}_{\theta, \zeta }$ is equivalent to finding a maximum point ${\pi }^{*}$ for 
\begin{equation}\label{eq: max-min problem for DP :eq}
\sup_{ \pi \in \mathbb{L}^{2}_{\mathbb{F}} ( 0,T; \mathbb{L}^{2} ( \Omega ) ) } \inf_{ Z \in \mathbb{L}^{2}_{+} ( \Omega ), \mathbb{E} [Z] = 1 }
\bigg\{ \mathbb{E} [ {X}^{\pi }_{T} \zeta ]
      + \kappa \mathbb{E} \Big[ \Big( {X}^{\pi }_{T} + \frac{\zeta }{\theta } \Big) Z \Big]
      + \frac{ {\kappa }^{2} }{ 2 \theta } \mathbb{E} [ {Z}^{2} ] \bigg\},
\end{equation}
subject to \cref{eq: wealth dynamics :eq}.
Inspired by \cite{Trybula-Zawisza-2019} and \cite{Li-Guo-2021}, in order to formulate the aforementioned max-min problem in the framework of stochastic control,
we denote by $( {X}^{ t,x, \pi }, {Z}^{ t,z, \gamma } )$ the $\mathbb{F}$-adapted solution of the following controlled SDEs corresponding to the control pair $( \pi, \gamma )$:
\begin{equation}\label{eq: controlled SDEs :eq}
\left\{ \begin{aligned}
& d {X}_{s} = {X}_{s} {r}_{s} ds + {\pi }_{s} {\sigma }_{s} ( d {W}_{s} + {\vartheta }_{s} ds ), \\
& d {Z}_{s} = {\gamma }_{s} d {W}_{s}, \\
& ( {X}_{t}, {Z}_{t} ) = ( x,z ) \in \mathbb{R} \times [ 0, + \infty ).
\end{aligned} \right.
\end{equation}
For ease of statement, we let ${\Gamma }^{t,z}$ denote the set of all the admissible $\gamma \in \mathbb{L}^{2}_{\mathbb{F}} ( 0,T; \mathbb{L}^{2} ( \Omega ) )$ such that ${Z}^{ t,z, \gamma }_{T} \in \mathbb{L}^{2}_{+} ( \Omega )$.
Notably, $Z = 0$ is the absorbing state for ${Z}^{ t,z, \gamma }$, as it is a non-negative continuous $( \mathbb{F}, \mathbb{P} )$-martingale.
As a consequence, any $\gamma \in {\Gamma }^{t,0}$ satisfies $\mathbb{E} [ \int_{t}^{T} | {\gamma }_{s} |^{2} ds ] = 0$. 
In addition, we introduce the objective functional
\begin{equation}\label{eq: objective functional for DP :eq}
{J}^{ \pi, \gamma } ( t,x,z ) := \mathbb{E}_{t} \bigg[ {X}^{ t,x, \pi }_{T} \zeta 
                                                     + \kappa \Big( {X}^{ t,x, \pi }_{T} + \frac{\zeta }{\theta } \Big) {Z}^{ t,z, \gamma }_{T} 
                                                     + \frac{ {\kappa }^{2} }{ 2 \theta } | {Z}^{ t,z, \gamma }_{T} |^{2} \bigg].
\end{equation}
Then, the max-min problem given by \cref{eq: max-min problem for DP :eq} is reduced to finding the maximizer ${\pi }^{*}$ for 
\begin{equation}\label{eq: control problem primal :eq}
\sup_{ \pi \in \mathbb{L}^{2}_{\mathbb{F}} ( 0,T; \mathbb{L}^{2} ( \Omega ) ) } \inf_{ \gamma \in {\Gamma }^{0,1} } {J}^{ \pi, \gamma } ( 0, {x}_{0}, 1 ).
\end{equation}

By the dynamic programming principle, we aim to find $( {\pi }^{*}, {\gamma }^{*} ) \in \mathbb{L}^{2}_{\mathbb{F}} ( 0,T; \mathbb{L}^{2} ( \Omega ) ) \times {\Gamma }^{0,1}$ as a saddle point 
such that for any $( \pi, \gamma ) \in \mathbb{L}^{2}_{\mathbb{F}} ( 0,T; \mathbb{L}^{2} ( \Omega ) ) \times {\Gamma }^{0,1}$,
\begin{equation}\label{eq: saddle point definition :eq}
    {J}^{ \pi, {\gamma }^{*} } ( t, {X}^{ 0, {x}_{0}, {\pi }^{*} }_{t}, {Z}^{ 0,1, {\gamma }^{*} }_{t} ) 
\le {J}^{ {\pi }^{*}, {\gamma }^{*} } ( t, {X}^{ 0, {x}_{0}, {\pi }^{*} }_{t}, {Z}^{ 0,1, {\gamma }^{*} }_{t} ) 
\le {J}^{ {\pi }^{*}, \gamma } ( t, {X}^{ 0, {x}_{0}, {\pi }^{*} }_{t}, {Z}^{ 0,1, {\gamma }^{*} }_{t} ).
\end{equation}
In fact, \cref{eq: saddle point definition :eq} formulates a sequence of stochastic differential games, 
where one player (e.g., an investor) aims to maximize ${J}^{ \pi, \gamma }$ with its strategy $\pi $ 
and the other player (e.g., an incarnation of the market) aims to minimize ${J}^{ \pi, \gamma }$ with its strategy $\gamma $ at almost every epoch $t$.
The saddle point $( {\pi }^{*}, {\gamma }^{*} ) \in \mathbb{L}^{2}_{\mathbb{F}} ( 0,T; \mathbb{L}^{2} ( \Omega ) ) \times {\Gamma }^{0,1}$ 
is also a Nash equilibrium of the stochastic differential game corresponding to $t$.
Thus, by the method of continuous embedding, one may generate the stochastic differential games as
\begin{equation}\label{eq: control problem for DP :eq}
\esssup_{ \pi \in \mathbb{L}^{2}_{\mathbb{F}} ( 0,T; \mathbb{L}^{2} ( \Omega ) ) } \essinf_{ \gamma \in {\Gamma }^{t,z} } {J}^{ \pi, \gamma } ( t,x,z )
\end{equation}
for all $( t,x,z ) \in [ 0,T ) \times \mathbb{R} \times [ 0, + \infty )$.

Due to the presence of the absorbing level $Z = 0$, we are supposed to consider the ``boundary'' problem formulated by
\begin{equation*}
  \esssup_{ \pi \in \mathbb{L}^{2}_{\mathbb{F}} ( 0,T; \mathbb{L}^{2} ( \Omega ) ) } {J}^{ \pi, \gamma } ( t,x,0 )
= \esssup_{ \pi \in \mathbb{L}^{2}_{\mathbb{F}} ( 0,T; \mathbb{L}^{2} ( \Omega ) ) } \mathbb{E}_{t} [ {X}^{ t,x, \pi }_{T} \zeta ].
\end{equation*}
In general, this problem is not necessarily well-posed, because the objective functional is affine in $\pi $.
To avoid the potential ill-posedness, we purpose to characterize the SMMV optimal strategy through investigating the relaxed stochastic differential games formulated by
\begin{equation}\label{eq: relaxed SDGs :eq}
\esssup_{ \pi \in \mathbb{L}^{2}_{\mathbb{F}} ( 0,T; \mathbb{L}^{2} ( \Omega ) ) } \essinf_{ \gamma \in \mathbb{L}^{2}_{\mathbb{F}} ( 0,T; \mathbb{L}^{2} ( \Omega ) ) } {J}^{ \pi, \gamma } ( t,x,z ),
\quad ( t,x,z ) \in [ 0,T ) \times \mathbb{R} \times \mathbb{R}.
\end{equation}
Notably, the potential ill-posedness does not appear in the continuous-time MMV portfolio problems (i.e. $\zeta \equiv 0$), 
because \Cref{thm: optimal MV solution} implies that ${Z}^{ t,z, {\gamma }^{*} }$ will not hit the threshold $Z = 0$.

For brevity, we introduce two infinitesimal operators for $( \pi, \gamma ) \in \mathbb{R}^{2}$ and $\mathbb{R}$-valued function $f ( t,x,z )$ twice continuously differentiable in $( x,z )$ as follows:
\begin{align*}
    \mathcal{D}_{1}^{ \pi, \gamma } f ( t,x,z )
& = {f}_{x} ( t,x,z ) ( x {r}_{t} + \pi {\sigma }_{t} {\vartheta }_{t} ) \\
& \quad + \frac{1}{2} {f}_{xx} ( t,x,z ) ( \pi {\sigma }_{t} )^{2} + {f}_{xz} ( t,x,z ) \pi {\sigma }_{t} \gamma + \frac{1}{2} {f}_{zz} ( t,x,z ) {\gamma }^{2}, \\
    \mathcal{D}_{2}^{ \pi, \gamma } f ( t,x,z )
& = {f}_{x} ( t,x,z ) \pi {\sigma }_{t} + {f}_{z} ( t,x,z ) \gamma.
\end{align*}
Since the model parameters $( r, \sigma, \vartheta, \zeta )$ are random, we shall reduce the problem represented by \cref{eq: control problem for DP :eq} to solving a lower Isaacs BSPDE.
Referring to \cite[Sections XI.3, XI.4 and Theorem XI.5.1, pp. 377--383]{Fleming-Soner-2006}, we collect the results of problem reduction in the following verification theorem.

\begin{theorem}[verification theorem]\label{thm: verification theorem for relaxed SDGs}
Suppose that there exists a random field pair
\begin{equation*}
( \mathcal{V}, \Phi ) \in {C}_{\mathbb{F}} \big( 0,T; \mathbb{L}^{2} \big( \Omega; {C}^{2,2} \big( \mathbb{R} \times [ 0, + \infty ) \big) \big) \big)
            \times \mathbb{L}^{2}_{\mathbb{F}} \big( 0,T; \mathbb{L}^{2} \big( \Omega; {C}^{2,2} \big( \mathbb{R} \times [ 0, + \infty ) \big) \big) \big)
\end{equation*}
fulfilling the lower Isaacs BSPDE on $[ 0,T ) \times \mathbb{R} \times \mathbb{R}$:
\begin{equation}\label{eq: HJBI-BSPDE for SDGs :eq}
- d \mathcal{V} ( t,x,z )
= \esssup_{ \pi \in \mathbb{R} } \essinf_{ \gamma \in \mathbb{R} }
  \{ \mathcal{D}_{1}^{ \pi, \gamma } \mathcal{V} ( t,x,z ) + \mathcal{D}_{2}^{ \pi, \gamma } \Phi ( t,x,z ) \} dt
- \Phi ( t,x,z ) d {W}_{t}
\end{equation}
with the terminal condition on $\mathbb{R} \times \mathbb{R}$:
\begin{equation}\label{eq: terminal condition for SDGs :eq}
\mathcal{V} ( T,x,z ) = x \zeta + \kappa \Big( x + \frac{\zeta }{\theta } \Big) z + \frac{ {\kappa }^{2} }{ 2 \theta } {z}^{2},
\end{equation}
and the square-integrability condition
\begin{align}\label{eq: integrability condition for SDGs :eq}
& \mathbb{E} \bigg[ \sup_{ s \in [ t,T ] } | \mathcal{V} ( s, {X}^{ t,x, \pi }_{s}, {Z}^{ t,z, \gamma }_{s} ) | \\
\notag
& \quad           + \int_{t}^{T} \big( | \Phi ( s, {X}^{ t,x, \pi }_{s}, {Z}^{ t,z, \gamma }_{s} ) |^{2} 
                                   + | \mathcal{D}_{2}^{ {\pi }_{s}, {\gamma }_{s} } \mathcal{V} ( s, {X}^{ t,x, \pi }_{s}, {Z}^{ t,z, \gamma }_{s} ) |^{2} \big) ds \bigg] < \infty, \\
\notag
& \forall ( \pi, \gamma ) \in \mathbb{L}^{2}_{\mathbb{F}} ( 0,T; \mathbb{L}^{2} ( \Omega ) ) \times \mathbb{L}^{2}_{\mathbb{F}} ( 0,T; \mathbb{L}^{2} ( \Omega ) ), \quad ( t,x,z ) \in [ 0,T ) \times \mathbb{R} \times [ 0, + \infty ). 
\end{align}
If there exists a control pair $( {\pi }^{*}, {\gamma }^{*} ) \in \mathbb{L}^{2}_{\mathbb{F}} ( 0,T; \mathbb{L}^{2} ( \Omega ) ) \times \mathbb{L}^{2}_{\mathbb{F}} ( 0,T; \mathbb{L}^{2} ( \Omega ) )$ such that
\begin{equation}\label{eq: saddle point condition :eq}
\left\{ \begin{aligned}
&   \mathcal{D}_{1}^{ {\pi }_{s}, {\gamma }^{*}_{s} } \mathcal{V} ( s, {X}^{ t,x, \pi }_{s}, {Z}^{ t,z, {\gamma }^{*} }_{s} )
  + \mathcal{D}_{2}^{ {\pi }_{s}, {\gamma }^{*}_{s} } \Phi ( s, {X}^{ t,x, \pi }_{s}, {Z}^{ t,z, {\gamma }^{*} }_{s} )
\le \mathbb{H}^{ \mathcal{V}, \Phi } ( s, {X}^{ t,x, \pi }_{s}, {Z}^{ t,z, {\gamma }^{*} }_{s} ), \\
&   \mathcal{D}_{1}^{ {\pi }^{*}_{s}, {\gamma }_{s} } \mathcal{V} ( s, {X}^{ t,x, {\pi }^{*} }_{s}, {Z}^{ t,z, \gamma }_{s} )
  + \mathcal{D}_{2}^{ {\pi }^{*}_{s}, {\gamma }_{s} } \Phi ( s, {X}^{ t,x, {\pi }^{*} }_{s}, {Z}^{ t,z, \gamma }_{s} )
\ge \mathbb{H}^{ \mathcal{V}, \Phi } ( s, {X}^{ t,x, {\pi }^{*} }_{s}, {Z}^{ t,z, \gamma }_{s} ),
\end{aligned} \right.
\end{equation}
for any $( \pi, \gamma ) \in \mathbb{L}^{2}_{\mathbb{F}} ( 0,T; \mathbb{L}^{2} ( \Omega ) ) \times \mathbb{L}^{2}_{\mathbb{F}} ( 0,T; \mathbb{L}^{2} ( \Omega ) )$ and
\begin{equation*}
\mathbb{H}^{ \mathcal{V}, \Phi } ( t,x,z ) := 
\esssup_{ \pi \in \mathbb{R} } \essinf_{ \gamma \in \mathbb{R} } \{ \mathcal{D}_{1}^{ \pi, \gamma } \mathcal{V} ( t,x,z ) + \mathcal{D}_{2}^{ \pi, \gamma } \Phi ( t,x,z ) \},
\end{equation*}
then, ${J}^{ \pi, {\gamma }^{*} } ( t,x,z ) \le \mathcal{V} ( t,x,z ) \le {J}^{ {\pi }^{*}, \gamma } ( t,x,z )$, and hence
\begin{align*}
  \mathcal{V} ( t,x,z ) = {J}^{ {\pi }^{*}, {\gamma }^{*} } ( t,x,z )
& = \esssup_{ \pi \in \mathbb{L}^{2}_{\mathbb{F}} ( 0,T; \mathbb{L}^{2} ( \Omega ) ) } \essinf_{ \gamma \in \mathbb{L}^{2}_{\mathbb{F}} ( 0,T; \mathbb{L}^{2} ( \Omega ) ) } {J}^{ \pi, \gamma } ( t,x,z ) \\
& = \essinf_{ \gamma \in \mathbb{L}^{2}_{\mathbb{F}} ( 0,T; \mathbb{L}^{2} ( \Omega ) ) } \esssup_{ \pi \in \mathbb{L}^{2}_{\mathbb{F}} ( 0,T; \mathbb{L}^{2} ( \Omega ) ) } {J}^{ \pi, \gamma } ( t,x,z ).
\end{align*}
\end{theorem}

\begin{proof}
See \Cref{pf-thm: verification theorem for relaxed SDGs}.
\end{proof}

\begin{remark}\label{rem: upper Isaacs BSPDE}
Below we suppress arguments $( t,x,z )$ for notational simplicity.
For the Ansatz $\mathcal{V}_{xx} = 0$, $\mathcal{V}_{xz} \ne 0$ and $\mathcal{V}_{zz} > 0$, 
e.g., for the portfolio problem with the variational representation of MV preference ${U}_{\theta }$, one obtains
\begin{align*}
    \mathbb{H}^{ \mathcal{V}, \Phi }
& = \mathcal{V}_{x} x {r}_{t}
  + \esssup_{ \pi \in \mathbb{R} } \bigg\{ ( \mathcal{V}_{x} {\vartheta }_{t} + {\Phi }_{x} ) \pi {\sigma }_{t} 
                                         + \essinf_{ \gamma \in \mathbb{R} } \Big\{ ( \mathcal{V}_{xz} \pi {\sigma }_{t} + {\Phi }_{z} ) \gamma + \frac{1}{2} \mathcal{V}_{zz} {\gamma }^{2} \Big\} \bigg\} \\
& = \mathcal{V}_{x} x {r}_{t}
  + \esssup_{ \pi \in \mathbb{R} } \bigg\{ ( \mathcal{V}_{x} {\vartheta }_{t} + {\Phi }_{x} ) \pi {\sigma }_{t} - \frac{1}{2} \frac{ ( \mathcal{V}_{xz} \pi {\sigma }_{t} + {\Phi }_{z} )^{2} }{ \mathcal{V}_{zz} } \bigg\} \\
& = \mathcal{V}_{x} x {r}_{t}
  - \frac{1}{ \mathcal{V}_{xz} } ( \mathcal{V}_{x} {\vartheta }_{t} + {\Phi }_{x} ) {\Phi }_{z}
  + \frac{ \mathcal{V}_{zz} }{ 2 | \mathcal{V}_{xz} |^{2} } ( \mathcal{V}_{x} {\vartheta }_{t} + {\Phi }_{x} )^{2}.
\end{align*}
Let $\hat{\pi }$ and $\hat{\gamma }$ be the maximizer and the minimizer, respectively, which fulfill the following first-order derivative conditions
\begin{equation}\label{eq: optimality condition of HJBI-BSPDE for SDGs :eq}
\vec{0} = \begin{pmatrix} \mathcal{V}_{xx} & \mathcal{V}_{xz} \\ \mathcal{V}_{xz} & \mathcal{V}_{zz} \end{pmatrix}
          \binom{ \hat{\pi } {\sigma }_{t} }{ \hat{\gamma } }
        + \binom{ \mathcal{V}_{x} {\vartheta }_{t} + {\Phi }_{x} }{ {\Phi }_{z} }.
\end{equation}
Moreover, $\mathcal{D}_{1}^{ \pi, \hat{\gamma } } \mathcal{V} + \mathcal{D}_{2}^{ \pi, \hat{\gamma } } \Phi \le \mathbb{H}^{ \mathcal{V}, \Phi } \le \mathcal{D}_{1}^{ \hat{\pi }, \gamma } \mathcal{V} + \mathcal{D}_{2}^{ \hat{\pi }, \gamma } \Phi $
follows from
\begin{equation*}
\left\{ \begin{aligned}
  \mathcal{D}_{1}^{ \pi, \hat{\gamma } } \mathcal{V} + \mathcal{D}_{2}^{ \pi, \hat{\gamma } } \Phi - \mathbb{H}^{ \mathcal{V}, \Phi }
& = ( \mathcal{V}_{x} {\vartheta }_{t} + {\Phi }_{x} + \mathcal{V}_{xz} \hat{\gamma } ) ( \pi - \hat{\pi } ) {\sigma }_{t} = 0, \\
  \mathcal{D}_{1}^{ \hat{\pi }, \gamma } \mathcal{V} + \mathcal{D}_{2}^{ \hat{\pi }, \gamma } \Phi - \mathbb{H}^{ \mathcal{V}, \Phi }
& = ( \mathcal{V}_{xz} \hat{\pi } {\sigma }_{t} + {\Phi }_{z} + \mathcal{V}_{zz} \hat{\gamma } ) ( \gamma - \hat{\gamma } ) + \frac{1}{2} \mathcal{V}_{zz} ( \gamma - \hat{\gamma } )^{2} \ge 0.
\end{aligned} \right.
\end{equation*}
In other words, $( \hat{\pi }, \hat{\gamma } )$ is a saddle point of $\mathcal{D}_{1}^{ \pi, \gamma } \mathcal{V} + \mathcal{D}_{2}^{ \pi, \gamma } \Phi $, implying
\begin{equation*}
  \essinf_{ \gamma \in \mathbb{R} } \esssup_{ \pi \in \mathbb{R} } \{ \mathcal{D}_{1}^{ \pi, \gamma } \mathcal{V} ( t,x,z ) + \mathcal{D}_{2}^{ \pi, \gamma } \Phi ( t,x,z ) \} 
= \mathbb{H}^{ \mathcal{V}, \Phi } ( t,x,z ).
\end{equation*}
Hence, \cref{eq: HJBI-BSPDE for SDGs :eq} is equivalent to the so-called upper Isaacs BSPDE, and then it can be named as a ``stochastic Hamilton-Jacobi-Bellman-Isaacs (HJBI) equation''.
\end{remark}

Given the lower Isaacs BSPDE \cref{eq: HJBI-BSPDE for SDGs :eq},
we now investigate the feedback control scheme of the saddle point $( {\pi }^{*}, {\gamma }^{*} )$ for \cref{eq: relaxed SDGs :eq}, i.e.
${\pi }^{*}_{s} = \hat{\pi } ( s, {X}^{ t,x, {\pi }^{*} }_{s}, {Z}^{ t,z, {\gamma }^{*} }_{s} )$ and ${\gamma }^{*}_{s} = \hat{\gamma } ( s, {X}^{ t,x, {\pi }^{*} }_{s}, {Z}^{ t,z, {\gamma }^{*} }_{s} )$
with the random fields $\hat{\pi }, \hat{\gamma }: [ 0,T ] \times \Omega \times \mathbb{R} \times \mathbb{R} \to \mathbb{R}$ mentioned in the above \Cref{rem: upper Isaacs BSPDE}.
Before proceeding to find the explicit expression of $( \hat{\pi }, \hat{\gamma } )$, 
we are supposed to introduce the following $( \mathbb{F}, \mathbb{P} )$-martingale representations due to the randomness of $( \zeta, r, \vartheta )$:
\begin{align}\label{eq: martingale representation of zeta :eq}
\mathbb{E}_{t} [ \zeta ] & = \mathbb{E} [ \zeta ] + \int_{0}^{t} {\eta }_{s} d {W}_{s}, \\
\label{eq: martingale representation of D :eq}
\mathbb{E}_{t} [ {D}_{T} ] & = \mathbb{E} [ {D}_{T} ] + \int_{0}^{t} {\eta }^{(1)}_{s} d {W}_{s}, \\
\label{eq: martingale representation of D2 :eq}
\mathbb{E}_{t} [ | {D}_{T} |^{2} ] & = \mathbb{E} [ | {D}_{T} |^{2} ] + \int_{0}^{t} {\eta }^{(2)}_{s} d {W}_{s}.
\end{align}
In addition, we let 
\begin{equation*}
{\xi }^{(1)}_{t} := - {\vartheta }_{t} - \frac{ {\eta }^{(1)}_{t} }{ \mathbb{E}_{t} [ {D}_{T} ] } \qquad
{\xi }^{(2)}_{t} := 2 {\vartheta }_{t} + \frac{ {\eta }^{(2)}_{t} }{\mathbb{E}_{t} [ | {D}_{T} |^{2} ] },
\end{equation*}
such that applying It\^o's rule yields
\begin{align*}
 & d \frac{ {D}_{t} }{ \mathbb{E}_{t} [ {D}_{T} ] } 
= \bigg( \frac{ | {\eta }^{(1)}_{t} |^{2}  }{ | \mathbb{E}_{t} [ {D}_{T} ] |^{2} } + {\vartheta }_{t} \frac{ {\eta }^{(1)}_{t} }{ \mathbb{E}_{t} [ {D}_{T} ] } - {r}_{t} \bigg) \frac{ {D}_{t} }{ \mathbb{E}_{t} [ {D}_{T} ] } dt 
+ {\xi }^{(1)}_{t} \frac{ {D}_{t} }{ \mathbb{E}_{t} [ {D}_{T} ] } d {W}_{t}, \\
& d \frac{ \mathbb{E}_{t} [ | {D}_{T} |^{2} ] }{ | {D}_{t} |^{2} }
= \bigg( 2 {r}_{t} + 3 | {\vartheta }_{t} |^{2} + 2 {\vartheta }_{t} \frac{ {\eta }^{(2)}_{t} }{ \mathbb{E}_{t} [ | {D}_{T} |^{2} ] } \bigg) \frac{ \mathbb{E}_{t} [ | {D}_{T} |^{2} ] }{ | {D}_{t} |^{2} } dt 
+ {\xi }^{(2)}_{t} \frac{ \mathbb{E}_{t} [ | {D}_{T} |^{2} ] }{ | {D}_{t} |^{2} } d {W}_{t}.
\end{align*}
Notably, if the parameters $( r, \vartheta )$ are deterministic, then $( {\xi }^{(1)}, {\xi }^{(2)} )$ both vanish.
For ease of reference, we summarize the main results in the following theorem.

\begin{theorem}\label{thm: optimal MV solution}
The saddle point $( {\pi }^{*}, {\gamma }^{*} )$ for \cref{eq: relaxed SDGs :eq} can be characterized by the following properties:
\begin{itemize}
\item (feedback control)
      \begin{equation}\label{eq: optimal MV portfolio :eq}
      \left\{ \begin{aligned}
      & \hat{\pi } ( t,x,z ) = \frac{1}{ {\sigma }_{t} } 
                               \bigg( ( \mathbb{E}_{t} [ \zeta ] + \kappa z ) \frac{ \mathbb{E}_{t} [ | {D}_{T} |^{2} ] }{ \theta {D}_{t} \mathbb{E}_{t} [ {D}_{T} ] } ( {\vartheta }_{t} - {\xi }^{(1)}_{t} - {\xi }^{(2)}_{t} ) 
                                    - x {\xi }^{(1)}_{t} \bigg) \\
      & \hat{\gamma } ( t,x,z ) = - \frac{1}{\kappa } \Big( ( \mathbb{E}_{t} [ \zeta ] + \kappa z ) ( {\vartheta }_{t} + {\xi }^{(1)}_{t} ) + {\eta }_{t} \Big);
      \end{aligned} \right.
      \end{equation}
\item (state processes)
      \begin{equation}\label{eq: optimal MV state pair :eq}
      \left\{ \begin{aligned}
      & {X}^{ t,x, {\pi }^{*} }_{s} = x \frac{ {D}_{t} \mathbb{E}_{s} [ {D}_{T} ] }{ {D}_{s} \mathbb{E}_{t} [ {D}_{T} ] }
                                    + ( \mathbb{E}_{t} [ \zeta ] + \kappa z ) 
                                      \frac{ \mathbb{E}_{s} [ {D}_{T} ] \mathbb{E}_{t} [ | {D}_{T} |^{2} ] - \mathbb{E}_{s} [ | {D}_{T} |^{2} ] \mathbb{E}_{t} [ {D}_{T} ] }{ \theta {D}_{s} | \mathbb{E}_{t} [ {D}_{T} ] |^{2} }, \\
      & {Z}^{ t,z, {\gamma }^{*} }_{s} = \frac{1}{\kappa } \mathbb{E}_{s} \bigg[ ( \mathbb{E}_{t} [ \zeta ] + \kappa z ) \frac{ {D}_{T} }{ \mathbb{E}_{t} [ {D}_{T} ] } - \zeta \bigg];
      \end{aligned} \right.
      \end{equation}
\item (value random field)
      \begin{equation}\label{eq: MV value random field :eq}
        \mathcal{V} ( t,x,z ) 
      = x ( \mathbb{E}_{t} [ \zeta ] + \kappa z ) \frac{ {D}_{t} }{ \mathbb{E}_{t} [ {D}_{T} ] }
      + \frac{1}{ 2 \theta } ( \mathbb{E}_{t} [ \zeta ] + \kappa z )^{2} \frac{ \mathbb{E}_{t} [ | {D}_{T} |^{2} ] }{ | \mathbb{E}_{t} [ {D}_{T} ] |^{2} }
      - \frac{1}{ 2 \theta } \mathbb{E}_{t} [ | \zeta |^{2} ].
      \end{equation}
\end{itemize}
Furthermore, if \Cref{ass: SMMV-MV comparison} holds, then $( {\pi }^{*}, {\gamma }^{*} )$ characterized by \cref{eq: optimal MV portfolio :eq}--\cref{eq: MV value random field :eq} is also a saddle point for 
\cref{eq: control problem primal :eq}, \cref{eq: saddle point definition :eq} and \cref{eq: control problem for DP :eq} with $z \ge \frac{1}{\kappa } \mathbb{E}_{t} [ \frac{ {D}_{T} }{ \mathbb{E} [ {D}_{T} ] } - \zeta ]$.
\end{theorem}

\begin{proof}
See \Cref{pf-thm: optimal MV solution}.
\end{proof}

Corresponding to $( {\pi }^{*}, {\gamma }^{*} )$ characterized by \Cref{thm: optimal MV solution},
\begin{equation*}
{X}^{ 0, {x}_{0}, {\pi }^{*} }_{t} = {x}_{0} \frac{ \mathbb{E}_{t} [ {D}_{T} ] }{ {D}_{t} \mathbb{E} [ {D}_{T} ] }
                                   + \frac{ \mathbb{E}_{t} [ {D}_{T} ] \mathbb{E} [ | {D}_{T} |^{2} ] - \mathbb{E}_{t} [ | {D}_{T} |^{2} ] \mathbb{E} [ {D}_{T} ] }{ \theta {D}_{t} | \mathbb{E} [ {D}_{T} ] |^{2} }, \quad
{Z}^{ 0,1, {\gamma }^{*} }_{t} = \frac{1}{\kappa } \mathbb{E}_{t} \bigg[ \frac{ {D}_{T} }{ \mathbb{E} [ {D}_{T} ] } - \zeta \bigg],
\end{equation*}
implying that
\begin{equation*}
{X}^{ 0, {x}_{0}, {\pi }^{*} }_{T} = \frac{ {x}_{0} }{ \mathbb{E} [ {D}_{T} ] }
                                   + \frac{ \Var [ {D}_{T} ] }{ \theta | \mathbb{E} [ {D}_{T} ] |^{2} }
                                   + \frac{1}{\theta } \bigg( 1 - \frac{ {D}_{T} }{ \mathbb{E} [ {D}_{T} ] } \bigg), \quad
{Z}^{ 0,1, {\gamma }^{*} }_{T} = \frac{1}{\kappa } \bigg( \frac{ {D}_{T} }{ \mathbb{E} [ {D}_{T} ] } - \zeta \bigg).
\end{equation*}
In comparison with \Cref{lem: MV optimal portfolio in monotonicity domain}, here we provide a much more clear expression for ${X}^{*}_{T} = {X}^{ 0, {x}_{0}, {\pi }^{*} }_{T}$.

\subsubsection{Duality characterization}

To tackle the issues stemming from the absence of saddle point or well-posedness for \cref{eq: control problem primal :eq}, 
we purpose to investigate a sequence of auxiliary problems (indexed by $( \rho, c ) \in \mathbb{R}_{+} \times \mathbb{L}^{2} ( \Omega )$) formulated by
\begin{equation}\label{eq: approximate control problem :eq}
\sup_{ \pi \in \mathbb{L}^{2}_{\mathbb{F}} ( 0,T; \mathbb{L}^{2} ( \Omega ) ) } \inf_{ \gamma \in {\Gamma }^{0,1} } \Big\{ {J}^{ \pi, \gamma } ( 0, {x}_{0}, 1 ) - \frac{\rho }{2} \mathbb{E} [ ( {X}^{ 0, {x}_{0}, \pi }_{T} - c )^{2} ] \Big\}.
\end{equation}
Obviously,
\begin{align*}
& \sup_{ \pi \in \mathbb{L}^{2}_{\mathbb{F}} ( 0,T; \mathbb{L}^{2} ( \Omega ) ) } \inf_{ \gamma \in {\Gamma }^{0,1} } {J}^{ \pi, \gamma } ( 0, {x}_{0}, 1 ) \\
& \ge \sup_{ ( \rho, c ) \in \mathbb{R}_{+} \times \mathbb{R} } \sup_{ \pi \in \mathbb{L}^{2}_{\mathbb{F}} ( 0,T; \mathbb{L}^{2} ( \Omega ) ) } \inf_{ \gamma \in {\Gamma }^{0,1} } 
      \Big\{ {J}^{ \pi, \gamma } ( 0, {x}_{0}, 1 ) - \frac{\rho }{2} \mathbb{E} [ ( {X}^{ 0, {x}_{0}, \pi }_{T} - c )^{2} ] \Big\} \\
& \ge \sup_{ \pi \in \mathbb{L}^{2}_{\mathbb{F}} ( 0,T; \mathbb{L}^{2} ( \Omega ) ) } \sup_{ \rho \in \mathbb{R}_{+} } \inf_{ \gamma \in {\Gamma }^{0,1} } 
      \Big\{ {J}^{ \pi, \gamma } ( 0, {x}_{0}, 1 ) - \frac{\rho }{2} \inf_{ c \in \mathbb{R} } \mathbb{E} [ ( {X}^{ 0, {x}_{0}, \pi }_{T} - c )^{2} ] \Big\} \\
&   = \sup_{ \pi \in \mathbb{L}^{2}_{\mathbb{F}} ( 0,T; \mathbb{L}^{2} ( \Omega ) ) } \sup_{ \rho \in \mathbb{R}_{+} } \inf_{ \gamma \in {\Gamma }^{0,1} } 
      \Big\{ {J}^{ \pi, \gamma } ( 0, {x}_{0}, 1 ) - \frac{\rho }{2} \Var [ {X}^{ 0, {x}_{0}, \pi }_{T} ] \Big\} \\
&   = \sup_{ \pi \in \mathbb{L}^{2}_{\mathbb{F}} ( 0,T; \mathbb{L}^{2} ( \Omega ) ) } \inf_{ \gamma \in {\Gamma }^{0,1} } {J}^{ \pi, \gamma } ( 0, {x}_{0}, 1 ),
\end{align*}
where the third line formulates a portfolio problem with a mixed SMMV-MV objective functional.
The presence of the arbitrarily chosen $c \in \mathbb{L}^{2} ( \Omega )$ helps our asymptotic analysis.
In particular, in the case where \Cref{ass: SMMV-MV comparison} is violated, we can show the ill-posedness of \cref{eq: control problem primal :eq} by
\begin{align*}
& \sup_{ \pi \in \mathbb{L}^{2}_{\mathbb{F}} ( 0,T; \mathbb{L}^{2} ( \Omega ) ) } \inf_{ \gamma \in {\Gamma }^{0,1} } {J}^{ \pi, \gamma } ( 0, {x}_{0}, 1 ) \\
& \ge \lim_{ \rho \downarrow 0 } \sup_{ \pi \in \mathbb{L}^{2}_{\mathbb{F}} ( 0,T; \mathbb{L}^{2} ( \Omega ) ) } \inf_{ \gamma \in {\Gamma }^{0,1} } 
      \Big\{ {J}^{ \pi, \gamma } ( 0, {x}_{0}, 1 ) - \frac{\rho }{2} \mathbb{E} [ ( {X}^{ 0, {x}_{0}, \pi }_{T} - c )^{2} ] \Big\}.
\end{align*}

For notational simplicity, let us introduce the auxiliary function (with $\zeta $ treated as a constant):
\begin{equation*}
U ( x,z ) := x \zeta + \kappa \Big( x + \frac{\zeta }{\theta } \Big) z + \frac{ {\kappa }^{2} }{ 2 \theta } {z}^{2} - \frac{\rho }{2} ( x - c )^{2},
\end{equation*}
so that ${J}^{ \pi, \gamma } ( 0, {x}_{0}, 1 ) - \frac{\rho }{2} \mathbb{E} [ ( {X}^{ 0, {x}_{0}, \pi }_{T} - c )^{2} ] = \mathbb{E} [ U ( {X}^{ 0, {x}_{0}, \pi }_{T}, {Z}^{ 0,1, \gamma }_{T} ) ]$.
Obviously, $z = \frac{\theta }{\kappa } ( \frac{h}{\kappa } - x - \frac{\zeta }{\theta } )_{+}$ is the unique minimizer for minimizing $U ( x,z ) - z h$ on $z \in [ 0, + \infty )$, and hence
\begin{equation*} 
  \min_{ z \in [ 0, + \infty ) } \{ U ( x,z ) - z h \}
= x \zeta - \frac{\theta }{2} \Big| \Big( \frac{h}{\kappa } - x - \frac{\zeta }{\theta } \Big)_{+} \Big|^{2} - \frac{\rho }{2} ( x - c )^{2}.
\end{equation*} 
Consequently, the pair $( \mathcal{X} ( y,h ), \mathcal{Z} ( y,h ) )$ given by the following continuous functions (where the dependence of $( \omega, \zeta, \kappa, \theta, \rho, c )$ for $\mathcal{X}$ is suppressed for notational simplicity):
\begin{align*}
& \mathcal{X} ( y,h ) := \bigg( \frac{1}{ \rho + \theta } \Big( \frac{\theta }{\kappa } h - y + \rho c \Big) \bigg) \vee \bigg( \frac{1}{\rho } ( \zeta + \rho c - y ) \bigg) 
                       = \frac{1}{\rho } \bigg( \frac{ \rho \theta ( \frac{h}{\kappa } - c - \frac{y}{\theta } ) }{ \rho + \theta } \vee ( \zeta - y ) \bigg) + c, \\
& \mathcal{Z} ( y,h ) := \frac{1}{\kappa } \bigg( \frac{\theta }{ \rho + \theta } \Big( \frac{\rho }{\kappa } h + y - \rho c \Big) - \zeta \bigg)_{+}
                       = \frac{1}{\kappa } \bigg( \frac{ \rho \theta ( \frac{h}{\kappa } - c - \frac{y}{\theta } ) }{ \rho + \theta } - ( \zeta - y ) \bigg)_{+},
\end{align*} 
is the unique saddle point for 
\begin{equation*}
\tilde{U} ( y,h ) := \max_{ x \in \mathbb{R} } \min_{ z \in [ 0, + \infty ) } \{ U ( x,z ) - ( x y + z h ) \} = \min_{ z \in [ 0, + \infty ) } \max_{ x \in \mathbb{R} } \{ U ( x,z ) - ( x y + z h ) \},
\end{equation*}
since it uniquely fulfills the first-order derivative conditions:
\begin{equation}\label{eq: FOC in MCD :eq}
\mathcal{Z} = \frac{\theta }{\kappa } \Big( \frac{h}{\kappa } - \mathcal{X} - \frac{\zeta }{\theta } \Big)_{+}, \quad
0 = \zeta + \kappa \mathcal{Z} - \rho ( \mathcal{X} - c ) - y.
\end{equation}
Therefore, for any $( x,z ) \in \mathbb{R} \times [ 0, + \infty )$ and $( y,h ) \in \mathbb{R}^{2}$, we have
\begin{equation}\label{eq: saddle point in duality :eq}
    U \big( x, \mathcal{Z} ( y,h ) \big) - \big( x y + h \mathcal{Z} ( y,h ) \big) 
\le \tilde{U} ( y,h )
\le U \big( \mathcal{X} ( y,h ), z \big) - \big( y \mathcal{X} ( y,h ) + z h \big).
\end{equation}
Then, the following theorem consolidates the main results of the duality analysis.

\begin{theorem}\label{thm: duality characterization}
On the one hand, there exists a unique pair $( \bar{y}, \bar{h} ) \in \mathbb{R}^{2}$ such that
\begin{equation}\label{eq: y-h system :eq}
\mathbb{E} [ {D}_{T} \mathcal{X} ( \bar{y} {D}_{T}, \bar{h} ) ] = {x}_{0}, \quad 
\mathbb{E} [ \mathcal{Z} ( \bar{y} {D}_{T}, \bar{h} ) ] = 1, 
\end{equation}
and then there exists a unique pair $( \bar{\pi }, \bar{\gamma } ) \in \mathbb{L}^{2}_{\mathbb{F}} ( 0,T; \mathbb{L}^{2} ( \Omega ) ) \times {\Gamma }^{0,1}$ such that
\begin{equation}\label{eq: minimax equality conditions :eq}
{X}^{ 0, {x}_{0}, \bar{\pi } }_{T} = \mathcal{X} ( \bar{y} {D}_{T}, \bar{h} ), \quad
{Z}^{ 0,1, \bar{\gamma } }_{T} = \mathcal{Z} ( \bar{y} {D}_{T}, \bar{h} ). 
\end{equation}
On the other hand, for any $( \pi, \gamma ) \in \mathbb{L}^{2}_{\mathbb{F}} ( 0,T; \mathbb{L}^{2} ( \Omega ) ) \times {\Gamma }^{0,1}$ and $( y,h ) \in \mathbb{R}^{2}$,
\begin{equation}\label{eq: minimax inequalities :eq}
\left\{ \begin{aligned}
&   \mathbb{E} \big[ U \big( {X}^{ 0, {x}_{0}, \pi }_{T}, \mathcal{Z} ( y {D}_{T}, h ) \big) \big]
\le \mathbb{E} [ \tilde{U} ( y {D}_{T}, h ) ] + y {x}_{0} + h \mathbb{E} [ \mathcal{Z} ( y {D}_{T}, h ) ], \\
&   \mathbb{E} \big[ U \big( \mathcal{X} ( y {D}_{T}, h ), {Z}^{ 0,1, \gamma }_{T} \big) \big]
\ge \mathbb{E} [ \tilde{U} ( y {D}_{T}, h ) ] + y \mathbb{E} [ {D}_{T} \mathcal{X} ( y {D}_{T}, h ) ] + h.
\end{aligned} \right.
\end{equation}
Therefore, $( \bar{\pi }, \bar{\gamma } )$ is the saddle point for \cref{eq: approximate control problem :eq}, and $( \bar{y}, \bar{h} )$ is the saddle point for
\begin{equation}\label{eq: minimax equality in duality :eq}
  \sup_{ h \in \mathbb{R} } \inf_{ y \in \mathbb{R} } \{ \mathbb{E} [ \tilde{U} ( y {D}_{T}, h ) ] + y {x}_{0} + h \}
= \inf_{ y \in \mathbb{R} } \sup_{ h \in \mathbb{R} } \{ \mathbb{E} [ \tilde{U} ( y {D}_{T}, h ) ] + y {x}_{0} + h \}.
\end{equation}
\end{theorem}

\begin{proof}
See \Cref{pf-thm: duality characterization}.
\end{proof}

The above duality analysis is also applicable to $\rho = 0$.
By \cref{eq: FOC in MCD :eq}, one obtains
\begin{equation*}
\mathcal{X} ( y,h ) = \frac{h}{\kappa } - \frac{y}{\theta }, \quad
\mathcal{Z} ( y,h ) = \frac{ \zeta - y }{\kappa }, 
\end{equation*}
with the domain $\{ ( \omega, y, h ): y \ge \zeta ( \omega ) \}$.
Under the Ansatz $\zeta \le \bar{y} {D}_{T}$, $\mathbb{P}$-a.s., \cref{eq: y-h system :eq} turns into
\begin{equation*}
\left\{ \begin{aligned}
& \mathbb{E} \Big[ {D}_{T} \Big( \frac{ \bar{h} }{\kappa } - \frac{ \bar{y} }{\theta } {D}_{T} \Big) \Big] = {x}_{0}, \\
& \mathbb{E} [ \bar{y} {D}_{T} - \zeta ] = \kappa, \\
\end{aligned} \right. \quad i.e. \quad
\left\{ \begin{aligned}
& \bar{y} = \frac{1}{ \mathbb{E} [ {D}_{T} ] }, \\
& \bar{h} = \kappa \bigg( \frac{ {x}_{0} }{ \mathbb{E} [ {D}_{T} ] } + \frac{ \Var [ {D}_{T} ] }{ \theta | \mathbb{E} [ {D}_{T} ] |^{2} } + \frac{1}{\theta } \bigg),
\end{aligned} \right.
\end{equation*}
and hence \Cref{ass: SMMV-MV comparison} is presupposed.
Consequently, \cref{eq: minimax equality conditions :eq} reproduces the result in the previous subsection, as the following:
\begin{equation*}
\left\{ \begin{aligned}
& {X}^{ 0, {x}_{0}, \bar{\pi } }_{T} = \frac{ \bar{h} }{\kappa } - \frac{ \bar{y} }{\theta } {D}_{T}
                                     = \frac{ {x}_{0} }{ \mathbb{E} [ {D}_{T} ] } + \frac{ \Var [ {D}_{T} ] }{ \theta | \mathbb{E} [ {D}_{T} ] |^{2} } + \frac{1}{\theta } \bigg( 1 - \frac{ {D}_{T} }{ \mathbb{E} [ {D}_{T} ] } \bigg), \\
& {Z}^{ 0,1, \bar{\gamma } }_{T} = \frac{1}{\kappa } ( \bar{y} {D}_{T} - \zeta ) = \frac{1}{\kappa } \bigg( \frac{ {D}_{T} }{ \mathbb{E} [ {D}_{T} ] } - \zeta \bigg).
\end{aligned} \right.
\end{equation*}
Notably, if we merely let $\rho $ tend to zero, 
$c \le \frac{ \bar{h} }{\kappa } - \frac{ \bar{y} }{\theta } {D}_{T}$, $\mathbb{P}$-a.e. on $\{ \omega: \zeta ( \omega ) = \frac{ {D}_{T} ( \omega ) }{ \mathbb{E} [ {D}_{T} ] } \}$, is presupposed to arrive at the above result for $\rho = 0$.
This condition can be realized, due to the arbitrariness of $c \in \mathbb{L}^{2} ( \Omega )$.

As a result, we can show by contradiction that there does not exist such a limiting pair $( \bar{y}, \bar{h} )$ satisfying \cref{eq: y-h system :eq} with $\rho \downarrow 0$ and some sufficiently small $c$, when \Cref{ass: SMMV-MV comparison} is violated.
Indeed, by the dominated convergence theorem, \cref{eq: y-h system :eq} produces $\bar{y} \mathbb{E} [ {D}_{T} ] \le \mathbb{E} [ ( \bar{y} {D}_{T} ) \vee \zeta ] = 1$, 
and hence for $c \le \frac{ \bar{h} }{\kappa } - \frac{ \bar{y} }{\theta } {D}_{T}$, $\mathbb{P}$-a.s., one obtains
\begin{equation*}
    {x}_{0} - \mathbb{E} [ {D}_{T} c ]
\ge \liminf_{ \rho \downarrow 0 } \mathbb{E} \bigg[ {D}_{T} \bigg( \Big( \frac{ \bar{h} }{\kappa } - c - \frac{ \bar{y} }{\theta } {D}_{T} \Big) \vee \frac{ \zeta - \bar{y} {D}_{T} }{\rho } \bigg) \bigg]
\ge \liminf_{ \rho \downarrow 0 } \frac{1}{\rho } \mathbb{E} \bigg[ {D}_{T} \bigg( \zeta - \frac{ {D}_{T} }{ \mathbb{E} [ {D}_{T} ] } \bigg)_{+} \bigg].
\end{equation*}
The last limit inferior is finite, only if $( \zeta - \frac{ {D}_{T} }{ \mathbb{E} [ {D}_{T} ] } )_{+} = 0$, $\mathbb{P}$-a.s., namely, \Cref{ass: SMMV-MV comparison} holds.
Hence, \Cref{ass: SMMV-MV comparison} is necessary for the existence of the saddle points $( \bar{\pi }, \bar{\gamma } )$ for \cref{eq: control problem primal :eq}.

\subsubsection{Multi-stage-optimization characterization}

In the previous subsection, the existence of such a pair $( \bar{y}, \bar{h} )$ indicates the existence of the saddle point for \cref{eq: approximate control problem :eq}.
So in this subsection, we purpose to investigate the following two-stage optimization problem:
\begin{equation}\label{eq: interchanged approximate control problem :eq}
\text{minimizing} \quad \sup_{ \pi \in \mathbb{L}^{2}_{\mathbb{F}} ( 0,T; \mathbb{L}^{2} ( \Omega ) ) } \bigg\{ {J}^{ \pi, \gamma } ( 0, {x}_{0}, 1 ) - \frac{\rho }{2} \mathbb{E} [ ( {X}^{ 0, {x}_{0}, \pi }_{T} - c )^{2} ] \bigg\} \quad
\text{s.t.} \quad \gamma \in {\Gamma }^{0,1}.
\end{equation} 
Before proceeding, we introduce the following auxiliary functionals for the sake of brevity:
\begin{align*}
{F}_{ \rho, j } ( \gamma ) := \frac{j}{ 2 \mathbb{E} [ | {D}_{T} |^{2} ] } \mathbb{E} [ {D}_{T} ( \kappa {Z}^{ 0,1, \gamma }_{T} + \zeta ) ] - \frac{\rho }{ \mathbb{E} [ | {D}_{T} |^{2} ] } ( {x}_{0} - \mathbb{E} [ {D}_{T} c ] ),
\quad \rho \ge 0, ~ j = 1,2.
\end{align*}
By the martingale convex duality method as used in the previous subsection, we have
\begin{align*}
& \sup_{ \pi \in \mathbb{L}^{2}_{\mathbb{F}} ( 0,T; \mathbb{L}^{2} ( \Omega ) ) } \bigg\{ {J}^{ \pi, \gamma } ( 0, {x}_{0}, 1 ) - \frac{\rho }{2} \mathbb{E} [ ( {X}^{ 0, {x}_{0}, \pi }_{T} - c )^{2} ] \bigg\} \\
& = \sup_{ \pi \in \mathbb{L}^{2}_{\mathbb{F}} ( 0,T; \mathbb{L}^{2} ( \Omega ) ) }
    \mathbb{E} \Big[ {X}^{ 0, {x}_{0}, \pi }_{T} ( \kappa {Z}^{ 0,1, \gamma }_{T} + \zeta ) - \frac{\rho }{2} ( {X}^{ 0, {x}_{0}, \pi }_{T} - c )^{2} \Big] 
  + \frac{1}{ 2 \theta } \mathbb{E} [ ( \kappa {Z}^{ 0,1, \gamma }_{T} + \zeta )^{2} - {\zeta }^{2} ] \\
& = \inf_{ y \in \mathbb{R} } \Big\{ \mathbb{E} \Big[ c ( \kappa {Z}^{ 0,1, \gamma }_{T} + \zeta - y {D}_{T} ) + \frac{1}{ 2 \rho } ( \kappa {Z}^{ 0,1, \gamma }_{T} + \zeta - y {D}_{T} )^{2} \Big] + y {x}_{0} \Big\} \\
& \quad + \frac{1}{ 2 \theta } \mathbb{E} [ ( \kappa {Z}^{ 0,1, \gamma }_{T} + \zeta )^{2} - {\zeta }^{2} ] \\
& = \frac{1}{\rho } \bigg( \frac{ \theta + \rho }{ 2 \theta } \mathbb{E} [ ( \kappa {Z}^{ 0,1, \gamma }_{T} + \zeta )^{2} ] 
                         + \rho \mathbb{E} [ c ( \kappa {Z}^{ 0,1, \gamma }_{T} + \zeta ) ]
                         - {F}_{ \rho, 1 } ( \gamma ) \mathbb{E} [ {D}_{T} ( \kappa {Z}^{ 0,1, \gamma }_{T} + \zeta ) ] \bigg) \\
& \quad - \frac{ \rho ( {x}_{0} - \mathbb{E} [ {D}_{T} c ] )^{2} }{ 2 \mathbb{E} [ | {D}_{T} |^{2} ] } - \frac{1}{ 2 \theta } \mathbb{E} [ {\zeta }^{2} ],
\end{align*}
where the infimum and the supremum are realized by
\begin{equation*}
{y}^{\dag } = {F}_{ \rho, 2 } ( \gamma ), \quad 
{X}^{ 0, {x}_{0}, {\pi }^{\dag } }_{T} = \frac{1}{\rho } ( \kappa {Z}^{ 0,1, \gamma }_{T} + \zeta - {y}^{\dag } {D}_{T} ) + c, \quad \text{respectively}.
\end{equation*}

For any $\rho > 0$, \cref{eq: interchanged approximate control problem :eq} is reduced to the minimization problem formulated by 
\begin{equation}\label{eq: minimization problem in two-stage :eq}
\inf_{ \gamma \in {\Gamma }^{0,1} } \bigg\{ \frac{ \theta + \rho }{ 2 \theta } \mathbb{E} [ ( \kappa {Z}^{ 0,1, \gamma }_{T} + \zeta )^{2} ] 
                                          + \rho \mathbb{E} [ c ( \kappa {Z}^{ 0,1, \gamma }_{T} + \zeta ) ] 
                                          - {F}_{ \rho, 1 } ( \gamma ) \mathbb{E} [ {D}_{T} ( \kappa {Z}^{ 0,1, \gamma }_{T} + \zeta ) ] \bigg\}.
\end{equation} 
Let ${\gamma }^{**}$ denote the solution for this minimization problem. 
Moreover, to apply the embedding method pioneered by \cite{Li-Ng-2000} (see also \cite{Zhou-Li-2000} and \cite[Theorem 6.8.2, p. 338]{Yong-Zhou-1999} for the continuous-time framework),
we let $\bar{\Gamma }^{0,1} (w)$ be the set of all minimizers for
\begin{equation}\label{eq: embedding problem :eq}
\inf_{ \gamma \in {\Gamma }^{0,1} } \bigg\{ \frac{ \theta + \rho }{ 2 \theta } \mathbb{E} [ ( \kappa {Z}^{ 0,1, \gamma }_{T} + \zeta )^{2} ] 
                                          + \rho \mathbb{E} [ c ( \kappa {Z}^{ 0,1, \gamma }_{T} + \zeta ) ]
                                          - w \mathbb{E} [ {D}_{T} ( \kappa {Z}^{ 0,1, \gamma }_{T} + \zeta ) ] \bigg\}.
\end{equation}
Then, we isolate the key result of applying the embedding method in the following \Cref{thm: embedding method}. 

\begin{theorem}\label{thm: embedding method}
${\gamma }^{**} \in \bar{\Gamma }^{0,1} ( {w}^{**} )$ with ${w}^{**} = {F}_{ \rho, 2 } ( {\gamma }^{**} )$.
\end{theorem}

\begin{proof}
See \Cref{pf-thm: embedding method}.
\end{proof}

By virtue of \Cref{thm: embedding method}, we are supposed to solve the problem \cref{eq: embedding problem :eq} to find the solution for \cref{eq: interchanged approximate control problem :eq}.
Applying the martingale convex duality method again, we have
\begin{align*}
& \inf_{ \gamma \in {\Gamma }^{0,1} } \mathbb{E} \bigg[ \frac{ \theta + \rho }{ 2 \theta } ( \kappa {Z}^{ 0,1, \gamma }_{T} + \zeta )^{2} - ( w {D}_{T} - \rho c ) ( \kappa {Z}^{ 0,1, \gamma }_{T} + \zeta ) \bigg] \\
& = \sup_{ h \in \mathbb{R} } \bigg\{ \mathbb{E} \bigg[ \frac{ \theta + \rho }{ 2 \theta } | \zeta |^{2} - ( w {D}_{T} - \rho c ) \zeta 
                                                      - \frac{\theta }{ 2 ( \theta + \rho ) } \Big| \Big( w {D}_{T} - \rho c + \frac{h}{\kappa } - \frac{ \theta + \rho }{\theta } \zeta \Big)_{+} \Big|^{2} \bigg] + h \bigg\},
\end{align*}
whereby ${h}^{\dag } (w)$ and ${\gamma }^{\dag } (w)$ given by
\begin{equation*}
\left\{ \begin{aligned}
& 1 = \frac{\theta }{ \kappa ( \theta + \rho ) } \mathbb{E} \Big[ \Big( w {D}_{T} - \rho c + \frac{1}{\kappa } {h}^{\dag } (w) - \frac{ \theta + \rho }{\theta } \zeta \Big)_{+} \Big], \\
& {Z}^{ 0,1, {\gamma }^{\dag } (w) }_{T} = \frac{\theta }{ \kappa ( \theta + \rho ) } \Big( w {D}_{T} - \rho c + \frac{1}{\kappa } {h}^{\dag } (w) - \frac{ \theta + \rho }{\theta } \zeta \Big)_{+}
\end{aligned} \right.
\end{equation*}
realize the supremum and the infimum, respectively.
The existence and uniqueness of ${h}^{\dag } (w)$ is straightforward.
Due to the uniqueness of ${h}^{\dag } (w)$ and the martingale representation theorem, one obtains $\bar{\Gamma }^{0,1} (w) = \{ {\gamma }^{\dag } (w) \}$.
Therefore, ${w}^{**} = {F}_{ \rho, 2 } ( {\gamma }^{**} ) = {F}_{ \rho, 2 } ( {\gamma }^{\dag } ( {w}^{**} ) )$.
We should note that the existence of saddle point $( \bar{\pi }, \bar{\gamma } )$ provided by \Cref{thm: duality characterization} merely implies the existence of ${\gamma }^{**}$ and ${w}^{**}$, even if the saddle point is unique.
We need the following \Cref{lem: uniqueness of w} to guarantee that the solution of $w = {F}_{ \rho, 2 } ( {\gamma }^{\dag } (w) )$ is necessarily the desired ${w}^{**}$. 

\begin{proposition}\label{lem: uniqueness of w}
$w = {F}_{ \rho, 2 } ( {\gamma }^{\dag } (w) )$ has a unique solution ${w}^{\S }$, and ${w}^{\S } \ge \frac{ \mathbb{E} [ {D}_{T} \zeta ] - \rho ( {x}_{0} - c \mathbb{E} [ {D}_{T} ] ) }{ \mathbb{E} [ | {D}_{T} |^{2} ] }$.
\end{proposition}

\begin{proof}
See \Cref{pf-lem: uniqueness of w}.
\end{proof}

So far, we have already derived the analytical solution for \cref{eq: interchanged approximate control problem :eq}.
That is, the minimizer $\bar{\gamma } = {\gamma }^{\dag } ( {w}^{\S } )$ is given by the martingale representation (namely, the It\^o's representation)
\begin{equation*}
\frac{\theta }{ \kappa ( \theta + \rho ) } \Big( {w}^{\S } {D}_{T} - \rho c + \frac{1}{\kappa } {h}^{\dag } ( {w}^{\S } ) - \frac{ \theta + \rho }{\theta } \zeta \Big)_{+} = 1 + \int_{0}^{T} \bar{\gamma }_{t} d {W}_{t}.
\end{equation*}
Moreover, corresponding to the ``optimal'' portfolio strategy $\bar{\pi }$ (for \cref{eq: interchanged approximate control problem :eq} at the minimum point $\bar{\gamma }$), we attain
\begin{equation*}
  {X}^{ 0, {x}_{0}, \bar{\pi } }_{T} 
= \frac{1}{\rho } \bigg( \frac{\theta }{ \theta + \rho } \Big( {w}^{\S } {D}_{T} - \rho c + \frac{1}{\kappa } {h}^{\dag } ( {w}^{\S } ) - \frac{ \theta + \rho }{\theta } \zeta \Big)_{+} - ( {w}^{\S } {D}_{T} - \rho c - \zeta ) \bigg),
\end{equation*}
implying that $\mathbb{E} [ {X}^{ 0, {x}_{0}, \bar{\pi } }_{T} ] = \frac{1}{\rho } ( 1 - {w}^{\S } \mathbb{E} [ {D}_{T} ] ) + \mathbb{E} [c]$.
Therefore, the existence of the saddle point $( \bar{\pi }, \bar{\gamma } )$ at the limiting case where $\rho $ tends to $0$ results in ${w}^{\S } = \frac{1}{ \mathbb{E} [ {D}_{T} ] } + O ( \rho )$.

In particular, under the Ansatz ${w}^{\S } {D}_{T} - \rho c + \frac{1}{\kappa } {h}^{\dag } ( {w}^{\S } ) - \frac{ \theta + \rho }{\theta } \zeta \ge 0$, one obtains
\begin{align*}
{w}^{\S } 
          & = \frac{1}{ \mathbb{E} [ {D}_{T} ] } 
            - \rho \frac{ \frac{ \Var [ {D}_{T} ] }{ \mathbb{E} [ {D}_{T} ] } + \theta ( {x}_{0} - \mathbb{E} [ {D}_{T} ] \mathbb{E} [ c ] ) }
                        { \rho \mathbb{E} [ | {D}_{T} |^{2} ] + \theta | \mathbb{E} [ {D}_{T} ] |^{2} }
            - {\rho }^{2} \frac{ {x}_{0} - \mathbb{E} [ {D}_{T} c ] }{ \rho \mathbb{E} [ | {D}_{T} |^{2} ] + \theta | \mathbb{E} [ {D}_{T} ] |^{2} }, \\
{h}^{\dag } ( {w}^{\S } ) & 
                            = \rho \bigg( \frac{1}{\theta } + \frac{ \Var [ {D}_{T} ] }{ \theta \mathbb{E} [ {D}_{T} ] } + \frac{ {x}_{0} }{ \mathbb{E} [ {D}_{T} ] } \bigg)
                            + {\rho }^{2} \frac{ ( {x}_{0} - \mathbb{E} [ {D}_{T} c ] ) \mathbb{E} [ {D}_{T} ] }{ \rho \mathbb{E} [ | {D}_{T} |^{2} ] + \theta | \mathbb{E} [ {D}_{T} ] |^{2} },
\end{align*} 
as $\rho \downarrow 0$. 
Substituting these limiting results into the aforementioned Ansatz implies that \Cref{ass: SMMV-MV comparison} is presupposed in this limiting case.
Moreover, under the aforementioned Ansatz, we have
\begin{align*}
& {Z}^{ 0,1, \bar{\gamma } }_{T} 
  = \frac{\theta }{ \kappa ( \theta + \rho ) } \Big( {w}^{\S } {D}_{T} - \rho c + \frac{1}{\kappa } {h}^{\dag } ( {w}^{\S } ) - \frac{ \theta + \rho }{\theta } \zeta \Big)
  \to \frac{1}{\kappa } \Big( \frac{ {D}_{T} }{ \mathbb{E} [ {D}_{T} ] } - \zeta \Big), \\
& {X}^{ 0, {x}_{0}, \bar{\pi } }_{T} 
  = \frac{ \theta {h}^{\dag } ( {w}^{\S } ) }{ \kappa \rho ( \theta + \rho ) } - \frac{ {w}^{\S } {D}_{T} }{ \theta + \rho } + \frac{ \rho c }{ \theta + \rho } 
  \to \frac{ {x}_{0} }{ \mathbb{E} [ {D}_{T} ] } + \frac{ \Var [ {D}_{T} ] }{ \theta | \mathbb{E} [ {D}_{T} ] |^{2} } + \frac{1}{\theta } \bigg( 1 - \frac{ {D}_{T} }{ \mathbb{E} [ {D}_{T} ] } \bigg),
\end{align*}
in both the sense of almost sure convergence and mean-square convergence, as $\rho \downarrow 0$.

\section{Concluding remark}
\label{sec: Concluding remark}

To address the limitations of conventional MMV preference, we modify the application of Fenchel conjugate and introduce a class of SMMV preferences. 
The associated portfolio selection problems are studied in both single-period and continuous-time settings.

In the single-period SMMV portfolio problem, we provide the gradient condition that is sufficient and necessary for optimality, followed by the exploration on the existence and uniqueness of its solution.
Moreover, by comparing the solutions of the static MMV and SMMV problems, 
we find that in the case with only one risky asset, the sign of the optimal SMMV portfolio strategy can be determined by the sign of the optimal MV portfolio strategy.
Furthermore, we reduce the SMMV portfolio problem to solving a linear system, which is a significantly simpler algorithm compared to the approach in literature, 
and conduct some numerical experiments to illustrate the single-period static portfolio selection under SMMV preferences.

In the continuous-time SMMV portfolio problem, we identify the equivalent condition for the existence of solution, along with the notable result that the optimal strategies for SMMV and MV preferences coincide. 
To characterize the solution, different approaches are employed, such as the stochastic control techniques (including the dynamic programming principle and the martingale convex duality method) and the penalty function method. 
In particular, we formulate a sequence of auxiliary problems by incorporating some quadratic penalty functions to the SMMV objective functional, and apply the embedding method to solve them.
These discussions provide a potential study framework for future research on SMMV dynamic portfolio problems.
For practical applications, the solutions of these auxiliary problems can be treated as the ``near-optimal'' SMMV strategies, unless the equivalent condition holds.

\appendix
\section{Convex duality analysis to the MV problem}
\label{app: Convex duality analysis to the MV problem}

For any $\mathbb{P}$-complete sub-$\sigma $-field of $\mathcal{G} \subset \mathcal{F}$,
let $\mathbb{L}^{2}_{\mathcal{G}} ( \Omega )$ (resp. $\mathbb{L}^{2}_{\mathcal{G}} ( \Omega; \mathbb{R}_{+} )$) denote the set of all $\mathcal{G}$-measurable random variables $X: \Omega \to \mathbb{R}$ (resp. $X: \Omega \to \mathbb{R}_{+}$) 
such that $\mathbb{E} [ |X|^{2} ] < \infty $.
We isolate the main results for reference in the following \Cref{lem: general mean-variance optimization}, and put the related convex duality analysis in its proof.

\begin{lemma}\label{lem: general mean-variance optimization}
Suppose that $R \in \mathbb{L}^{2} ( \Omega )$, $\mathbb{D} \subset \mathbb{L}^{2} ( \Omega )$ is a non-trivial closed convex cone such that $\mathbb{D}_{1} := \{ Y \in \mathbb{D}: \mathbb{E} [ Y | \mathcal{G} ] = 1 \} \ne \varnothing $, 
and $\mathcal{G} \subset \mathcal{F}$ is a $\mathbb{P}$-complete sub-$\sigma $-field.
Then, 
\begin{align}\label{eq: MV upper bound :eq}
& \esssup_{ X \in \mathbb{X} ( R, \mathbb{D}, \mathcal{G} ) } 
  \bigg\{ \mathbb{E} [ X | \mathcal{G} ] - \frac{\theta }{2} \mathbb{E} \big[ ( X - \mathbb{E} [ X | \mathcal{G} ] )^{2} \big| \mathcal{G} \big] \bigg\} \\
\notag
& = \frac{1}{ 2 \theta } ( 1 - \mathbb{E} [ {R}^{2} | \mathcal{G} ] )
  + \esssup_{ c \in \mathbb{L}^{2}_{\mathcal{G}} ( \Omega ) } \essinf_{ y \in \mathbb{L}^{2}_{\mathcal{G}} ( \Omega; \mathbb{R}_{+} ) } 
    \bigg\{ ( 1 - y ) c - \frac{y}{\theta } + \frac{1}{ 2 \theta } \essinf_{ Y \in \mathbb{D}_{1} } \mathbb{E} [ ( y Y + \theta R )^{2} | \mathcal{G} ] \bigg\}, 
\end{align}
where $\mathbb{X} ( R, \mathbb{D}, \mathcal{G} ) := \{ X \in \mathbb{L}^{2} ( \Omega ): \mathbb{E} [ ( X - R ) Y | \mathcal{G} ] \le 0, \forall Y \in \mathbb{D} \}$.
In particular, the maximizer ${X}^{*}$ for the first line of \cref{eq: MV upper bound :eq} satisfies ${X}^{*} = \mathbb{E} [ {X}^{*} | \mathcal{G} ] + \frac{1}{\theta } ( 1 - {Y}^{*} )$ and $\mathbb{E} [ ( {X}^{*} - R ) {Y}^{*} | \mathcal{G} ] = 0$,
where ${Y}^{*}$ maximizes $\mathbb{E} [ ( Y + \theta R )^{2} | \mathcal{G} ]$ over all $Y \in \mathbb{D}_{1}$.
\end{lemma}

\begin{proof}
Let us proceed with the following minimization problem indexed by $c \in \mathbb{R}$:
\begin{equation}\label{eq: quadratic minimization problem :eq}
minimizing \quad \mathbb{E} \Big[ \frac{\theta }{2} \Big( X - c - \frac{1}{\theta } \Big)^{2} \Big| \mathcal{G} \Big], \quad s.t. \quad X \in \mathbb{X} ( R, \mathbb{D}, \mathcal{G} ).
\end{equation}
Notably, this minimization problem is defined on a (weakly) closed convex subset of reflexive Hilbert space $\mathbb{L}^{2} ( \Omega )$,
and $\mathbb{X} ( R, \mathbb{D}, \mathcal{G} ) \cap \{ X \in \mathbb{L}^{2} ( \Omega ): \mathbb{E} [ {X}^{2} ] \le t \}$ with sufficiently large $t$ weakly compact (according to Kakutani's Theorem). 
We refer to the infinite-dimensional version of the Weierstrass Theorems (e.g., \cite[Theorems 2.3.4 and 2.3.5, p. 56]{Bobylev-Emelyanov-Korovin-1999}), and conclude the existence of the solution.
For brevity, we introduce $u (x) := \frac{\theta }{2} ( x - c - \frac{1}{\theta } )^{2}$ with its Fenchel conjugate
\begin{equation*}
\tilde{u} (y) := \min_{ x \in \mathbb{R} } \{ u(x) + xy \} = \Big( c + \frac{1}{\theta } \Big) y - \frac{1}{ 2 \theta } {y}^{2}, 
\end{equation*}
whereby $x = I(y) := \tilde{u}' (y) = c + \frac{1}{\theta } ( 1 - y )$ realizes the minimum.
Conversely, $u(x) = \max_{ y \in \mathbb{R} } \{ \tilde{u} (y) - xy \}$, whereby $y = - u'(x)$ realizes the maximum.
To show the methodology and potential inspiration for fairly general problems, 
we state our argument with using the twice continuously differentiable convex-concave pair $( u, \tilde{u} )$ of Fenchel duality and $I = \tilde{u}'$ as the inverse function of $- u'$, rather than their specific expressions.
By straightforward calculation, for every $X \in \mathbb{L}^{2} ( \Omega )$ we have
\begin{equation*}
  \mathbb{E} [ u(X) | \mathcal{G} ]
= \esssup_{ Y \in \mathbb{L}^{2} ( \Omega ) } 
  \bigg\{ \essinf_{ Z \in \mathbb{L}^{2} ( \Omega ) } \mathbb{E} [ u(Z) + Y Z | \mathcal{G} ] - \mathbb{E} [ X Y | \mathcal{G} ] \bigg\}
= \esssup_{ Y \in \mathbb{L}^{2} ( \Omega ) } \mathbb{E} [ \tilde{u} (Y) - X Y | \mathcal{G} ],
\end{equation*}
which is also recognized as an application of Fenchel-Moreau theorem (see \cite[Theorem 13.37]{Bauschke-Combettes-2017}). 
Consequently, $\mathbb{E} [ u(X) | \mathcal{G} ] \ge \mathbb{E} [ \tilde{u} (Y) - X Y | \mathcal{G} ] \ge \mathbb{E} [ \tilde{u} (Y) - R Y | \mathcal{G} ]$ 
for any $( X,Y ) \in \mathbb{X} ( R, \mathbb{D}, \mathcal{G} ) \times \mathbb{D}$,
which results in
\begin{equation*}
\essinf_{ X \in \mathbb{X} ( R, \mathbb{D}, \mathcal{G} ) } \mathbb{E} [ u(X) | \mathcal{G} ]
\ge \esssup_{ Y \in \mathbb{D} } \mathbb{E} [ \tilde{u} (Y) - R Y | \mathcal{G} ].
\end{equation*}

Let $\mathbb{D}_{0} := \{ Y \in \mathbb{D}: \mathbb{E} [ ( I(Y) - R ) Y | \mathcal{G} ] = 0 \}$, so that $\mathbb{E} [ \tilde{u} (Y) - R Y | \mathcal{G} ] = \mathbb{E} [ u ( I (Y) ) | \mathcal{G} ]$ for all $Y \in \mathbb{D}_{0}$.
Notably, for any $Y \in \mathbb{D}$, there exists a unique ${y}_{Y} \in \mathbb{L}^{2}_{\mathcal{G}} ( \Omega; \mathbb{R}_{+} )$ such that $\mathbb{E} [ ( I ( {y}_{Y} Y ) - R ) Y | \mathcal{G} ] = 0$ due to the monotonicity of $I$, 
and hence ${y}_{Y} Y \in \mathbb{D}_{0}$.
This implies that $\mathbb{X} ( R, \mathbb{D}, \mathcal{G} ) = \mathbb{X} ( R, \mathbb{D}_{0}, \mathcal{G} )$.
For a given point ${Y}^{\dag } \in \mathbb{D}_{0}$, we have the following three equivalent conditions that are analogous to \cite[Section 9]{Karatzas-Lehoczky-Shreve-Xu-1991}:
\begin{itemize}
\item[(C1)] Feasibility of ${Y}^{\dag }$: $I ( {Y}^{\dag } ) \in \mathbb{X} ( R, \mathbb{D}, \mathcal{G} )$. 
\item[(C2)] Most-favorability of ${Y}^{\dag }$: ${Y}^{\dag }$ maximizes $\mathbb{E} [ u ( I(Y) ) | \mathcal{G} ]$ over all $Y \in \mathbb{D}_{0}$.
\item[(C3)] Dual optimality of ${Y}^{\dag }$: ${Y}^{\dag }$ maximizes $\mathbb{E} [ \tilde{u} (Y) - R Y | \mathcal{G} ]$ over all $Y \in \mathbb{D}$.
\end{itemize}
At first, we show (C1) $\Rightarrow $ (C2). 
According to the definition of Fenchel conjugate, 
\begin{equation*}
    \mathbb{E} [ u(X) | \mathcal{G} ] 
\ge \mathbb{E} [ \tilde{u} (Y) - X Y | \mathcal{G} ] 
\ge \mathbb{E} [ \tilde{u} (Y) - R Y | \mathcal{G} ] 
  = \mathbb{E} \big[ u \big( I (Y) \big) \big| \mathcal{G} \big], 
~ \forall ( X,Y ) \in \mathbb{X} ( R, \mathbb{D}, \mathcal{G} ) \times \mathbb{D}_{0}.
\end{equation*}
If (C1) holds, then (C2) immediately arises from
\begin{equation*}
    \mathbb{E} \big[ u \big( I ( {Y}^{\dag } ) \big) \big| \mathcal{G} \big] 
  = \essinf_{ X \in \mathbb{X} ( R, \mathbb{D}, \mathcal{G} ) } \mathbb{E} [ u(X) | \mathcal{G} ]
\ge \mathbb{E} \big[ u \big( I (Y) \big) \big| \mathcal{G} \big], 
\quad \forall Y \in \mathbb{D}_{0}.
\end{equation*}
Next, we show (C2) $\Rightarrow $ (C3). 
It follows from (C2) that
\begin{equation*}
    \mathbb{E} [ \tilde{u} ( {Y}^{\dag } ) - R {Y}^{\dag } | \mathcal{G} ] 
  = \mathbb{E} \big[ u \big( I ( {Y}^{\dag } ) \big) \big| \mathcal{G} \big] 
\ge \mathbb{E} \big[ u \big( I (Y) \big) \big| \mathcal{G} \big]
  = \mathbb{E} [ \tilde{u} (Y) - R Y | \mathcal{G} ], \quad \forall Y \in \mathbb{D}_{0}.
\end{equation*}
As $\mathbb{E} [ \tilde{u} ( y Y ) - y R Y | \mathcal{G} ]$ is concave in $y$ and $d \mathbb{E} [ \tilde{u} ( y Y ) - y R Y | \mathcal{G} ] / dy = \mathbb{E} [ I ( y Y ) Y - R Y | \mathcal{G} ]$, we have
\begin{equation*}
\mathbb{E} [ \tilde{u} ( {y}_{Y} Y ) - {y}_{Y} R Y | \mathcal{G} ] \ge \mathbb{E} [ \tilde{u} (Y) - R Y | \mathcal{G} ], \quad \forall Y \in \mathbb{D}.
\end{equation*}
As ${y}_{Y} Y \in \mathbb{D}_{0}$, (C3) immediately arises.
Finally, we show (C3) $\Rightarrow $ (C1) by the convex variation method.
In fact, (C3) implies that
\begin{align*}
0 & \ge \lim_{ \varepsilon \downarrow 0 }
        \frac{d}{d \varepsilon } \mathbb{E} \big[ \tilde{u} \big( {Y}^{\dag } + \varepsilon ( Y - {Y}^{\dag } ) \big) 
                                                - R \big( {Y}^{\dag } + \varepsilon ( Y - {Y}^{\dag } ) \big) \big| \mathcal{G} \big] \\
  &   = \mathbb{E} \big[ \big( I ( {Y}^{\dag } ) - R \big) ( Y - {Y}^{\dag } ) \big| \mathcal{G} \big]
      = \mathbb{E} \big[ \big( I ( {Y}^{\dag } ) - R \big) Y \big| \mathcal{G} \big], \quad \forall Y \in \mathbb{D},
\end{align*}
which immediately results in (C1).

If ${X}^{\dag } \in \mathbb{X} ( R, \mathbb{D}, \mathcal{G} )$ is the minimizer for \cref{eq: quadratic minimization problem :eq}, 
then applying the first-order derivative optimality condition with $\{ R + x ( {X}^{\dag } - R ) \}_{ x \ge 0 } \subset \mathbb{X} ( R, \mathbb{D}, \mathcal{G} )$ gives
\begin{equation*}
0 = \frac{d}{dx} \mathbb{E} \big[ u \big( R + x ( {X}^{\dag } - R ) \big) \big| \mathcal{G} \big] \big|_{x=1}
  = \mathbb{E} [ ( {X}^{\dag } - R ) u' ( {X}^{\dag } ) | \mathcal{G} ].
\end{equation*}
Applying the convex variation method yields
\begin{align*}
0 & \le \lim_{ \varepsilon \downarrow 0 } \frac{d}{d \varepsilon } \mathbb{E} \big[ u \big( {X}^{\dag } + \varepsilon ( X - {X}^{\dag } ) \big) \big| \mathcal{G} \big] \\
  &   = \mathbb{E} [ ( X - {X}^{\dag } ) u' ( {X}^{\dag } ) | \mathcal{G} ]
      = \mathbb{E} [ ( X - R ) u' ( {X}^{\dag } ) | \mathcal{G} ], \quad \forall X \in \mathbb{X} ( R, \mathbb{D}, \mathcal{G} ).
\end{align*}
Hence, $- u' ( {X}^{\dag } ) \in \mathbb{D}_{0}$.
Conversely, provided that $- u' ( {X}^{\dag } ) \in \mathbb{D}_{0}$, one obtains
\begin{equation*}
    \mathbb{E} [ u ( {X}^{\dag } ) | \mathcal{G} ]
  = \mathbb{E} \big[ \tilde{u} \big( - u' ( {X}^{\dag } ) \big) + R u' ( {X}^{\dag } ) \big| \mathcal{G} \big]
\le \esssup_{ Y \in \mathbb{D} } \mathbb{E} [ \tilde{u} (Y) - R Y | \mathcal{G} ] 
\le \mathbb{E} [ u(X) | \mathcal{G} ]
\end{equation*}
for any $X \in \mathbb{X} ( R, \mathbb{D}, \mathcal{G} )$,
where the inequalities both arise from
\begin{equation*}
    \mathbb{E} [ \tilde{u} (Y) - R Y | \mathcal{G} ] 
\le \mathbb{E} [ u(X) + (X-R) Y | \mathcal{G} ] 
\le \mathbb{E} [ u(X) | \mathcal{G} ], 
\quad \forall ( X,Y ) \in \mathbb{X} ( R, \mathbb{D}, \mathcal{G} ) \times \mathbb{D}.
\end{equation*}
In conclusion, for ${X}^{\dag } = I ( {Y}^{\dag } ) \in \mathbb{X} ( R, \mathbb{D}, \mathcal{G} )$ and ${Y}^{\dag } = - u' ( {X}^{\dag } ) \in \mathbb{D}_{0}$, we obtain
\begin{equation*}
  \mathbb{E} [ u ( {X}^{\dag } ) | \mathcal{G} ] 
= \essinf_{ X \in \mathbb{X} ( R, \mathbb{D}, \mathcal{G} ) } \mathbb{E} [ u(X) | \mathcal{G} ]
= \esssup_{ Y \in \mathbb{D} } \mathbb{E} [ \tilde{u} (Y) - R Y | \mathcal{G} ]
= \mathbb{E} [ \tilde{u} ( {Y}^{\dag } ) - R {Y}^{\dag } | \mathcal{G} ]
\end{equation*}
and $\mathbb{E} [ ( {X}^{\dag } - R ) {Y}^{\dag } | \mathcal{G} ] = 0$.

To prove \Cref{lem: general mean-variance optimization}, let us proceed with the following equalities:
\begin{align*}
&   \esssup_{ X \in \mathbb{X} ( R, \mathbb{D}, \mathcal{G} ) } 
    \bigg\{ \mathbb{E} [ X | \mathcal{G} ] - \frac{\theta }{2} \mathbb{E} \big[ ( X - \mathbb{E} [ X | \mathcal{G} ] )^{2} \big| \mathcal{G} \big] \bigg\} \\
& = \esssup_{ X \in \mathbb{X} ( R, \mathbb{D}, \mathcal{G} ) } 
    \bigg\{ \mathbb{E} [ X | \mathcal{G} ] - \frac{\theta }{2} \essinf_{ c \in \mathbb{L}^{2}_{\mathcal{G}} ( \Omega ) } \mathbb{E} [ ( X - c )^{2} | \mathcal{G} ] \bigg\} \\
& = \esssup_{ c \in \mathbb{L}^{2}_{\mathcal{G}} ( \Omega ) } \esssup_{ X \in \mathbb{X} ( R, \mathbb{D}, \mathcal{G} ) } 
    \mathbb{E} \bigg[ X - \frac{\theta }{2} ( X - c )^{2} \bigg| \mathcal{G} \bigg] \\
& = \frac{1}{ 2 \theta } 
  + \esssup_{ c \in \mathbb{L}^{2}_{\mathcal{G}} ( \Omega ) }
    \bigg\{ c - \essinf_{ X \in \mathbb{X} ( R, \mathbb{D}, \mathcal{G} ) } 
                \mathbb{E} \bigg[ \frac{\theta }{2} \Big( X - c - \frac{1}{\theta } \Big)^{2} \bigg| \mathcal{G} \bigg] \bigg\}.
\end{align*}
This problem reduction is indeed equivalent to the embedding technique in \cite{Li-Ng-2000,Zhou-Li-2000}; see also \cite{Pedersen-Peskir-2017}.
Denote the maximizer and ($X$-dependent) minimizer in the above second line by ${X}^{*}$ and ${c}^{*} (X)$, respectively; and denote the maximizer and ($c$-dependent) minimizer in the above third line by ${c}^{*}$ and ${X}^{*} (c)$, respectively.
According to the max-max problem given in the above third line with the joint strict concavity of $X - \frac{\theta }{2} ( X - c )^{2}$ w.r.t. $( X,c )$, 
one can conclude that the maximum point is unique, and hence $( {c}^{*}, {X}^{*} ( {c}^{*} ) ) = ( {c}^{*} ( {X}^{*} ), {X}^{*} )$.
This implies that ${c}^{*} = {c}^{*} ( {X}^{*} ) = \mathbb{E} [ {X}^{*} | \mathcal{G} ] = \mathbb{E} [ {X}^{*} ( {c}^{*} ) | \mathcal{G} ]$.

From the previous duality analysis, it follows that
\begin{align*}
&   \essinf_{ X \in \mathbb{X} ( R, \mathbb{D}, \mathcal{G} ) } 
    \mathbb{E} \Big[ \frac{\theta }{2} \Big( X - c - \frac{1}{\theta } \Big)^{2} \Big| \mathcal{G} \Big] \\
& = \esssup_{ Y \in \mathbb{D} } 
    \mathbb{E} \Big[ Y \Big( c + \frac{1}{\theta } - R \Big) - \frac{1}{ 2 \theta } {Y}^{2} \Big| \mathcal{G} \Big] \\
& = \esssup_{ ( y,Y ) \in \mathbb{L}^{2}_{\mathcal{G}} ( \Omega; \mathbb{R}_{+} ) \times \mathbb{D}_{1} } 
    \mathbb{E} \Big[ y Y \Big( c + \frac{1}{\theta } - R \Big) - \frac{1}{ 2 \theta } {y}^{2} {Y}^{2} \Big| \mathcal{G} \Big] \\
& = \frac{1}{ 2 \theta } \mathbb{E} [ {R}^{2} | \mathcal{G} ]
  + \esssup_{ y \in \mathbb{L}^{2}_{\mathcal{G}} ( \Omega; \mathbb{R}_{+} ) } 
    \bigg\{ y \Big( c + \frac{1}{\theta } \Big) - \frac{1}{ 2 \theta } \essinf_{ Y \in \mathbb{D}_{1} } \mathbb{E} [ ( y Y + \theta R )^{2} | \mathcal{G} ] \bigg\}.
\end{align*}
Hence, \cref{eq: MV upper bound :eq} holds.
Moreover, ${X}^{*} (c) = c + \frac{1}{\theta } - \frac{1}{\theta } {y}^{*} (c) {Y}^{*} ( {y}^{*} (c) )$ and 
\begin{equation*}
\mathbb{E} [ ( {X}^{*} (c) - R ) {y}^{*} (c) {Y}^{*} ( {y}^{*} (c) ) | \mathcal{G} ] = 0,
\end{equation*} 
where ${y}^{*} (c)$ maximizes
\begin{equation*}
y \Big( c + \frac{1}{\theta } \Big) - \frac{1}{ 2 \theta } \essinf_{ Y \in \mathbb{D}_{1} } \mathbb{E} [ ( y Y + \theta R )^{2} | \mathcal{G} ]
\end{equation*}
over all $\mathbb{P}$-a.s. non-negative $y \in \mathbb{L}^{2}_{\mathcal{G}} ( \Omega )$ and ${Y}^{*} (y)$ minimizes $\mathbb{E} [ ( y Y + \theta R )^{2} | \mathcal{G} ]$ over all $Y \in \mathbb{D}_{1}$.
For the maximum point ${c}^{*}$, applying the envelope theorem and the first-order derivative optimality condition, one obtains
\begin{equation*}
0 = \frac{d}{dc} \essinf_{ y \in \mathbb{L}^{2}_{\mathcal{G}} ( \Omega; \mathbb{R}_{+} ) } 
                 \Big\{ ( 1 - y ) c - \frac{y}{\theta } + \frac{1}{ 2 \theta } \essinf_{ Y \in \mathbb{D}_{1} } \mathbb{E} [ ( y Y + \theta R )^{2} | \mathcal{G} ] \Big\} \Big|_{ c = {c}^{*} }
  = 1 - {y}^{*} ( {c}^{*} ).
\end{equation*}
Therefore, ${X}^{*} = {X}^{*} ( {c}^{*} ) = \mathbb{E} [ {X}^{*} | \mathcal{G} ] + \frac{1}{\theta } ( 1 - {Y}^{*} (1) )$ and $\mathbb{E} [ ( {X}^{*} - R ) {Y}^{*} (1) | \mathcal{G} ] = 0$. 
This completes the proof.
\end{proof}

\section{Proof of lemmas, theorems and propositions}
\label{app: Proof}

\subsection{Proof of \texorpdfstring{\Cref{lem: identity and truncation}}{Lemma 2.8}}
\label{pf-lem: identity and truncation}

We start by combining the properties of ${U}_{\theta }$ mentioned in \Cref{subsec: MMV preference revisited}, from which one obtains
\begin{equation*}
\partial {U}_{\theta } (f) = \{ 1 - \theta ( f - \mathbb{E} [f] ) \} = \argmin_{ Y \in \mathbb{L}^{2} ( \Omega ) } \{ \mathbb{E} [ Y f ] - {U}_{\theta }^{*} (Y) \}.
\end{equation*}
Given the strict concavity of ${U}_{\theta }^{*}$, the minimizer of the functional $\mathbb{E} [ Y f ] - {U}_{\theta }^{*} (Y)$ is unique.
Consequently, if and only if $1 - \theta ( f - \mathbb{E} [f] ) \in \mathbb{L}^{2}_{ \zeta + } ( \Omega )$, 
it realizes the minimum on the right-hand side of \cref{eq: definition of SMMV :eq}, which is equivalent to the equality ${V}_{ \theta, \zeta } (f) = {U}_{\theta } (f)$.
This proves the first assertion.
Then, due to 
\begin{equation*}
  \lambda - \mathbb{E} \Big[ \Big( f + \frac{\zeta }{\theta } \Big) \wedge \lambda \Big]
= \mathbb{E} \Big[ \Big( \lambda - f - \frac{\zeta }{\theta } \Big)_{+} \Big]
= \int_{ - \infty }^{\lambda } \mathbb{P} \Big( f + \frac{\zeta }{\theta } \le s \Big) ds
\end{equation*}
for any $\lambda \in \mathbb{R}$, the second assertion arises from
\begin{equation*}
\frac{ 1 - \zeta }{\theta } \ge \int_{ - \infty }^{\lambda } \mathbb{P} \Big( f + \frac{\zeta }{\theta } \le s \Big) ds - \frac{ \zeta - \mathbb{E} [ \zeta ] }{\theta }
                            \ge f \wedge \Big( \lambda - \frac{\zeta }{\theta } \Big) - \mathbb{E} \Big[ f \wedge \Big( \lambda - \frac{\zeta }{\theta } \Big) \Big], \quad \forall \lambda \le {\lambda }_{ f, \theta, \zeta },
\end{equation*}
where the first inequality is given by \cref{eq: lambda-equation :eq}, and
\begin{equation*}
\frac{ 1 - \mathbb{E} [ \zeta ] }{\theta } = {\lambda }_{ f, \theta, \zeta } - \mathbb{E} \Big[ \Big( f + \frac{\zeta }{\theta } \Big) \wedge {\lambda }_{ f, \theta, \zeta } \Big] 
                                         \ge {\lambda }_{ f, \theta, \zeta } - \mathbb{E} \Big[ f + \frac{\zeta }{\theta } \Big].
\end{equation*}
Then the ``if'' direction of the third assertion is also proved.
As for the ``only if'' direction of the third assertion, since $f + \frac{\zeta }{\theta } \le \mathbb{E} [f] + \frac{1}{\theta }$, $\mathbb{P}$-a.s., we have
\begin{equation*}
    \frac{ 1 - \mathbb{E} [ \zeta ] }{\theta } 
\ge \esssup \Big\{ f + \frac{\zeta }{\theta } \Big\} - \int_{ - \infty }^{ \esssup \{ f + \frac{\zeta }{\theta } \} } t d \mathbb{P} \Big( f + \frac{\zeta }{\theta } \le t \Big)
  = \int_{ - \infty }^{ \esssup \{ f + \frac{\zeta }{\theta } \} } \mathbb{P} \Big( f + \frac{\zeta }{\theta } \le s \Big) ds.
\end{equation*}
This inequality combined with \cref{eq: lambda-equation :eq} implies that $\esssup \{ f + \frac{\zeta }{\theta } \} \le {\lambda }_{ f, \theta, \zeta }$, which leads to $f + \frac{\zeta }{\theta } \le {\lambda }_{ f, \theta, \zeta }$, $\mathbb{P}$-a.s.
Then the ``only if'' direction of the last assertion also follows.
Finally, we assume that $f \in \mathbb{L}^{2} ( \Omega )$ and $f \wedge ( \lambda - \frac{\zeta }{\theta } ) \in \mathcal{G}_{ \theta, \zeta }$ for some $\lambda > {\lambda }_{ f, \theta, \zeta }$.
Notably, ${\lambda }_{ f \wedge ( \lambda - \frac{\zeta }{\theta } ), \theta, \zeta } \le {\lambda }_{ f, \theta, \zeta }$, with equality holding for all $\lambda \ge {\lambda }_{ f, \theta, \zeta }$, 
since
\begin{align*}
    \frac{ 1 - \mathbb{E} [ \zeta ] }{\theta }
&   = \int_{ - \infty }^{ {\lambda }_{ f, \theta, \zeta } } \mathbb{P} \Big( f + \frac{\zeta }{\theta } \le s \Big) ds
    = \mathbb{E} \bigg[ \int_{\mathbb{R}} {1}_{\{ f + \frac{\zeta }{\theta } \le s \le {\lambda }_{ f, \theta, \zeta } \}} ds \bigg] \\
& \le \mathbb{E} \bigg[ \int_{\mathbb{R}} {1}_{\{ ( f + \frac{\zeta }{\theta } ) \wedge \lambda \le s \le {\lambda }_{ f, \theta, \zeta } \}} ds \bigg]
    = \int_{ - \infty }^{ {\lambda }_{ f, \theta, \zeta } } \mathbb{P} \bigg( f \wedge \Big( \lambda - \frac{\zeta }{\theta } \Big) + \frac{\zeta }{\theta } \le s \bigg) ds,
\end{align*}
with equality holding for $\lambda \ge {\lambda }_{ f, \theta, \zeta }$.
Intuitively speaking, when $\lambda \ge {\lambda }_{ f, \theta, \zeta }$, 
$f \wedge ( \lambda - \frac{\zeta }{\theta } ) + \frac{\zeta }{\theta } $ and $f + \frac{\zeta }{\theta } $ have the same distribution on $( - \infty, {\lambda }_{ f, \theta, \zeta } ]$,
the solution of \cref{eq: lambda-equation :eq} remains unchanged even if $f$ is replaced by $f \wedge ( \lambda - \frac{\zeta }{\theta } )$ therein.
So $\esssup \{ ( f + \frac{\zeta }{\theta } ) \wedge \lambda \} \le {\lambda }_{ f \wedge ( \lambda - \frac{\zeta }{\theta } ), \theta, \zeta } = {\lambda }_{ f, \theta, \zeta } < \lambda $ follows from the third assertion.
As a consequence, $\esssup \{ f + \frac{\zeta }{\theta } \} \le {\lambda }_{ f, \theta, \zeta }$, which proves the ``if'' direction of the last assertion.

\subsection{Proof of \texorpdfstring{\Cref{thm: Gateaux differentiability and equivalent expressions of SMMV}}{Theorem 2.9}}
\label{pf-thm: Gateaux differentiability and equivalent expressions of SMMV}

Since ${V}_{ \theta, \zeta }$ (as a point-wise infimum of some affine functions) is concave and upper semi-continuous, the Fenchel conjugate of ${V}_{ \theta, \zeta }$ is also concave and upper semi-continuous.
It follows from ${V}_{ \theta, \zeta } \ge {U}_{\theta }$ that
\begin{equation*}
{V}_{ \theta, \zeta }^{*} (Y) := \inf_{ f \in \mathbb{L}^{2} ( \Omega ) } \{ \mathbb{E} [ Y f ] - {V}_{ \theta, \zeta } (f) \} 
                             \le \inf_{ f \in \mathbb{L}^{2} ( \Omega ) } \{ \mathbb{E} [ Y f ] - {U}_{\theta } (f) \} = {U}_{\theta }^{*} (Y), 
\quad \forall Y \in \mathbb{L}^{2} ( \Omega ),
\end{equation*} 
which implies that ${V}_{ \theta, \zeta }^{*} (Y) = - \infty $ for $\mathbb{E} [Y] \ne 1$.
For $Y \in \mathbb{L}^{2}_{ \zeta + } ( \Omega )$,
the converse inequality ${V}_{ \theta, \zeta }^{*} (Y) \ge {U}_{\theta }^{*} (Y)$ follows from
$\mathbb{E} [ Y f ] - {V}_{ \theta, \zeta } (f) \ge {U}_{\theta }^{*} (Y)$ given by \cref{eq: definition of SMMV :eq} for any $f \in \mathbb{L}^{2} ( \Omega )$.
In terms of $Y \notin \mathbb{L}^{2}_{ \zeta + } ( \Omega )$, there exists $\varepsilon > 0$ such that $\mathbb{P} ( Y \le \zeta - \varepsilon ) > 0$,
and then for $f = c {1}_{\{ Y \le \zeta - \varepsilon \}} \ge 0 \in \mathcal{G}_{ \zeta, \theta  }$ with $c \in \mathbb{R}_{+}$ we have
\begin{equation*}
\left\{ \begin{aligned}
        & \mathbb{E} [ ( Y - \zeta ) f ] \le - c \varepsilon \mathbb{P} ( Y \le \zeta - \varepsilon ) \downarrow - \infty, \quad as \quad c \uparrow \infty; \\
        & {V}_{ \theta, \zeta } (f) - \mathbb{E} [ \zeta f ] \ge {V}_{ \theta, \zeta } (0) = {U}_{\theta } (0) = 0,
        \end{aligned} \right.
\end{equation*}
which implies that ${V}_{ \theta, \zeta }^{*} (Y) = - \infty $. 
Summing up, one obtains
\begin{equation*}
{V}_{ \theta, \zeta }^{*} (Y) 
= \left\{ \begin{aligned} 
  & - \frac{1}{2 \theta } ( \mathbb{E} [ {Y}^{2} ] - 1 ), && if \ Y \in \mathbb{L}^{2}_{ \zeta + } ( \Omega ) ~and~ \mathbb{E} [Y] = 1; \\
  & - \infty, && otherwise. 
  \end{aligned} \right.
\end{equation*}
Applying Fenchel-Moreau theorem (cf. \cite[Theorem 13.37]{Bauschke-Combettes-2017}) to ${V}_{ \theta, \zeta }$ yields
\begin{equation*}
{V}_{ \theta, \zeta } (f) = \inf_{ Y \in \mathbb{L}^{2} ( \Omega ) } \{ \mathbb{E} [ Y f ] - {V}_{ \theta, \zeta }^{*} (Y) \}, \quad \forall f \in \mathbb{L}^{2} ( \Omega ).
\end{equation*}
Notably, the above statement also arises from artificially assigning ${U}_{\theta }^{*} (Y) = - \infty $ for any $Y \notin \mathbb{L}^{2}_{ \zeta + } ( \Omega )$ in \cref{eq: definition of SMMV :eq};
however, the Fenchel conjugation of $( {V}_{ \theta, \zeta }, {V}_{ \theta, \zeta }^{*} )$ gives some additional information about superdifferential $\partial {V}_{ \theta, \zeta }$ as follows.
In fact, $Y \in \partial {V}_{ \theta, \zeta } (f)$ is equivalent to the following statements:
\begin{itemize}
\item ${V}_{ \theta, \zeta } (g) \le {V}_{ \theta, \zeta } (f) + \mathbb{E} [ Y ( g - f ) ]$ for any $g \in \mathbb{L}^{2} ( \Omega )$;
\item $\mathbb{E} [ Y f ] - {V}_{ \theta, \zeta } (f) \le \inf_{ g \in \mathbb{L}^{2} ( \Omega ) } \{ \mathbb{E} [ Y g ] - {V}_{ \theta, \zeta } (g) \} \equiv {V}_{ \theta, \zeta }^{*} (Y)$;
\item ${V}_{ \theta, \zeta } (f) = \mathbb{E} [ Y f ] - {V}_{ \theta, \zeta }^{*} (Y)$.
\end{itemize}
Consequently,
\begin{equation}\label{eq: superdifferential of SMMV :eq}
\partial {V}_{ \theta, \zeta } (f) 
= \argmin_{ Y \in \mathbb{L}^{2} ( \Omega ) } \{ \mathbb{E} [ Y f ] - {V}_{ \theta, \zeta }^{*} (Y) \}
= \argmin_{ Y \in \mathbb{L}^{2}_{ \zeta + } ( \Omega ), \mathbb{E}^{P} [Y] = 1 } \Big\{ \mathbb{E} [ Y f ] + \frac{1}{2 \theta } ( \mathbb{E} [ {Y}^{2} ] - 1 ) \Big\}
\end{equation}
is at most a singleton.
If the minimizer exists, then ${V}_{ \theta, \zeta }$ is G\^ateaux differentiable according to \cite[Proposition 1.8, p. 5]{Phelps-1993},
and $d {V}_{ \theta, \zeta } (f)$ realizes the minimum on the right-hand side of \cref{eq: definition of SMMV :eq}, which aligns with the result obtained by heuristically applying envelope theorem.

Now we solve the minimization problem in \cref{eq: superdifferential of SMMV :eq} by Lagrange duality method (noting that Lagrange multiplier method is also applicable). 
Let us proceed with the following min-max inequality:
\begin{align*}
      \inf_{ Y \in \mathbb{L}^{2}_{ \zeta + } ( \Omega ), \mathbb{E} [Y] = 1 } \mathbb{E} \Big[ f Y + \frac{1}{ 2 \theta } {Y}^{2} \Big] 
&   = \inf_{ Y \in \mathbb{L}^{2}_{ \zeta + } ( \Omega ) } \sup_{ \lambda \in \mathbb{R} } \Big\{ \mathbb{E} \Big[ f Y + \frac{1}{2 \theta } {Y}^{2} - \lambda Y \Big] + \lambda \Big\} \\
& \ge \sup_{ \lambda \in \mathbb{R} } \bigg\{ \inf_{ Y \in \mathbb{L}^{2}_{ \zeta + } ( \Omega ) } \mathbb{E} \Big[ f Y + \frac{1}{2 \theta } {Y}^{2} - \lambda Y \Big] + \lambda \bigg\} \\
&   = \sup_{ \lambda \in \mathbb{R} } \bigg\{ \mathbb{E} \bigg[ \inf_{ Y \ge \zeta } \Big\{ \frac{1}{2 \theta } {Y}^{2} + ( f - \lambda ) Y \Big\} \bigg] + \lambda \bigg\},
\end{align*}
for which the unique minimizer $Y = \zeta + \theta ( \lambda - f - \frac{\zeta }{\theta } )_{+} \in \mathbb{L}^{2}_{ \zeta + } ( \Omega )$ can be easily seen from
\begin{equation*} 
  \inf_{ Y \ge \zeta } \Big\{ \frac{1}{2 \theta } {Y}^{2} + ( f - \lambda ) Y \Big\}
= \inf_{ Y - \zeta \ge 0 } \Big\{ \frac{1}{2 \theta } ( Y - \zeta )^{2} - \Big( \lambda - f - \frac{\zeta }{\theta } \Big) ( Y - \zeta ) \Big\} + {\zeta }^{2} - \zeta ( \lambda - f ).
\end{equation*}
Consequently,
\begin{align*}
& \argmax_{ \lambda \in \mathbb{R} } \bigg\{ \inf_{ Y \in \mathbb{L}^{2}_{ \zeta + } ( \Omega ) } \mathbb{E} \Big[ f Y + \frac{1}{2 \theta } {Y}^{2} - \lambda Y \Big] + \lambda \bigg\} \\
& = \argmax_{ \lambda \in \mathbb{R} } \bigg\{ \lambda ( 1 - \mathbb{E} [ \zeta ] )
                                             - \frac{\theta }{2 } \mathbb{E} \Big[ \Big( \lambda - f - \frac{\zeta }{\theta } \Big)^{2} {1}_{\{ f + \frac{\zeta }{\theta } \le \lambda \}} \Big] \bigg\} \\
& = \argmax_{ \lambda \in \mathbb{R} } \bigg\{ \lambda ( 1 - \mathbb{E} [ \zeta ] )
                                             - \frac{\theta }{2} \int_{ - \infty }^{\lambda } ( \lambda - s )^{2} d \mathbb{P} \Big( f + \frac{\zeta }{\theta } \le s \Big) \bigg\} \\
& = \argmax_{ \lambda \in \mathbb{R} } \bigg\{ \lambda ( 1 - \mathbb{E} [ \zeta ] )
                                             - \theta \int_{ - \infty }^{\lambda } ( \lambda - s ) \mathbb{P} \Big( f + \frac{\zeta }{\theta } \le s \Big) ds \bigg\},
\end{align*}
where the last equality results from
\begin{align*}
  \frac{1}{2} \int_{ - \infty }^{\lambda } ( \lambda - t )^{2} d \mathbb{P} \Big( f + \frac{\zeta }{\theta } \le t \Big)
& = \int_{ - \infty }^{\lambda } d \mathbb{P} \Big( f + \frac{\zeta }{\theta } \le t \Big) \int_{t}^{\lambda } ( \lambda - s ) ds \\
& = \int_{ - \infty }^{\lambda } ( \lambda - s ) ds \int_{ - \infty }^{s} d \mathbb{P} \Big( f + \frac{\zeta }{\theta } \le t \Big). 
\end{align*}
By the first-order derivative conditions $0 = 1 - \mathbb{E} [ \zeta ] - \theta \int_{ - \infty }^{\lambda } \mathbb{P} ( f + \frac{\zeta }{\theta } \le s ) ds$,
of which the right-hand side is decreasing in $\lambda $ and strictly decreasing on $( \essinf \{ f + \frac{\zeta }{\theta } \}, + \infty )$,
we arrive at the unique maximizer ${\lambda }_{ f, \theta, \zeta }$ given by \cref{eq: lambda-equation :eq}.
Therefore, by
\begin{equation*}
\left\{ \begin{aligned}
&   \inf_{ Y \in \mathbb{L}^{2}_{ \zeta + } ( \Omega ), \mathbb{E} [Y] = 1 } \mathbb{E} \Big[ f Y + \frac{1}{ 2 \theta } {Y}^{2} \Big] 
\ge \mathbb{E} \Big[ f Y + \frac{1}{ 2 \theta } {Y}^{2} \Big] \Big|_{ Y = \zeta + \theta ( {\lambda }_{ f, \theta, \zeta } - f - \frac{\zeta }{\theta } )_{+} }, \\
& \mathbb{E} \Big[ \zeta + \theta \Big( {\lambda }_{ f, \theta, \zeta } - f - \frac{\zeta }{\theta } \Big)_{+} \Big] 
= \mathbb{E} [ \zeta ] + \theta \bigg( {\lambda }_{ f, \theta, \zeta } - \mathbb{E} \Big[ \Big( f + \frac{\zeta }{\theta } \Big) \wedge {\lambda }_{ f, \theta, \zeta } \Big] \bigg) = 1,
\end{aligned} \right.
\end{equation*}
we conclude that $Y = \zeta + \theta ( {\lambda }_{ f, \theta, \zeta } - f - \frac{\zeta }{\theta } )_{+}$ is the unique minimizer for \cref{eq: superdifferential of SMMV :eq},
and hence $d {V}_{ \theta, \zeta } (f) = \zeta + \theta ( {\lambda }_{ f, \theta, \zeta } - f - \frac{\zeta }{\theta } )_{+}$.
Furthermore, one can obtain
\begin{align*}
  {V}_{ \theta, \zeta } (f) 
= \max_{ \lambda \in \mathbb{R} } \bigg\{ \lambda ( 1 - \mathbb{E} [ \zeta ] )
                                        - \theta \int_{ - \infty }^{\lambda } ( \lambda - s ) \mathbb{P} \Big( f + \frac{\zeta }{\theta } \le s \Big) ds \bigg\}
+ \mathbb{E} [ f \zeta ] + \frac{1}{ 2 \theta } \mathbb{E} [ {\zeta }^{2} ] - \frac{1}{ 2 \theta }, 
\end{align*}
and then we immediately arrive at \cref{eq: direct calculation form of SMMV :eq}.
Then, the second line of our desired expression for ${V}_{ \theta, \zeta }$ follows, as
\begin{align*}
& \theta \int_{ - \infty }^{ {\lambda }_{ f, \theta, \zeta } } s \mathbb{P} \Big( f + \frac{\zeta }{\theta } \le s \Big) ds \\
& = \theta \int_{ - \infty }^{ {\lambda }_{ f, \theta, \zeta } } s ds \int_{ - \infty }^{s} d \mathbb{P} \Big( f + \frac{\zeta }{\theta } \le t \Big) \\
& = \frac{\theta }{2} \int_{ - \infty }^{ {\lambda }_{ f, \theta, \zeta } } ( {\lambda }_{ f, \theta, \zeta }^{2} - {s}^{2} ) d \mathbb{P} \Big( f + \frac{\zeta }{\theta } \le s \Big) \\
& = \frac{\theta }{2} {\lambda }_{ f, \theta, \zeta }^{2} 
  - \frac{\theta }{2} {\lambda }_{ f, \theta, \zeta }^{2} \mathbb{P} \Big( f + \frac{\zeta }{\theta } < {\lambda }_{ f, \theta, \zeta } \Big)
  - \frac{\theta }{2} \int_{ - \infty }^{ {\lambda }_{ f, \theta, \zeta } } {s}^{2} d \mathbb{P} \Big( f + \frac{\zeta }{\theta } \le s \Big) \\
& = \frac{\theta }{2} \bigg( \frac{ 1 - \mathbb{E} [ \zeta ] }{\theta } + \mathbb{E} \Big[ \Big( f + \frac{\zeta }{\theta } \Big) \wedge {\lambda }_{ f, \theta, \zeta } \Big] \bigg)^{2}
  - \frac{\theta }{2} \mathbb{E} \bigg[ \Big| \Big( f + \frac{\zeta }{\theta } \Big) \wedge {\lambda }_{ f, \theta, \zeta } \Big|^{2} \bigg] \\
& = \frac{\theta }{2} \Big( \frac{ 1 - \mathbb{E} [ \zeta ] }{\theta } \Big)^{2} 
  - \mathbb{E} [ \zeta ] \Big( {\lambda }_{ f, \theta, \zeta } - \frac{ 1 - \mathbb{E} [ \zeta ] }{\theta } \Big) 
  + {U}_{\theta } \bigg( \Big( f + \frac{\zeta }{\theta } \Big) \wedge {\lambda }_{ f, \theta, \zeta } \bigg) \\
& = \frac{ 1 - | \mathbb{E} [ \zeta ] |^{2} }{ 2 \theta } - {\lambda }_{ f, \theta, \zeta } \mathbb{E} [ \zeta ] 
  + {U}_{\theta } \bigg( \Big( f + \frac{\zeta }{\theta } \Big) \wedge {\lambda }_{ f, \theta, \zeta } \bigg),
\end{align*}
where the fourth and fifth equalities both arise from \cref{eq: lambda-equation :eq}.
Alternatively, proceeding with the above fourth equality, we substitute 
\begin{align*}
  \theta \int_{ - \infty }^{ {\lambda }_{ f, \theta, \zeta } } s \mathbb{P} \Big( f + \frac{\zeta }{\theta } \le s \Big) ds
& = \frac{\theta }{2} \bigg( \frac{1}{\theta } + \mathbb{E} \Big[ f \wedge \Big( {\lambda }_{ f, \theta, \zeta } - \frac{\zeta }{\theta } \Big) \Big] \bigg)^{2}
  - \frac{\theta }{2} \mathbb{E} \bigg[ \Big| f \wedge \Big( {\lambda }_{ f, \theta, \zeta } - \frac{\zeta }{\theta } \Big) + \frac{\zeta }{\theta } \Big|^{2} \bigg] \\
& = \frac{ 1 - \mathbb{E} [ {\zeta }^{2} ] }{ 2 \theta } 
  + {U}_{\theta } \bigg( f \wedge \Big( {\lambda }_{ f, \theta, \zeta } - \frac{\zeta }{\theta } \Big) \bigg)
  - \mathbb{E} \bigg[ \bigg( f \wedge \Big( {\lambda }_{ f, \theta, \zeta } - \frac{\zeta }{\theta } \Big) \bigg) \zeta \bigg]
\end{align*}
into the right-hand side of \cref{eq: direct calculation form of SMMV :eq} and then obtain \cref{eq: modificed MV form of SMMV :eq}, which leads to the last line of our desired result.

\subsection{Proof of \texorpdfstring{\Cref{thm: minimality and maximality of SMMV}}{Proposition 2.10}}
\label{pf-thm: minimality and maximality of SMMV}

For the minimality, we assume by contradiction that $V(f) < {V}_{ \theta, \zeta } (f)$, $V |_{ \mathcal{G}_{ \theta, \zeta } } = {U}_{\theta } |_{ \mathcal{G}_{ \theta, \zeta } }$ 
and ${\xi }_{f} \in \partial V (f) \cap \mathbb{L}^{2}_{\zeta + } ( \Omega ) \ne \varnothing$ for some $f \notin \mathcal{G}_{ \theta, \zeta }$. 
Then, it follows from the expression \cref{eq: modificed MV form of SMMV :eq} that
\begin{align*}
    {U}_{\theta } \bigg( f \wedge \Big( {\lambda }_{ f, \theta, \zeta } - \frac{\zeta }{\theta } \Big) \bigg)
& = {V}_{ \theta, \zeta } (f) - \mathbb{E} \bigg[ \bigg( f - f \wedge \Big( {\lambda }_{ f, \theta, \zeta } - \frac{\zeta }{\theta } \Big) \bigg) \zeta \bigg] \\
& > V(f) - \mathbb{E} \bigg[ \bigg( f - f \wedge \Big( {\lambda }_{ f, \theta, \zeta } - \frac{\zeta }{\theta } \Big) \bigg) \zeta \bigg] \\
& \ge V(f) - \mathbb{E} \bigg[ \bigg( f - f \wedge \Big( {\lambda }_{ f, \theta, \zeta } - \frac{\zeta }{\theta } \Big) \bigg) {\xi }_{f} \bigg] \\
& \ge V \bigg( f \wedge \Big( {\lambda }_{ f, \theta, \zeta } - \frac{\zeta }{\theta } \Big) \bigg),
\end{align*}
which immediately contradicts $V |_{ \mathcal{G}_{ \theta, \zeta } } = {U}_{\theta } |_{ \mathcal{G}_{ \theta, \zeta } }$. 

In terms of the maximality, by \cref{eq: modificed MV form of SMMV :eq} and $f \wedge ( {\lambda }_{ f, \theta, \zeta } - \frac{\zeta }{\theta } ) \in \mathcal{G}_{ \theta, \zeta }$, we have
\begin{equation*}
{V}_{ \theta, \zeta } (f) \le \sup_{ g \in \mathcal{G}_{ \theta, \zeta }, g \le f } \{ {U}_{\theta } (g) + \mathbb{E} [ ( f - g ) \zeta ] \}.
\end{equation*}
On the other hand, since $d {U}_{\theta } (f) \in \partial {U}_{\theta } (f)$, $d {U}_{\theta } |_{ \mathcal{G}_{ \theta, \zeta } } = d {V}_{ \theta, \zeta } |_{ \mathcal{G}_{ \theta, \zeta } }$ and 
${\lambda }_{ f \wedge ( {\lambda }_{ f, \theta, \zeta } - \frac{\zeta }{\theta } ), \theta, \zeta } = {\lambda }_{ f, \theta, \zeta }$, we have
\begin{align*}
{U}_{\theta } (g) & \le {U}_{\theta } \bigg( f \wedge \Big( {\lambda }_{ f, \theta, \zeta } - \frac{\zeta }{\theta } \Big) \bigg) 
                      + \mathbb{E} \bigg[ \bigg( g - f \wedge \Big( {\lambda }_{ f, \theta, \zeta } - \frac{\zeta }{\theta } \Big) \bigg) 
                                          \bigg( \zeta + \theta \Big( {\lambda }_{ f, \theta, \zeta } - f - \frac{\zeta }{\theta } \Big)_{+} \bigg) \bigg] \\
                  & \le {U}_{\theta } \bigg( f \wedge \Big( {\lambda }_{ f, \theta, \zeta } - \frac{\zeta }{\theta } \Big) \bigg) 
                      + \mathbb{E} \bigg[ \bigg( g - f \wedge \Big( {\lambda }_{ f, \theta, \zeta } - \frac{\zeta }{\theta } \Big) \bigg) \zeta \bigg] 
\end{align*}
for any $g \in \mathcal{G}_{ \theta, \zeta }$ with $g \le f$.
Abstracting $\mathbb{E} [ ( g - f ) \zeta ]$ from both sides of the above inequality, 
then applying \cref{eq: modificed MV form of SMMV :eq} to the right-hand side and taking supremum on the left-hand side over all $g \in \mathcal{G}_{ \theta, \zeta }$ with $g \le f$,
we obtain
\begin{equation*}
\sup_{ g \in \mathcal{G}_{ \theta, \zeta }, g \le f } \{ {U}_{\theta } (g) + \mathbb{E} [ ( f - g ) \zeta ] \} \le {V}_{ \theta, \zeta } (f).
\end{equation*}
Summing up, we have proved the first desired equality. 
Moreover, the previous proof is also valid, if we extend the domain for $g$ to $\mathbb{L}^{2} ( \Omega )$ with $g \le f$. 
Therefore, the second desired equality holds.

\subsection{Proof of \texorpdfstring{\Cref{thm: MUA and SSD for SMMV}}{Proposition 2.12}}
\label{pf-thm: MUA and SSD for SMMV}

For the first property about more uncertainty averse,
the ``only if'' direction is obvious, due to ${V}_{ \theta, \zeta } (c) \equiv c$ and the monotonicity of ${V}_{ \theta, \zeta } (f)$ in $\theta $.
To check the ``if'' direction, we assume by contradiction that $\theta < \hat{\theta }$.
On the one hand, we choose $f \in \mathbb{L}^{2} ( \Omega )$ arbitrarily and $c = {V}_{ \theta, \zeta } (f)$,
so that ${V}_{ \theta, \zeta } (f) = {V}_{ \theta, \zeta } ( {V}_{ \theta, \zeta } (f) )$ leads to ${V}_{ \hat{\theta }, \zeta } (f) \ge {V}_{ \hat{\theta }, \zeta } ( {V}_{ \theta, \zeta } (f) ) = {V}_{ \theta, \zeta } (f)$.
On the other hand, one can find $f \in \mathbb{L}^{2} ( \Omega )$ such that $\Var [ f \wedge ( {\lambda }_{ f, \hat{\theta }, \zeta } - \frac{\zeta }{ \hat{\theta } } ) ] > 0$,
since $\{ f \le {\lambda }_{ f, \hat{\theta }, \zeta } - \frac{\zeta }{ \hat{\theta } } \}$ is not a $\mathbb{P}$-null subset due to the second assertion in \Cref{lem: identity and truncation},
$\mathbb{P} ( f \le {\lambda }_{ f, \hat{\theta }, \zeta } - \frac{\zeta }{ \hat{\theta } } ) = 1$ results in $f \in \mathcal{G}_{ \hat{\theta }, \zeta }$ due to the third assertion in \Cref{lem: identity and truncation},
and a non-trivial $\mathcal{F}$ implies that $\mathcal{G}_{ \hat{\theta }, \zeta } \subsetneq \mathbb{L}^{2} ( \Omega )$.
As a consequence, a contradiction arises from
\begin{align*}
  {V}_{ \theta, \zeta } (f) 
& \ge {U}_{\theta } \bigg( f \wedge \Big( {\lambda }_{ f, \hat{\theta }, \zeta } - \frac{\zeta }{\theta } \Big) \bigg)
    + \mathbb{E} \bigg[ \bigg( f - f \wedge \Big( {\lambda }_{ f, \hat{\theta }, \zeta } - \frac{\zeta }{\theta } \Big) \bigg) \zeta \bigg] \\
&   > {U}_{ \hat{\theta } } \bigg( f \wedge \Big( {\lambda }_{ f, \hat{\theta }, \zeta } - \frac{\zeta }{\theta } \Big) \bigg)
    + \mathbb{E} \bigg[ \bigg( f - f \wedge \Big( {\lambda }_{ f, \hat{\theta }, \zeta } - \frac{\zeta }{\theta } \Big) \bigg) \zeta \bigg] \\
&   = {V}_{ \hat{\theta }, \zeta } (f),
\end{align*}
where the first inequality and the last equality follow from the second assertion in \Cref{thm: minimality and maximality of SMMV} and \cref{eq: modificed MV form of SMMV :eq}, respectively.

Now we prove the second property about second-order stochastic dominance.
Firstly, ${\lambda }_{ g, \theta, \zeta } \le {\lambda }_{ f, \theta, \zeta }$ is a straightforward result of 
\begin{equation*}
  \int_{ - \infty }^{ {\lambda }_{ g, \theta, \zeta } } \mathbb{P} \Big( g + \frac{\zeta }{\theta } \le s \Big) ds
= \int_{ - \infty }^{ {\lambda }_{ f, \theta, \zeta } } \mathbb{P} \Big( f + \frac{\zeta }{\theta } \le s \Big) ds
\le \int_{ - \infty }^{ {\lambda }_{ f, \theta, \zeta } } \mathbb{P} \Big( g + \frac{\zeta }{\theta } \le s \Big) ds,
\end{equation*}
where the first equality follows from \cref{eq: lambda-equation :eq} and the last inequality follows from the second-order stochastic dominance condition, i.e.
\begin{equation*}
\int_{ - \infty }^{t} \mathbb{P} \Big( f + \frac{\zeta }{\theta } \le s \Big) ds \le \int_{ - \infty }^{t} \mathbb{P} \Big( g + \frac{\zeta }{\theta } \le s \Big) ds, \quad \forall t \in \mathbb{R}.
\end{equation*} 
It also follows from \cref{eq: lambda-equation :eq} that
\begin{align*}
&   \theta \int_{ - \infty }^{ {\lambda }_{ f, \theta, \zeta } } s \mathbb{P} \Big( f + \frac{\zeta }{\theta } \le s \Big) ds \\
& = \theta {\lambda }_{ f, \theta, \zeta } \int_{ - \infty }^{ {\lambda }_{ f, \theta, \zeta } } \mathbb{P} \Big( f + \frac{\zeta }{\theta } \le s \Big) ds
  - \theta \int_{ - \infty }^{ {\lambda }_{ f, \theta, \zeta } } \mathbb{P} \Big( f + \frac{\zeta }{\theta } \le s \Big) ds \int_{s}^{ {\lambda }_{ f, \theta, \zeta } } dt \\
& = {\lambda }_{ f, \theta, \zeta } ( 1 - \mathbb{E} [ \zeta ] ) 
  - \theta \int_{ - \infty }^{ {\lambda }_{ f, \theta, \zeta } } dt \int_{ - \infty }^{t} \mathbb{P} \Big( f + \frac{\zeta }{\theta } \le s \Big) ds.
\end{align*}
Hence, given the expression \cref{eq: direct calculation form of SMMV :eq} and the second-order stochastic dominance condition, one obtains 
\begin{align*}
& {V}_{ \theta, \zeta } (f) - {V}_{ \theta, \zeta } (g) \\
& = \mathbb{E} [ ( f - g ) \zeta ]
  + ( {\lambda }_{ f, \theta, \zeta } - {\lambda }_{ g, \theta, \zeta } ) ( 1 - \mathbb{E} [ \zeta ] ) 
  - \theta \int_{ {\lambda }_{ g, \theta, \zeta } }^{ {\lambda }_{ f, \theta, \zeta } } dt \int_{ - \infty }^{t} \mathbb{P} \Big( f + \frac{\zeta }{\theta } \le s \Big) ds \\
& \quad + \theta \int_{ - \infty }^{ {\lambda }_{ g, \theta, \zeta } } \bigg( \int_{ - \infty }^{t} \mathbb{P} \Big( g + \frac{\zeta }{\theta } \le s \Big) ds 
                                                                            - \int_{ - \infty }^{t} \mathbb{P} \Big( f + \frac{\zeta }{\theta } \le s \Big) ds \bigg) dt \\
& \ge \mathbb{E} [ ( f - g ) \zeta ]
    + \theta \int_{ {\lambda }_{ g, \theta, \zeta } }^{ {\lambda }_{ f, \theta, \zeta } } dt \int_{t}^{ {\lambda }_{ f, \theta, \zeta } } \mathbb{P} \Big( f + \frac{\zeta }{\theta } \le s \Big) ds \\
& \ge \mathbb{E} [ ( f - g ) \zeta ].
\end{align*}

\subsection{Proof of \texorpdfstring{\Cref{thm: optimality condition for SP}}{Theorem 3.1}}
\label{pf-thm: optimality condition for SP}

${V}_{ \theta, \zeta } ( {X}_{ \vec{\alpha } } )$ is jointly concave in $\vec{\alpha }$, since ${V}_{ \theta, \zeta }$ is concave and ${X}_{ \vec{\alpha } }$ is affine in $\vec{\alpha }$;
and it follows from \Cref{thm: Gateaux differentiability and equivalent expressions of SMMV} that
\begin{equation*} 
{V}_{ \theta, \zeta } ( \vec{X}_{ \vec{\alpha } } + \varepsilon \vec{h} ) - {V}_{ \theta, \zeta } ( \vec{X}_{ \vec{\alpha } } )
= \bigg\langle \vec{h}, \mathbb{E} \bigg[ ( \vec{R} - r \vec{1} ) \bigg( \zeta + \theta \Big( {\lambda }_{ \vec{\alpha } } - {X}_{ \vec{\alpha } } - \frac{\zeta }{\theta } \Big)_{+} \bigg) \bigg] \bigg\rangle \varepsilon + o ( \varepsilon )
\end{equation*}
for any $( \varepsilon, \vec{h} ) \in \mathbb{R}_{+} \times \mathbb{R}^{n}$.
Therefore, the following gradient condition,
\begin{align}\label{eq: gradient condition for SP :eq}
\vec{0} 
& = \mathbb{E} \bigg[ ( \vec{R} - r \vec{1} ) \bigg( \zeta + \theta \Big( {\lambda }_{ \vec{\alpha }^{*} } - {X}_{ \vec{\alpha }^{*} } - \frac{\zeta }{\theta } \Big)_{+} \bigg) \bigg] \\
\notag
& \equiv \mathbb{E} [ ( \vec{R} - r \vec{1} ) \zeta ]
       + \theta {\lambda }_{ \vec{\alpha }^{*} } \mathbb{P} \Big( {X}_{ \vec{\alpha }^{*} } + \frac{\zeta }{\theta } \le {\lambda }_{ \vec{\alpha }^{*} } \Big)
                                                 \mathbb{E} \Big[ \vec{R} - r \vec{1} \Big| {X}_{ \vec{\alpha }^{*} } + \frac{\zeta }{\theta } \le {\lambda }_{ \vec{\alpha }^{*} } \Big] \\
\notag
&\quad - \theta \mathbb{P} \Big( {X}_{ \vec{\alpha }^{*} } + \frac{\zeta }{\theta } \le {\lambda }_{ \vec{\alpha }^{*} } \Big)
                \mathbb{E} \Big[ ( \vec{R} - r \vec{1} ) \Big( {X}_{ \vec{\alpha }^{*} } + \frac{\zeta }{\theta } \Big) \Big| {X}_{ \vec{\alpha }^{*} } + \frac{\zeta }{\theta } \le {\lambda }_{ \vec{\alpha }^{*} } \Big],
\end{align}
is necessary and sufficient to realize the maximum in \cref{eq: SP :eq}.
Notably, the second equation in \cref{eq: optimality condition for SP :eq} is a re-expression of \cref{eq: lambda-equation :eq}, 
so the rest of this proof is to show the equivalence between \cref{eq: gradient condition for SP :eq} and the first equation in \cref{eq: optimality condition for SP :eq}.
Applying the iterated conditioning to \cref{eq: lambda-equation :eq} with $f = \vec{X}_{ \vec{\alpha }^{*} }$ yields
\begin{equation*}
  {\lambda }_{ \vec{\alpha }^{*} } \mathbb{P} \Big( {X}_{ \vec{\alpha }^{*} } + \frac{\zeta }{\theta } \le {\lambda }_{ \vec{\alpha }^{*} } \Big)
= \frac{\kappa }{\theta } + \mathbb{P} \Big( {X}_{ \vec{\alpha }^{*} } + \frac{\zeta }{\theta } \le {\lambda }_{ \vec{\alpha }^{*} } \Big)
                            \mathbb{E} \Big[ {X}_{ \vec{\alpha }^{*} } + \frac{\zeta }{\theta } \Big| {X}_{ \vec{\alpha }^{*} } + \frac{\zeta }{\theta } \le {\lambda }_{ \vec{\alpha }^{*} } \Big].
\end{equation*}
Substituting this result back into \cref{eq: gradient condition for SP :eq} and rearranging the terms, one obtains
\begin{align*}
& \mathbb{E} [ ( \vec{R} - r \vec{1} ) \zeta ]
+ \kappa \mathbb{E} \Big[ \vec{R} - r \vec{1} \Big| {X}_{ \vec{\alpha }^{*} } + \frac{\zeta }{\theta } \le {\lambda }_{ \vec{\alpha }^{*} } \Big] \\
& = \theta \mathbb{P} \Big( {X}_{ \vec{\alpha }^{*} } + \frac{\zeta }{\theta } \le {\lambda }_{ \vec{\alpha }^{*} } \Big)
           \mathbb{E} \Big[ ( \vec{R} - r \vec{1} ) \Big( {X}_{ \vec{\alpha }^{*} } + \frac{\zeta }{\theta } \Big) \Big| {X}_{ \vec{\alpha }^{*} } + \frac{\zeta }{\theta } \le {\lambda }_{ \vec{\alpha }^{*} } \Big] \\
& \quad - \theta \mathbb{P} \Big( {X}_{ \vec{\alpha }^{*} } + \frac{\zeta }{\theta } \le {\lambda }_{ \vec{\alpha }^{*} } \Big)
                 \mathbb{E} \Big[ {X}_{ \vec{\alpha }^{*} } + \frac{\zeta }{\theta } \Big| {X}_{ \vec{\alpha }^{*} } + \frac{\zeta }{\theta } \le {\lambda }_{ \vec{\alpha }^{*} } \Big]
                 \mathbb{E} \Big[ \vec{R} - r \vec{1} \Big| {X}_{ \vec{\alpha }^{*} } + \frac{\zeta }{\theta } \le {\lambda }_{ \vec{\alpha }^{*} } \Big] \\ 
& = \theta \mathbb{P} \Big( {X}_{ \vec{\alpha }^{*} } + \frac{\zeta }{\theta } \le {\lambda }_{ \vec{\alpha }^{*} } \Big)
           \Cov \Big[ \vec{R} - r \vec{1}, {X}_{ \vec{\alpha }^{*} } + \frac{\zeta }{\theta } \Big| {X}_{ \vec{\alpha }^{*} } + \frac{\zeta }{\theta } \le {\lambda }_{ \vec{\alpha }^{*} } \Big] \\
& = \mathbb{P} \Big( {X}_{ \vec{\alpha }^{*} } + \frac{\zeta }{\theta } \le {\lambda }_{ \vec{\alpha }^{*} } \Big)
    \Big( \Var \Big[ \vec{R} \Big| {X}_{ \vec{\alpha }^{*} } + \frac{\zeta }{\theta } \le {\lambda }_{ \vec{\alpha }^{*} } \Big] \theta \vec{\alpha }^{*}
        + \Cov \Big[ \vec{R}, \zeta \Big| {X}_{ \vec{\alpha }^{*} } + \frac{\zeta }{\theta } \le {\lambda }_{ \vec{\alpha }^{*} } \Big] \Big).
\end{align*}
Hence, \cref{eq: gradient condition for SP :eq} gives the first equation in \cref{eq: optimality condition for SP :eq}, and vice versa.

\subsection{Proof of \texorpdfstring{\Cref{lem: saddle point in SP}}{Proposition 3.2}}
\label{pf-lem: saddle point in SP}

The ``if'' direction is obvious according to the definition of saddle point.
To show the ``only if'' direction, we consider the value of \cref{eq: max-min problem for SP :eq} denoted by $\mathcal{M}$.
By comparing \cref{eq: SP :eq} with \cref{eq: modificed MV form of SMMV :eq}, 
in conjunction with the identities ${X}_{ \vec{\alpha }^{*} } \wedge ( {\lambda }_{ \vec{\alpha }^{*} } - \frac{\zeta }{\theta } ) = {\lambda }_{ \vec{\alpha }^{*} } - \frac{\zeta }{\theta } - \frac{\kappa }{\theta } {Z}_{*}$ 
and $( \theta {X}_{ \vec{\alpha }^{*} } + \zeta + \kappa {Z}_{*} - \theta {\lambda }_{ \vec{\alpha }^{*} } ) {Z}_{*} = 0$, one obtains
\begin{align*}
  \mathcal{M} 
& = \theta {U}_{\theta } \bigg( {X}_{ \vec{\alpha }^{*} } \wedge \Big( {\lambda }_{ \vec{\alpha }^{*} } - \frac{\zeta }{\theta } \Big) \bigg)
  + \theta \mathbb{E} \bigg[ \bigg( {X}_{ \vec{\alpha }^{*} } - {X}_{ \vec{\alpha }^{*} } \wedge \Big( {\lambda }_{ \vec{\alpha }^{*} } - \frac{\zeta }{\theta } \Big) \bigg) \zeta \bigg] 
  - \theta r \\
& = \mathbb{E} [ \theta {\lambda }_{ \vec{\alpha }^{*} } - \zeta - \kappa {Z}_{*} ]
  - \frac{1}{2} \Var [ \kappa {Z}_{*} + \zeta ]
  + \mathbb{E} [ ( \theta {X}_{ \vec{\alpha }^{*} } + \zeta + \kappa {Z}_{*} - \theta {\lambda }_{ \vec{\alpha }^{*} } ) \zeta ]
  - \theta r \\
& = \kappa \theta {\lambda }_{ \vec{\alpha }^{*} }
  - \frac{1}{2} - \frac{1}{2} \mathbb{E} [ ( \kappa {Z}_{*} + \zeta )^{2} ]
  + \mathbb{E} [ ( \theta {X}_{ \vec{\alpha }^{*} } + \zeta + \kappa {Z}_{*} ) \zeta ] 
  - \theta r \\
& = \frac{1}{2} \mathbb{E} [ ( \kappa {Z}_{*} + \zeta )^{2} ] - \frac{1}{2} + \theta \langle \vec{\alpha }^{*}, \mathbb{E} [ ( \vec{R} - r \vec{1} ) ( \kappa {Z}_{*} + \zeta ) ] \rangle.
\end{align*}
According to the gradient condition \cref{eq: gradient condition for SP :eq}, i.e. $\mathbb{E} [ ( \vec{R} - r \vec{1} ) ( \kappa {Z}_{*} + \zeta ) ] = 0$, we further attain $2 \mathcal{M} = \mathbb{E} [ ( \kappa {Z}_{*} + \zeta )^{2} ] - 1$.
Consequently, the max-min inequality
\begin{align*}
2 \mathcal{M} & \le \min_{ Z \in \mathbb{L}^{2}_{+} ( \Omega ), \mathbb{E} [Z] = 1 } \max_{ \vec{\alpha } \in \mathbb{R}^{n} } 
                    \{ 2 \langle \theta \vec{\alpha }, \mathbb{E} [ ( \vec{R} - r \vec{1} ) ( \kappa Z + \zeta ) ] \rangle + \mathbb{E} [ ( \kappa Z + \zeta )^{2} ] - 1 \} \\
           & \equiv \min_{ Z \in \mathbb{L}^{2}_{+} ( \Omega ), ~ \mathbb{E} [Z] = 1, ~ \mathbb{E} [ ( \vec{R} - r \vec{1} ) ( \kappa Z + \zeta ) ] = \vec{0} } \{ \mathbb{E} [ ( \kappa Z + \zeta )^{2} ] - 1 \}
\end{align*}
for \cref{eq: max-min problem for SP :eq} holds with equality. 
Therefore, $( \vec{\alpha }^{*}, {Z}_{*} )$ is the saddle point for \cref{eq: max-min problem for SP :eq}.

\subsection{Proof of \texorpdfstring{\Cref{lem: express SMMV portfolio}}{Proposition 3.3}}
\label{pf-lem: express SMMV portfolio}

In order to remove the Lagrange multiplier $\mu \in \mathbb{R}$, we centralize the first equation in \cref{eq: KKT condition :eq} and attain
\begin{equation}\label{eq: centralized KKT equation :eq}
0 = \kappa {Z}_{*} + \zeta - 1 + \langle \theta \vec{\alpha }^{*}, \vec{R} - \mathbb{E} [R] \rangle - ( \beta - \mathbb{E} [ \beta ] ), \quad \mathbb{P}-a.s.
\end{equation}
Substituting \cref{eq: centralized KKT equation :eq} into the first line of the gradient condition \cref{eq: gradient condition for SP :eq} yields 
\begin{equation*}
  \mathbb{E} [ ( \vec{R} - r \vec{1} ) ]
= \theta \mathbb{E} \big[ ( \vec{R} - r \vec{1} ) ( \vec{R} - \mathbb{E} [ \vec{R} ] )^{\top } \big] \vec{\alpha }^{*}
- \mathbb{E} \big[ ( \vec{R} - r \vec{1} ) ( \beta - \mathbb{E} [ \beta ] ) \big]
= \theta \Var [ \vec{R} ] \vec{\alpha }^{*} - \Cov [ \vec{R}, \beta ],
\end{equation*}
which results in \cref{eq: express SMMV portfolio by MV portfolio :eq}.
Multiplying by $\beta $ and then taking expectation on both sides of \cref{eq: centralized KKT equation :eq}, in conjunction with applying the third line in \cref{eq: KKT condition :eq}, 
one obtains $0 = \mathbb{E} [ \beta ( \zeta - 1 ) ] + \langle \theta \vec{\alpha }^{*}, \Cov [ \vec{R}, \beta ] \rangle - \Var [ \beta ]$, i.e. \cref{eq: express SMMV portfilio by Lagrange multiplier :eq}.

Furthermore, as the first equation of \cref{eq: KKT condition :eq} gives $\beta = \kappa {Z}_{*} + ( \theta {X}_{ \vec{\alpha }^{*} } + \zeta ) - r \theta ( 1 - \langle \vec{\alpha }^{*}, \vec{1} \rangle ) - \mu $,
we have $\Var [ \beta ] = \Var [ \kappa {Z}_{*} + ( \theta {X}_{ \vec{\alpha }^{*} } + \zeta ) ] = {\theta }^{2} \Var [ {\lambda }_{ \vec{\alpha }^{*} } \vee ( {X}_{ \vec{\alpha }^{*} } + \frac{\zeta }{\theta } ) ]$.
If ${X}_{ \vec{\alpha }^{*} } \in \mathcal{G}_{ \theta, \zeta }$, then $\Var [ \beta ] = {\theta }^{2} \Var [ {\lambda }_{ \vec{\alpha }^{*} } ] = 0$, according to the third assertion in \Cref{lem: identity and truncation}.
Conversely, we suppose that $\Var [ \beta ] = 0$, and assume by contradiction that $\mathbb{P} ( {X}_{ \vec{\alpha }^{*} } + \frac{\zeta }{\theta } > {\lambda }_{ \vec{\alpha }^{*} } ) \in ( 0,1 )$
(noting that ${\lambda }_{ \vec{\alpha }^{*} } > \essinf ( {X}_{ \vec{\alpha }^{*} } + \zeta / \theta )$, see the second assertion in \Cref{lem: identity and truncation}).
By iterated conditioning formula, one obtains the decomposition
\begin{align*}
  \frac{1}{ {\theta }^{2} } \Var [ \beta ]
= \Var \Big[ {\lambda }_{ \vec{\alpha }^{*} } \vee \Big( {X}_{ \vec{\alpha }^{*} } + \frac{\zeta }{\theta } \Big) \Big]
& = \mathbb{E} \bigg[ \Var \Big[ {\lambda }_{ \vec{\alpha }^{*} } \vee \Big( {X}_{ \vec{\alpha }^{*} } + \frac{\zeta }{\theta } \Big) 
                                 \Big| {1}_{\{ {X}_{ \vec{\alpha }^{*} } + \frac{\zeta }{\theta } > {\lambda }_{ \vec{\alpha }^{*} } \}} \Big] \bigg] \\
& \quad + \Var \bigg[ \mathbb{E} \Big[ {\lambda }_{ \vec{\alpha }^{*} } \vee \Big( {X}_{ \vec{\alpha }^{*} } + \frac{\zeta }{\theta } \Big) 
                                       \Big| {1}_{\{ {X}_{ \vec{\alpha }^{*} } + \frac{\zeta }{\theta } > {\lambda }_{ \vec{\alpha }^{*} } \}} \Big] \bigg],
\end{align*}
and hence each term in this decomposition vanishes.
Since the first term vanishes, there exists some constant $c > {\lambda }_{ \vec{\alpha }^{*} }$ such that 
${X}_{ \vec{\alpha }^{*} } + \frac{\zeta }{\theta } = c$, $\mathbb{P}$-a.e. on $\{ {X}_{ \vec{\alpha }^{*} } + \frac{\zeta }{\theta } > {\lambda }_{ \vec{\alpha }^{*} } \}$.
Then, for the second term, 
\begin{equation*}
  \mathbb{E} \Big[ {\lambda }_{ \vec{\alpha }^{*} } \vee \Big( {X}_{ \vec{\alpha }^{*} } + \frac{\zeta }{\theta } \Big) \Big| {1}_{\{ {X}_{ \vec{\alpha }^{*} } + \frac{\zeta }{\theta } > {\lambda }_{ \vec{\alpha }^{*} } \}} \Big]
= {\lambda }_{ \vec{\alpha }^{*} } {1}_{\{ {X}_{ \vec{\alpha }^{*} } + \frac{\zeta }{\theta } \le {\lambda }_{ \vec{\alpha }^{*} } \}}
+ c {1}_{\{ {X}_{ \vec{\alpha }^{*} } + \frac{\zeta }{\theta } > {\lambda }_{ \vec{\alpha }^{*} } \}}
\end{equation*}
has a positive variance, which leads to a contradiction.
Therefore, ${X}_{ \vec{\alpha }^{*} } + \frac{\zeta }{\theta } \le {\lambda }_{ \vec{\alpha }^{*} }$, namely, ${X}_{ \vec{\alpha }^{*} } \in \mathcal{G}_{ \theta, \zeta }$, follows from $\Var [ \beta ] = 0$.
So we are done.

\subsection{Proof of \texorpdfstring{\Cref{thm: Lagrange multiplier as solution}}{Theorem 3.5}}
\label{pf-thm: Lagrange multiplier as solution}

By the variable replacement $Y = \kappa Z + \zeta $, we conclude that ${Z}_{*} = \frac{1}{\kappa } ( {Y}_{*} - \zeta )$ is the unique solution for
\begin{equation*}
\text{minimizing} \quad \mathbb{E} [ ( \kappa Z + \zeta )^{2} ] \quad
\text{subject to} \quad Z \in \mathbb{L}^{2}_{+} ( \Omega ), ~ \mathbb{E} [Z] = 1, ~ \mathbb{E} [ ( \vec{R} - r \vec{1} ) ( \kappa Z + \zeta ) ] = \vec{0}.
\end{equation*}
To characterize the minimizer ${Z}_{*}$ by Lagrange multiplier method with the Lagrangian
\begin{equation*}
\frac{1}{\kappa } \mathcal{L}_{2} ( \kappa Z + \zeta, \beta, \mu, \vec{\alpha } ) 
= \frac{1}{ 2 \kappa } \mathbb{E} [ ( \kappa Z + \zeta )^{2} ] - \mathbb{E} [ \beta Z ] + \theta ( r - \mu ) ( \mathbb{E} [Z] - 1 ) + \frac{\theta }{\kappa } \langle \vec{\alpha }, \mathbb{E} [ ( \vec{R} - r \vec{1} ) ( \kappa Z + \zeta ) ] \rangle,
\end{equation*}
one obtains the following KKT condition that is indeed equivalent to \cref{eq: modified KKT condition in SP :eq}:
\begin{equation*}
\left\{ \begin{aligned}
& 0 = \kappa {Z}_{*} + \zeta - \beta - \theta \mu + \theta {X}_{ \vec{\alpha } }, \quad \mathbb{P}-a.s.; \\
& \mathbb{E} [ {Z}_{*} ] = 1, \quad \mathbb{E} [ ( \vec{R} - r \vec{1} ) ( \kappa {Z}_{*} + \zeta ) ] = \vec{0}; \\
& \beta \ge 0, ~ {Z}_{*} \ge 0, ~ \beta {Z}_{*} = 0, \quad \mathbb{P}-a.s.
\end{aligned} \right.
\end{equation*}
Multiplying by $\kappa {Z}_{*}$ on both sides of the first equation in the aforementioned KKT condition, together with the application of the third line, 
yields $0 = \kappa {Z}_{*} ( \kappa {Z}_{*} - \theta \mu + \theta {X}_{ \vec{\alpha } } + \zeta )$.
For $\mathbb{P}$-a.e. $\omega \in \{ \theta \mu \le \theta {X}_{ \vec{\alpha } } + \zeta \}$, $\kappa {Z}_{*} ( \omega ) = 0$ is straightforward.
For $\mathbb{P}$-a.e. $\omega \in \{ \theta \mu > \theta {X}_{ \vec{\alpha } } + \zeta \}$, 
if $\kappa {Z}_{*} ( \omega ) = 0$, then the first equation in the aforementioned KKT condition gives $\beta ( \omega ) = \zeta ( \omega ) + \theta {X}_{ \vec{\alpha } } ( \omega ) - \theta \mu < 0$.
This contradicts $\beta \ge 0$, unless $\{ \theta \mu > \theta {X}_{ \vec{\alpha } } + \zeta \}$ is a $\mathbb{P}$-null set.
Summing up, we obtain $\kappa {Z}_{*} = ( \theta \mu - \theta {X}_{ \vec{\alpha } } - \zeta )_{+}$, and hence $\mu = {\lambda }_{ \vec{\alpha } }$ due to $\mathbb{E} [ {Z}_{*} ] = 1$ and the second assertion in \Cref{lem: identity and truncation}. 
Moreover, $\mathbb{E} [ ( \vec{R} - r \vec{1} ) ( \kappa {Z}_{*} + \zeta ) ] = \vec{0}$ implies that $\vec{\alpha }$ fulfills the gradient condition \cref{eq: gradient condition for SP :eq}.
Therefore, $\vec{\alpha }$ arising from \cref{eq: modified KKT condition in SP :eq} is the maximizer for \cref{eq: SP :eq}.

Finally, we suppose that $\Var [ \vec{R} | {Y}_{*} > \zeta ]$ is invertible. 
Notably, ${Y}_{*} ( \omega ) > \zeta ( \omega )$, namely ${Z}_{*} ( \omega ) > 0$, is equivalent to ${X}_{ \vec{\alpha } } ( \omega ) + \frac{1}{\theta } \zeta ( \omega ) < {\lambda }_{ \vec{\alpha } }$.
Therefore, by mirroring the rearrangement in \Cref{pf-thm: optimality condition for SP}, 
replacing each instance of ${X}_{ \vec{\alpha } } + \frac{\zeta }{\theta } \le {\lambda }_{ \vec{\alpha } }$ with ${X}_{ \vec{\alpha } } + \frac{\zeta }{\theta } < {\lambda }_{ \vec{\alpha } }$,
one obtains the desired expression for $\vec{\alpha }$.

\subsection{Proof of \texorpdfstring{\Cref{lem: MV optimal portfolio in monotonicity domain}}{Proposition 4.2}}
\label{pf-lem: MV optimal portfolio in monotonicity domain}

Applying It\^o's rule to ${X}^{\pi }_{t} {D}_{t}$, one obtains ${X}^{\pi }_{T} {D}_{T} = {x}_{0} + \int_{0}^{T} {D}_{t} ( {\pi }_{t} {\sigma }_{t} - {X}^{\pi }_{t} {\vartheta }_{t} ) d {W}_{t}$,
which coincides with the $( \mathbb{F}, \mathbb{P} )$-martingale representation of $\{ \mathbb{E}_{t} [ {X}^{\pi }_{T} {D}_{T} ] \}_{ t \in [ 0,T ] }$.
This implies that
\begin{equation*}
\mathbb{L}^{2}_{\mathbb{F}} ( 0,T; \mathbb{L}^{2} ( \Omega ) ) = \{ \pi: {X}^{\pi }_{T} \in \mathbb{L}^{2} ( \Omega ) \}, \quad and \quad
\mathbb{E} \Big[ \Big( {X}^{\pi }_{T} - \frac{ {x}_{0} }{ \mathbb{E} [ {D}_{T} ] } \Big) {D}_{T} \Big] = 0.
\end{equation*}
In other words, maximizing ${U}_{\theta } ( {X}^{\pi }_{T} )$ subject to $\pi \in \mathbb{L}^{2}_{\mathbb{F}} ( 0,T; \mathbb{L}^{2} ( \Omega ) )$ can be reduced to
maximizing ${U}_{\theta } ( {X}^{\pi }_{T} )$ subject to ${X}^{\pi }_{T} \in \mathbb{L}^{2} ( \Omega )$ with $\mathbb{E} [ ( {X}^{\pi }_{T} - R ) {D}_{T} ] = 0$, 
where $R = {x}_{0} / \mathbb{E} [ {D}_{T} ]$.
Applying \Cref{lem: general mean-variance optimization} to the aforementioned $R$ with $Y = {D}_{T}$, $\mathcal{G} = \mathcal{F}_{0}$ and $\mathbb{D} = \{ c Y \}_{c \ge 0}$, 
one can conclude that ${X}^{**}$ for maximizing ${U}_{\theta } (X)$ subject to $\mathbb{E} [ ( X - R ) Y ] \le 0$ satisfies
\begin{equation*}
{X}^{**} = \mathbb{E} [ {X}^{**} ] + \frac{1}{\theta } \Big( 1 - \frac{Y}{ \mathbb{E} [Y] } \Big), \quad \mathbb{E} [ ( {X}^{**} - R ) Y ] = 0.
\end{equation*}
Therefore, ${X}^{*}_{T} = \mathbb{E} [ {X}^{*}_{T} ] + \frac{1}{\theta } ( 1 - \frac{ {D}_{T} }{ \mathbb{E} [ {D}_{T} ] } )$.
The equivalence between ${X}^{*}_{T} \in \mathcal{G}_{ \theta, \zeta }$ and \Cref{ass: SMMV-MV comparison} follows, according to the expression of $\mathcal{G}_{ \theta, \zeta }$ in the first assertion in \Cref{lem: identity and truncation}.

\subsection{Proof of \texorpdfstring{\Cref{thm: identical solution of SMMV and MV}}{Theorem 4.3}}
\label{pf-thm: identical solution of SMMV and MV}

Consider an arbitrarily fixed $\pi \in \mathbb{L}^{2}_{\mathbb{F}} ( 0,T; \mathbb{L}^{2} ( \Omega ) )$. 
According to \Cref{lem: MV optimal portfolio in monotonicity domain}, if ${X}^{\pi }_{T} \in \mathcal{G}_{ \theta, \zeta }$, then
\begin{equation*}
    {V}_{ \theta, \zeta } ( {X}^{\pi }_{T} ) 
  = {U}_{\theta } ( {X}^{\pi }_{T} ) 
\le \max_{ \pi \in \mathbb{L}^{2}_{\mathbb{F}} ( 0,T; \mathbb{L}^{2} ( \Omega ) ) } {U}_{\theta } ( {X}^{\pi }_{T} ) 
  = {U}_{\theta } ( {X}^{*}_{T} ) 
  = {V}_{ \theta, \zeta } ( {X}^{*}_{T} ).
\end{equation*}
Otherwise, we write $\mathbb{X}_{T} := {X}^{\pi }_{T} \wedge ( {\lambda }_{ {X}^{\pi }_{T}, \theta, \zeta } - \frac{\zeta }{\theta } )$ for short. 
$\mathbb{X}_{T} \in \mathcal{G}_{ \theta, \zeta }$ follows from the second assertion in \Cref{lem: identity and truncation}.
Owing to \cref{eq: modificed MV form of SMMV :eq}, one obtains the decomposition ${V}_{ \theta, \zeta } ( {X}^{\pi }_{T} ) = {U}_{\theta } ( \mathbb{X}_{T} ) + \mathbb{E} [ ( {X}^{\pi }_{T} - \mathbb{X}_{T} ) \zeta ]$.
On the one hand, the concavity of ${U}_{\theta }$ results in ${U}_{\theta } ( \mathbb{X}_{T} ) \le {U}_{\theta } ( {X}^{*}_{T} ) + \mathbb{E} [ ( \mathbb{X}_{T} - {X}^{*}_{T} ) \partial {U}_{\theta } ( {X}^{*}_{T} ) ]$,
where $\partial {U}_{\theta } ( {X}^{*}_{T} ) = 1 - \theta ( {X}^{*}_{T} - \mathbb{E} [ {X}^{*}_{T} ] ) = {D}_{T} / \mathbb{E} [ {D}_{T} ]$ can be found at the end of the previous \Cref{pf-lem: MV optimal portfolio in monotonicity domain}.
On the other hand, $\mathbb{E} [ ( {X}^{\pi }_{T} - \mathbb{X}_{T} ) \zeta ] \le \mathbb{E} [ ( {X}^{\pi }_{T} - \mathbb{X}_{T} ) {D}_{T} ] / \mathbb{E} [ {D}_{T} ]$ if \Cref{ass: SMMV-MV comparison} holds. 
Summing up, one obtains
\begin{equation*}
{V}_{ \theta, \zeta } ( {X}^{\pi }_{T} ) 
\le {U}_{\theta } ( {X}^{*}_{T} ) 
  + \frac{ \mathbb{E} [ ( {X}^{\pi }_{T} - {X}^{*}_{T} ) {D}_{T} ] }{ \mathbb{E} [ {D}_{T} ] }
  = {U}_{\theta } ( {X}^{*}_{T} ).
\end{equation*}
In view of the arbitrariness of $\pi $, the optimality of ${X}^{*}_{T}$ is straightforward.

\subsection{Proof of \texorpdfstring{\Cref{thm: verification theorem for relaxed SDGs}}{Theorem 4.4}}
\label{pf-thm: verification theorem for relaxed SDGs}

Given an arbitrarily fixed $( \pi, \gamma ) \in \mathbb{L}^{2}_{\mathbb{F}} ( 0,T; \mathbb{L}^{2} ( \Omega ) ) \times {\Gamma }^{t,z}$ and \cref{eq: HJBI-BSPDE for SDGs :eq},
applying the It\^o-Kunita-Ventzel formula (as a generalized version of It\^o's rule, see \cite[Theorem 1.5.3.2]{Jeanblanc-Yor-Chesney-2009}) to $\mathcal{V} ( s, {X}^{ t,x, \pi }_{s}, {Z}^{ t,z, \gamma }_{s} )$ yields
\begin{align*}
& d \mathcal{V} ( s, {X}^{ t,x, \pi }_{s}, {Z}^{ t,z, \gamma }_{s} ) \\
& = \big( \mathcal{D}_{1}^{ \pi, \gamma } \mathcal{V} ( s, {X}^{ t,x, \pi }_{s}, {Z}^{ t,z, \gamma }_{s} ) + \mathcal{D}_{2}^{ \pi, \gamma } \Phi ( s, {X}^{ t,x, \pi }_{s}, {Z}^{ t,z, \gamma }_{s} )
     - \mathbb{H} ( s, {X}^{ t,x, \pi }_{s}, {Z}^{ t,z, \gamma }_{s} ) \big) dt \\
& \quad + \big( \Phi ( s, {X}^{ t,x, \pi }_{s}, {Z}^{ t,z, \gamma }_{s} ) + \mathcal{D}_{2}^{ {\pi }_{s}, {\gamma }_{s} } \mathcal{V} ( s, {X}^{ t,x, \pi }_{s}, {Z}^{ t,z, \gamma }_{s} ) \big) d {W}_{s}.
\end{align*}
Integrating both sides of the above SDE from $t$ to $T$, and taking expectation conditioned on $\mathcal{F}_{t}$ under $\mathbb{P}$,
together with the second line of \cref{eq: saddle point condition :eq} and the terminal condition \cref{eq: terminal condition for SDGs :eq}, one obtains
\begin{equation*}
\mathcal{V} ( t,x,z ) 
\le \mathbb{E}_{t} [ \mathcal{V} ( T, {X}^{ t,x, {\pi }^{*} }_{T}, {Z}^{ t,z, \gamma }_{T} ) ] 
  = \mathbb{E}_{t} [ {J}^{ {\pi }^{*}, \gamma } ( T, {X}^{ t,x, {\pi }^{*} }_{T}, {Z}^{ t,z, \gamma }_{T} ) ] 
  = {J}^{ {\pi }^{*}, \gamma } ( t,x,z ), \quad \forall \gamma \in {\Gamma }^{t,z}.
\end{equation*}
In the same manner, one can show that $\mathcal{V} ( t,x,z ) \ge {J}^{ \pi, {\gamma }^{*} } ( t,x,z )$ for any $\pi \in \mathbb{L}^{2}_{\mathbb{F}} ( 0,T; \mathbb{L}^{2} ( \Omega ) )$.
Therefore, 
\begin{equation*}
{J}^{ \pi, {\gamma }^{*} } ( t,x,z ) \le \mathcal{V} ( t,x,z ) = {J}^{ {\pi }^{*}, {\gamma }^{*} } ( t,x,z ) \le {J}^{ {\pi }^{*}, \gamma } ( t,x,z ), 
\quad \forall ( \pi, \gamma ) \in \mathbb{L}^{2}_{\mathbb{F}} ( 0,T; \mathbb{L}^{2} ( \Omega ) ) \times {\Gamma }^{t,z}.
\end{equation*}
This implies that $( {\pi }^{*}, {\gamma }^{*} )$ is the desired saddle point, which leads to the max-min equality
\begin{equation*}
  {J}^{ {\pi }^{*}, {\gamma }^{*} } ( t,x,z )
= \esssup_{ \pi \in \mathbb{L}^{2}_{\mathbb{F}} ( 0,T; \mathbb{L}^{2} ( \Omega ) ) } \essinf_{ \gamma \in {\Gamma }^{t,z} } {J}^{ \pi, \gamma } ( t,x,z )
= \essinf_{ \gamma \in {\Gamma }^{t,z} } \esssup_{ \pi \in \mathbb{L}^{2}_{\mathbb{F}} ( 0,T; \mathbb{L}^{2} ( \Omega ) ) } {J}^{ \pi, \gamma } ( t,x,z ).
\end{equation*}
So, the proof is completed.

\subsection{Proof of \texorpdfstring{\Cref{thm: optimal MV solution}}{Theorem 4.6}}
\label{pf-thm: optimal MV solution}

Owing to \Cref{thm: verification theorem for relaxed SDGs} and \Cref{rem: upper Isaacs BSPDE}, 
one can proceed with $\hat{\pi }, \hat{\gamma }: [ 0,T ] \times \Omega \times \mathbb{R} \times [ 0, + \infty ) \to \mathbb{R}$ fulfilling the optimality condition \cref{eq: optimality condition of HJBI-BSPDE for SDGs :eq}.
Notably, \cref{eq: optimality condition of HJBI-BSPDE for SDGs :eq} can be re-expressed as
\begin{equation*}
\left\{ \begin{aligned}
& 0 = \mathcal{D}_{2}^{ \hat{\pi } ( t,x,z ), \hat{\gamma } ( t,x,z ) } \mathcal{V}_{x} ( t,x,z ) + \mathcal{V}_{x} ( t,x,z ) {\vartheta }_{t} + {\Phi }_{x} ( t,x,z ), \\
& 0 = \mathcal{D}_{2}^{ \hat{\pi } ( t,x,z ), \hat{\gamma } ( t,x,z ) } \mathcal{V}_{z} ( t,x,z ) + {\Phi }_{z} ( t,x,z ),
\end{aligned} \right.
\end{equation*}
which will be used to rearrange and simplify the $d {W}_{t}$ terms in applying It\^o-Kunita-Ventzel formula.
In addition, readers can treat the quadratic form of \cref{eq: MV value random field :eq} as an Ansatz, so that the smoothness requirements in the following derivation are satisfied.
By envelope theorem or straightforward calculation, 
one can differentiate the both sides of \cref{eq: HJBI-BSPDE for SDGs :eq} with respect to $x$ to attain the semi-martingale decomposition:
\begin{align*}
  - d \mathcal{V}_{x} ( t,x,z )
& = \big( \mathcal{V}_{x} ( t,x,z ) {r}_{t} + \mathcal{D}_{1}^{ \hat{\pi } ( t,x,z ), \hat{\gamma } ( t,x,z ) } \mathcal{V}_{x} ( t,x,z ) + \mathcal{D}_{2}^{ \hat{\pi } ( t,x,z ), \hat{\gamma } ( t,x,z ) } {\Phi }_{x} ( t,x,z ) \big) dt \\
& \quad - {\Phi }_{x} ( t,x,z ) d {W}_{t}.
\end{align*}
Then, applying It\^o-Kunita-Ventzel formula to $\mathcal{V}_{x} ( s, {X}^{ t,x, {\pi }^{*} }_{s}, {Z}^{ t,z, {\gamma }^{*} }_{s} )$ yields
\begin{equation*}
  d \mathcal{V}_{x} ( s, {X}^{ t,x, {\pi }^{*} }_{s}, {Z}^{ t,z, {\gamma }^{*} }_{s} )
= - \mathcal{V}_{x} ( s, {X}^{ t,x, {\pi }^{*} }_{s}, {Z}^{ t,z, {\gamma }^{*} }_{s} ) ( {r}_{s} ds + {\vartheta }_{s} d {W}_{s} ).
\end{equation*}
Given \cref{eq: terminal condition for SDGs :eq}, one obtains 
$\mathcal{V}_{x} ( t,x,z ) \frac{ {D}_{T} }{ {D}_{t} } = \mathcal{V}_{x} ( T, {X}^{ t,x, {\pi }^{*} }_{T}, {Z}^{ t,z, {\gamma }^{*} }_{T} ) = \zeta + \kappa {Z}^{ t,z, {\gamma }^{*} }_{T}$, and hence
\begin{equation*}
\mathcal{V}_{x} ( t,x,z ) = ( \mathbb{E}_{t} [ \zeta ] + \kappa z ) \frac{ {D}_{t} }{ \mathbb{E}_{t} [ {D}_{T} ] },
\end{equation*}
which leads to $\mathcal{V}_{xx} ( t,x,z ) = 0$, $\mathcal{V}_{xz} ( t,x,z ) = \frac{ \kappa {D}_{t} }{ \mathbb{E}_{t} [ {D}_{T} ] }$,
\begin{equation*}
{\Phi }_{x} ( t,x,z ) = \frac{ {D}_{t} }{ \mathbb{E}_{t} [ {D}_{T} ] } {\eta }_{t} 
                      + ( \mathbb{E}_{t} [ \zeta ] + \kappa z ) \frac{ {D}_{t} }{ \mathbb{E}_{t} [ {D}_{T} ] } {\xi }^{(1)}_{t} 
\end{equation*}
and
\begin{equation*}
\zeta + \kappa {Z}^{ t,z, {\gamma }^{*} }_{T} = ( \mathbb{E}_{t} [ \zeta ] + \kappa z ) \frac{ {D}_{T} }{ \mathbb{E}_{t} [ {D}_{T} ] }.
\end{equation*}
In the same manner, one obtains
\begin{equation*}
\left\{ \begin{aligned}
  d \mathcal{V}_{z} ( s, {X}^{ t,x, {\pi }^{*} }_{s}, {Z}^{ t,z, {\gamma }^{*} }_{s} )
& = \big( {\Phi }_{z} ( s, {X}^{ t,x, {\pi }^{*} }_{s}, {Z}^{ t,z, {\gamma }^{*} }_{s} )
        + \mathcal{D}_{2}^{ {\pi }^{*}_{s}, {\gamma }^{*}_{s} } {V}_{z} ( s, {X}^{ t,x, {\pi }^{*} }_{s}, {Z}^{ t,z, {\gamma }^{*} }_{s} ) \big) d {W}_{s} = 0, \\ 
    \mathcal{V}_{z} ( T, {X}^{ t,x, {\pi }^{*} }_{T}, {Z}^{ t,z, {\gamma }^{*} }_{T} )
& = \frac{\kappa }{\theta } \mathcal{V}_{x} ( T, {X}^{ t,x, {\pi }^{*} }_{T}, {Z}^{ t,z, {\gamma }^{*} }_{T} ) + \kappa {X}^{ t,x, {\pi }^{*} }_{T}
  = \mathcal{V}_{x} ( t,x,z ) \frac{ \kappa {D}_{T} }{ \theta {D}_{t} } + \kappa {X}^{ t,x, {\pi }^{*} }_{T},
\end{aligned} \right.
\end{equation*}
which results in
\begin{align*}
\mathcal{V}_{z} ( t,x,z ) & = \mathcal{V}_{x} ( t,x,z ) \frac{ \kappa \mathbb{E}_{t} [ | {D}_{T} |^{2} ] }{ \theta {D}_{t} \mathbb{E}_{t} [ {D}_{T} ] } + \kappa x \frac{ {D}_{t} }{ \mathbb{E}_{t} [ {D}_{T} ] }
                            = ( \mathbb{E}_{t} [ \zeta ] + \kappa z ) \frac{ \kappa \mathbb{E}_{t} [ | {D}_{T} |^{2} ] }{ \theta | \mathbb{E}_{t} [ {D}_{T} ] |^{2} } + \kappa x \frac{ {D}_{t} }{ \mathbb{E}_{t} [ {D}_{T} ] }, \\
\mathcal{V}_{zz} ( t,x,z ) & = \frac{ {\kappa }^{2} \mathbb{E}_{t} [ | {D}_{T} |^{2} ] }{ \theta | \mathbb{E}_{t} [ {D}_{T} ] |^{2} }, \\
{\Phi }_{z} ( t,x,z ) & = \frac{ \kappa \mathbb{E}_{t} [ | {D}_{T} |^{2} ] }{ \theta | \mathbb{E}_{t} [ {D}_{T} ] |^{2} } {\eta }_{t} 
                        + ( \mathbb{E}_{t} [ \zeta ] + \kappa z ) \frac{ \kappa \mathbb{E}_{t} [ | {D}_{T} |^{2} ] }{ \theta | \mathbb{E}_{t} [ {D}_{T} ] |^{2} } ( {\xi }^{(2)}_{t} + 2 {\xi }^{(1)}_{t} )
                        + \kappa x \frac{ {D}_{t} }{ \mathbb{E}_{t} [ {D}_{T} ] } {\xi }^{(1)}_{t},
\end{align*}
and
\begin{equation*}
  {X}^{ t,x, {\pi }^{*} }_{T} 
= \frac{1}{\kappa } \mathcal{V}_{z} ( t,x,z ) - \mathcal{V}_{x} ( t,x,z ) \frac{ {D}_{T} }{ \theta {D}_{t} }
= x \frac{ {D}_{t} }{ \mathbb{E}_{t} [ {D}_{T} ] }
+ ( \mathbb{E}_{t} [ \zeta ] + \kappa z ) \frac{ \mathbb{E}_{t} [ | {D}_{T} |^{2} ] - {D}_{T} \mathbb{E}_{t} [ {D}_{T} ] }{ \theta | \mathbb{E}_{t} [ {D}_{T} ] |^{2} }.
\end{equation*}
On the one hand, plugging the above partial derivatives back into \cref{eq: optimality condition of HJBI-BSPDE for SDGs :eq} immediately yields \cref{eq: optimal MV portfolio :eq}.
On the other hand, as the pair $( {X}^{ t,x, {\pi }^{*} }_{T}, {Z}^{ t,z, {\gamma }^{*} }_{T} )$ has been derived, \cref{eq: optimal MV state pair :eq} arises from
\begin{equation*}
{X}^{ t,x, {\pi }^{*} }_{s} = \frac{1}{ {D}_{s} } \mathbb{E}_{s} [ {X}^{ t,x, {\pi }^{*} }_{T} {D}_{T} ], \quad {Z}^{ t,z, {\gamma }^{*} }_{s} = \mathbb{E}_{s} [ {Z}^{ t,z, {\gamma }^{*} }_{T} ].
\end{equation*}
Moreover, the value random field \cref{eq: MV value random field :eq} can be derived by straightforward calculation with 
\begin{equation*}
\mathcal{V} ( t,x,z ) = {J}^{ {\pi }^{*}, {\gamma }^{*} } ( t,x,z )
= \mathbb{E}_{t} \Big[ {X}^{ t,x, {\pi }^{*} }_{T} ( \zeta + \kappa {Z}^{ t,z, {\gamma }^{*} }_{T} )
                     + \frac{1}{ 2 \theta } | \zeta + \kappa {Z}^{ t,z, {\gamma }^{*} }_{T} |^{2} - \frac{1}{ 2 \theta } | \zeta |^{2} \Big].
\end{equation*}

Furthermore, ${Z}^{ t,z, {\gamma }^{*} }_{s} \ge \frac{1}{\kappa } \mathbb{E}_{s} [ \frac{ {D}_{T} }{ \mathbb{E} [ {D}_{T} ] } - \zeta ]$ 
follows from \cref{eq: optimal MV state pair :eq} with $z \ge \frac{1}{\kappa } \mathbb{E}_{t} [ \frac{ {D}_{T} }{ \mathbb{E} [ {D}_{T} ] } - \zeta ]$ 
(including the case with $( t,z ) = ( 0,1 )$, as $1 = \frac{1}{\kappa } \mathbb{E} [ \frac{ {D}_{T} }{ \mathbb{E} [ {D}_{T} ] } - \zeta ]$).
If \Cref{ass: SMMV-MV comparison} holds, then ${\gamma }^{*} \in {\Gamma }^{t,z}$, and hence $( {\pi }^{*}, {\gamma }^{*} )$ is also a saddle point 
for \cref{eq: control problem primal :eq}, \cref{eq: saddle point definition :eq} and \cref{eq: control problem for DP :eq} with $z \ge \frac{1}{\kappa } \mathbb{E}_{t} [ \frac{ {D}_{T} }{ \mathbb{E} [ {D}_{T} ] } - \zeta ]$.

\subsection{Proof of \texorpdfstring{\Cref{thm: duality characterization}}{Theorem 4.7}}
\label{pf-thm: duality characterization}

The existence and uniqueness of $( \bar{y}, \bar{h} ) \in \mathbb{R}^{2}$ fulfilling \cref{eq: y-h system :eq} can be seen from the following fact: 
\begin{itemize}
\item the implicit function $y(h)$ determined by $\mathbb{E} [ {D}_{T} \mathcal{X} ( y {D}_{T}, h ) ] = {x}_{0}$ is strictly increasing with $y ( + \infty ) = + \infty $ and $| y ( - \infty ) | < + \infty $,
\item and the implicit function $y(h)$ determined by $\mathbb{E} [ \mathcal{Z} ( y {D}_{T}, h ) ] = 1$ is strictly decreasing with $y ( + \infty ) = - \infty $ and $y ( - \infty ) = + \infty $.
\end{itemize}
Given that $( \bar{y}, \bar{h} ) \in \mathbb{R}^{2}$ satisfies \cref{eq: y-h system :eq}, 
one can find the hedging strategies $\bar{\pi }, \bar{\gamma } \in \mathbb{L}^{2}_{\mathbb{F}} ( 0,T; \mathbb{L}^{2} ( \Omega ) )$ such that 
${X}^{ 0, {x}_{0}, \bar{\pi } }_{T} = \mathcal{X} ( \bar{y} {D}_{T}, \bar{h} )$ and ${Z}^{ 0,1, \bar{\gamma } }_{T} = \mathcal{Z} ( \bar{y} {D}_{T}, \bar{h} )$, i.e. \cref{eq: minimax equality conditions :eq} holds,  
according to the martingale representation theorem or the classical martingale method.
Moreover, $\bar{\gamma } \in {\Gamma }^{0,1}$ follows from $\mathcal{Z} \ge 0$. 
Notably, since \cref{eq: y-h system :eq} can also be derived from \cref{eq: minimax equality conditions :eq},
one can regard \cref{eq: y-h system :eq} as an equivalent characterization of \cref{eq: minimax equality conditions :eq}.
Furthermore, since $\tilde{U}$ is convex in $y$, concave in $h$ and continuously differentiable with $\tilde{U}_{y} = - \mathcal{X}$ and $\tilde{U}_{h} = - \mathcal{Z}$,
\cref{eq: y-h system :eq} is also the first-order derivative condition for \cref{eq: minimax equality in duality :eq}.

On the other hand, \cref{eq: minimax inequalities :eq} is straightforward due to \cref{eq: saddle point in duality :eq} with $\mathbb{E} [ {D}_{T} {X}^{ 0, {x}_{0}, \pi }_{T} ] = {x}_{0}$ and $\mathbb{E} [ {Z}^{ 0, 1, \gamma }_{T} ] = 1$.
When \cref{eq: minimax equality conditions :eq} holds, in conjunction with \cref{eq: y-h system :eq},
one obtains
\begin{align*} 
&   \mathbb{E} [ U ( {X}^{ 0, {x}_{0}, \bar{\pi } }_{T}, {Z}^{ 0,1, \bar{\gamma } }_{T} ) ] 
  = \mathbb{E} [ \tilde{U} ( \bar{y} {D}_{T}, \bar{h} ) ] + \bar{y} {x}_{0} + \bar{h}, \\
and \quad    
&   \mathbb{E} [ U ( {X}^{ 0, {x}_{0}, \pi }_{T}, {Z}^{ 0,1, \bar{\gamma } }_{T} ) ]
\le \mathbb{E} [ \tilde{U} ( \bar{y} {D}_{T}, \bar{h} \bar{I}_{T} ) ] + \bar{y} {x}_{0} + \bar{h} 
\le \mathbb{E} [ U ( {X}^{ 0, {x}_{0}, \bar{\pi } }_{T}, {Z}^{ 0,1, \gamma }_{T} ) ].
\end{align*}
In view of the arbitrariness of $( \pi, \gamma )$, the pair $( \bar{\pi }, \bar{\gamma } )$ is the saddle point for \cref{eq: approximate control problem :eq}.
In addition, by plugging $( y,h ) = ( \bar{y} {D}_{T}, \bar{h} )$ and $( x,z ) = ( \mathcal{X} ( y {D}_{T}, h ), \mathcal{Z} ( y {D}_{T}, h ) )$ 
(where the last $( y,h )$ in $( \mathcal{X}, \mathcal{Z} )$ is arbitrarily fixed) back into \cref{eq: saddle point in duality :eq} in sequence
and $( \pi, \gamma ) = ( \bar{\pi }, \bar{\gamma } )$ back into \cref{eq: minimax inequalities :eq}, together with \cref{eq: minimax equality conditions :eq} and \cref{eq: y-h system :eq}, it yields
\begin{align*}
      \mathbb{E} [ \tilde{U} ( \bar{y} {D}_{T}, \bar{h} ) ] + \bar{y} {x}_{0} + h
& \le \mathbb{E} \big[ U \big( \mathcal{X} ( \bar{y} {D}_{T}, \bar{h} ), \mathcal{Z} ( y {D}_{T}, h ) \big) \big] - \mathbb{E} [ h \mathcal{Z} ( y {D}_{T}, h ) ] + h \\
& \le \mathbb{E} [ \tilde{U} ( y {D}_{T}, h ) ] + y {x}_{0} + h \\
& \le \mathbb{E} \big[ U \big( \mathcal{X} ( y {D}_{T}, h ), \mathcal{Z} ( \bar{y} {D}_{T}, \bar{h} ) \big) \big] - \mathbb{E} [ y {D}_{T} \mathcal{X} ( y {D}_{T}, h ) ] + y {x}_{0} \\
& \le \mathbb{E} [ \tilde{U} ( \bar{y} {D}_{T}, \bar{h} ) ] + y {x}_{0} + \bar{h}.
\end{align*}
As $( y,h ) \in \mathbb{R}^{2}$ is arbitrarily fixed, one obtains
\begin{equation*}
    \mathbb{E} [ \tilde{U} ( \bar{y} {D}_{T}, h ) ] + \bar{y} {x}_{0} + h
\le \mathbb{E} [ \tilde{U} ( \bar{y} {D}_{T}, \bar{h} ) ] + \bar{y} {x}_{0} + \bar{h} 
\le \mathbb{E} [ \tilde{U} ( y {D}_{T}, \bar{h} ) ] + y {x}_{0} + \bar{h},
\end{equation*}
implying that $( \bar{y}, \bar{h} )$ is the saddle point for \cref{eq: minimax equality in duality :eq}.

\subsection{Proof of \texorpdfstring{\Cref{thm: embedding method}}{Theorem 4.8}}
\label{pf-thm: embedding method}

Assume by contradiction that ${\gamma }^{**} \notin \bar{\Gamma }^{0,1} ( {w}^{**} )$. 
Thus, there exists some $\gamma \in {\Gamma }^{0,1}$ such that
\begin{align*}
&   \frac{ \theta + \rho }{ 2 \theta } \big( \mathbb{E} [ ( \kappa {Z}^{ 0,1, \gamma }_{T} + \zeta )^{2} ] - \mathbb{E} [ ( \kappa {Z}^{ 0,1, {\gamma }^{**} }_{T} + \zeta )^{2} ] \big)
  + \rho \big( \mathbb{E} [ c ( \kappa {Z}^{ 0,1, \gamma }_{T} + \zeta ) ] - \mathbb{E} [ c ( \kappa {Z}^{ 0,1, {\gamma }^{**} }_{T} + \zeta ) ] \big) \\
& < {w}^{**} \big( \mathbb{E} [ {D}_{T} ( \kappa {Z}^{ 0,1, \gamma }_{T} + \zeta ) ] - \mathbb{E} [ {D}_{T} ( \kappa {Z}^{ 0,1, {\gamma }^{**} }_{T} + \zeta ) ] \big).
\end{align*}
As a consequence, for the jointly concave auxiliary function 
\begin{equation*}
F ( x,y,z ) = \frac{ \theta + \rho }{ 2 \theta } x + \rho y
            - \frac{1}{ 2 \mathbb{E} [ | {D}_{T} |^{2} ] } {z}^{2} + \frac{\rho }{ \mathbb{E} [ | {D}_{T} |^{2} ] } ( {x}_{0} - \mathbb{E} [ {D}_{T} c ] ) z,
\end{equation*}
one obtains
\begin{align*}
& \frac{ \theta + \rho }{ 2 \theta } \mathbb{E} [ ( \kappa {Z}^{ 0,1, \gamma }_{T} + \zeta )^{2} ] 
+ \rho \mathbb{E} [ c ( \kappa {Z}^{ 0,1, \gamma }_{T} + \zeta ) ] 
- {F}_{ \rho, 1 } ( \gamma ) \mathbb{E} [ {D}_{T} ( \kappa {Z}^{ 0,1, \gamma }_{T} + \zeta ) ] \\
&   = F \big( \mathbb{E} [ ( \kappa {Z}^{ 0,1, \gamma }_{T} + \zeta )^{2} ], \mathbb{E} [ c ( \kappa {Z}^{ 0,1, \gamma }_{T} + \zeta ) ], \mathbb{E} [ {D}_{T} ( \kappa {Z}^{ 0,1, \gamma }_{T} + \zeta ) ] \big) \\
& \le F \big( \mathbb{E} [ ( \kappa {Z}^{ 0,1, {\gamma }^{**} }_{T} + \zeta )^{2} ], \mathbb{E} [ c ( \kappa {Z}^{ 0,1, {\gamma }^{**} }_{T} + \zeta ) ], \mathbb{E} [ {D}_{T} ( \kappa {Z}^{ 0,1, {\gamma }^{**} }_{T} + \zeta ) ] \big) \\
& \quad + \frac{ \theta + \rho }{ 2 \theta } \big( \mathbb{E} [ ( \kappa {Z}^{ 0,1, \gamma }_{T} + \zeta )^{2} ] - \mathbb{E} [ ( \kappa {Z}^{ 0,1, {\gamma }^{**} }_{T} + \zeta )^{2} ] \big) \\
& \quad + \rho \big( \mathbb{E} [ c ( \kappa {Z}^{ 0,1, \gamma }_{T} + \zeta ) ] - \mathbb{E} [ c ( \kappa {Z}^{ 0,1, {\gamma }^{**} }_{T} + \zeta ) ] \big) \\
& \quad - {w}^{**} \big( \mathbb{E} [ {D}_{T} ( \kappa {Z}^{ 0,1, \gamma }_{T} + \zeta ) ] - \mathbb{E} [ {D}_{T} ( \kappa {Z}^{ 0,1, {\gamma }^{**} }_{T} + \zeta ) ] \big) \\
&   < F \big( \mathbb{E} [ ( \kappa {Z}^{ 0,1, {\gamma }^{**} }_{T} + \zeta )^{2} ], \mathbb{E} [ c ( \kappa {Z}^{ 0,1, {\gamma }^{**} }_{T} + \zeta ) ], \mathbb{E} [ {D}_{T} ( \kappa {Z}^{ 0,1, {\gamma }^{**} }_{T} + \zeta ) ] \big),
\end{align*}
which contradicts the minimality of ${\gamma }^{**}$. Hence, ${\gamma }^{**} \in \bar{\Gamma }^{0,1} ( {w}^{**} )$.

\subsection{Proof of \texorpdfstring{\Cref{lem: uniqueness of w}}{Proposition 4.9}}
\label{pf-lem: uniqueness of w}

By rearrangement, the equation $w = {F}_{ \rho, 2 } ( {\gamma }^{\dag } (w) )$ can be re-expressed as
\begin{equation*}
  \mathbb{E} [ {D}_{T} \zeta ] - \rho ( {x}_{0} - \mathbb{E} [ {D}_{T} c ] )
= \frac{\theta }{ \theta + \rho } \mathbb{E} \bigg[ {D}_{T} \bigg( \frac{\rho }{\theta } w {D}_{T} + w {D}_{T} \wedge \Big( \frac{ \theta + \rho }{\theta } \zeta + \rho c - \frac{h}{\kappa } \Big) \bigg) \bigg],
\end{equation*}
with $h = {h}^{\dag } (w)$.
Indeed, the above statement determines an implicit function $w = {w}^{\dag } (h)$.
Consequently, it suffices to show the existence and uniqueness of the solution to the fully coupled equations $h = {h}^{\dag } (w)$ and $w = {w}^{\dag } (h)$.
Since ${h}^{\dag }$ is continuous and strictly decreasing, while ${w}^{\dag }$ is continuous and non-decreasing, the solution must be unique. 
Since 
\begin{align*}
0 & \le \limsup_{ h \to - \infty } \mathbb{E} \bigg[ {D}_{T} \bigg( w {D}_{T} - w {D}_{T} \wedge \Big( \frac{ \theta + \rho }{\theta } \zeta + \rho c - \frac{h}{\kappa } \Big) \bigg) \bigg] \\
  & \le \mathbb{E} \bigg[ \limsup_{ h \to - \infty } {D}_{T} \bigg( w {D}_{T} - w {D}_{T} \wedge \Big( \frac{ \theta + \rho }{\theta } \zeta + \rho c - \frac{h}{\kappa } \Big) \bigg) \bigg]
      = 0,
\end{align*}
one obtains
\begin{equation*}
{w}^{\dag } ( - \infty ) := \lim_{ h \to - \infty } {w}^{\dag } (h) 
                          = \frac{ \mathbb{E} [ {D}_{T} \zeta ] - \rho ( {x}_{0} - \mathbb{E} [ {D}_{T} c ] ) }{ \mathbb{E} [ | {D}_{T} |^{2} ] }.
\end{equation*}
In the same manner, one obtains $\lim_{ h \to + \infty } {w}^{\dag } (h) = + \infty $.
In conjunction with ${h}^{\dag } ( {w}^{\dag } ( - \infty ) ) > - \infty $ and $\lim_{ w \to + \infty } {h}^{\dag } (w) = - \infty $,
we conclude that the fully coupled equations $h = {h}^{\dag } (w)$ and $w = {w}^{\dag } (h)$ admits a unique solution $( {h}^{\S }, {w}^{\S } )$ with ${w}^{\S } \ge {w}^{\dag } ( - \infty )$.

\bibliographystyle{apacite}
\bibliography{SMMV-references}

\end{document}